\DeclareMathOperator*{\argmin}{arg\,min}
\DeclareMathOperator*{\argmax}{arg\,max}
\begin{document}
\title{Fair Division of Chores with Budget Constraints}
%
%
\author{Edith Elkind\inst{1,2} \and
Ayumi Igarashi\inst{3} \and
Nicholas Teh\inst{1}}
\authorrunning{E. Elkind et al.}
%
\institute{University of Oxford, UK \and
Alan Turing Institute, UK \and
University of Tokyo, Japan}
\maketitle              
\begin{abstract}
We study fair allocation of indivisible chores to agents under budget constraints, where each chore has an objective size and disutility. This model captures scenarios where a set of chores need to be divided among agents with limited time, and each chore has a specific time needed for completion. We propose a budget-constrained model for allocating  indivisible chores, and systematically explore the differences between goods and chores in this setting.
We establish the existence of an EFX allocation. We then show that EF2 allocations are polynomial-time computable in general; for many restricted settings, we strengthen this result to EF1.
For divisible chores, we develop an efficient algorithm for computing an EF allocation.

\keywords{Fair Allocation  \and Chores \and Budget Constraints.}
\end{abstract}
\section{Introduction}
Alice, Bob, and their teenage children Claire and Dan wish to fairly divide a set of household chores. Each chore requires a certain amount of time to complete; for simplicity, assume that this amount, as well as the disutility of the chore,  does not depend on who performs the chore (this is approximately true for many chores). Alice works long shifts, so she only has 5 hours a week to dedicate to chores. Bob has a more conventional schedule, so he can spend 10 hours on chores. Claire and Dan have many extracurricular activities, so they can contribute 7 and 4 hours, respectively. 
How can they divide the chores in a way that is fair and respects time constraints?

In a similar vein, consider a company that needs to allocate several time-consuming tasks to a group of employees, in addition to their regular workload.
As employees may have different existing workloads, the amounts of extra work they would be able to take on differ as well.

In both of our examples, it is not immediately clear what it means to be fair, given agents' different time budgets. Thus, we need to adapt the notions of fairness that have been developed in the fair division literature to our setting, and then determine under what conditions fair allocations exist and whether they can be computed in polynomial time. While the budgeted setting has been considered for goods (see Section~\ref{sec:related} for a discussion of related work), extending ideas from prior work to chores poses new challenges.

\subsection{Our Contributions}
We introduce a framework for allocating chores with objective (i.e., agent-inde\-pen\-dent) sizes and disutilities under budget constraints. 
In Section 2, we set up our formal model, put forward notions of fairness that are appropriate for this setting, and discuss the challenges that arise when adapting the budget-constrained model from goods to chores.

In Section 3, we show the existence of an EFX allocation for indivisible chores. 
Perhaps surprisingly, an adaptation of the EFX algorithm for goods with objective sizes and utilities in the budgeted setting also works for our scenario; we note that this is not universally true for general restricted instances where EFX is known to exist for goods. 
This is particularly interesting because EFX for chores is incomparable to EFX for goods, in the sense that the special cases where EFX allocations are known to exist are quite different in these two settings. Moreover, 
techniques for proving EFX in the goods setting are also known to be very different beyond two agents or identical valuations (unlike for EF1).

In Section 4, we provide a polynomial-time algorithm for computing EF2 allocations. 
Section 5 then looks at five special cases---when chores are identically-valued, identically-sized, identically-dense, agents have identical budgets, and the case of two agents---for which we can compute EF1 allocations efficiently. 
Most of the results in the above two sections rely on a greedy ``densest-item-first'' algorithm. 
In the goods case, the greedy algorithm that achieves similar guarantees is (surprisingly) also a densest-item first greedy algorithm, even though, intuitively, high density is desirable for  goods and undesirable for chores. This suggests that the symmetry between chores and goods sometimes presents itself in unexpected ways. 

Section 6 focuses on divisible chores, and establishes the existence of EF allocations.




\subsection{Related Work}\label{sec:related}
The mathematical framework for fair division has been put forward by Steinhaus~\cite{Steinhaus48} over 70 years ago, and this field has seen an explosion
of interest in recent years (see, e.g., a survey by Amanatidis et al.~\cite{amanatidis2023survey}). Historically, 
most works in the field focused on the allocation of \emph{goods}, i.e., items that  are valued non-negatively by all agents.
While some of the results for fair allocation of goods extend easily to chores, there are many real-world applications for which this is not the case. This observation led to a recent line of work that considers the allocation of \emph{chores}, i.e., items which agents value negatively; see, e.g.,~\cite{AzizCaIg22chores,bogomolnaia2017competitive,dehghani2018envy} as well as the recent survey by Aziz et al.~\cite{aziz2022survey}.
Indeed, allocating chores is generally known to be more difficult than allocating goods, with more open problems in chores than goods, and many techniques that work for goods, but do not directly translate to chores; a notable example here is maximization of Nash social welfare.
Also, a number of authors have considered the problem of allocating goods to agents under budget constraints~\cite{BarmanKhShSrAAAI2023,GanLiWu2023,GarbeaGkTa2023,GanLiWuIJCAI2021}, as well as rent division with budget constraints~\cite{airiau2023rentbudget,procaccia2018rentbudget}.
However, 
to the best of our knowledge, we are the first to explore the intersection of these two lines of work, i.e., chore allocation under budget constraints.

Additional motivation for our analysis is provided by the recent work of Igarashi and Yokoyama~\cite{igarashi2023choreapp}, who have developed an application that helps couples to fairly divide household chores. While their tool captures many aspects of the task, it does not allow the household members to specify budget constraints, and this reduces the usability of the tool. We believe that incorporating such constraints will help many households to come up with a better way of sharing the workload.

We will now discuss prior work on the allocation of goods under budget constraints in more detail.
Gan et al.~\cite{GanLiWu2023,GanLiWuIJCAI2021} assumed identical valuations and size functions, and studied approximation ratios (with respect to EF1) for the maximum Nash welfare rule in budget-constrained scenarios. In particular, they proposed an approximation algorithm for achieving 1/2-EF1, along with special cases where they could guarantee EF1.
Barman et al.~\cite{BarmanKhShSrAAAI2023} considered the same model and proposed an algorithm that satisfies EF2 in general, and EF1 in special cases.

Garbea et al.~\cite{GarbeaGkTa2023} were the first to consider the budget-constrained model with subjective valuation functions (but still identical size functions).
However, their results are limited to two- and three-agent cases. 
They design algorithms that guarantee EFX, while achieving approximation of Nash welfare for these special cases.

Barman et al.~\cite{BarmanKhShSr2023} studied a more general model of fairly allocating goods under \emph{generalized assignment constraints}, extending the traditional budget-con\-strained model to one where the sizes and values of the goods can be subjective.
They showed the existence (via a pseudopolynomial time algorithm) of EFX allocations for indivisible goods case.

\section{Preliminaries}
\label{sec:preliminaries}
For each positive integer $z$, let $[z] := \{1,\dots,z\}$.
Let $N = [n]$ be a set of $n$ \emph{agents} and $C = \{c_1,\dots,c_m\}$ be a set of $m$ 
\emph{chores}.
 Each chore $c \in C$ has an objective \emph{size} $s(c) \in \mathbb{R}_{> 0}$ and \emph{disutility} $d(c) \in \mathbb{R}_{\geq 0}$; we write $\rho(c) = \frac{d(c)}{s(c)}$ to denote the \emph{density} of the chore $c$.
Each agent has a \emph{budget} $B_i \in \mathbb{R}_{> 0}$; let $\mathbf{B} = (B_1,\dots,B_n)$ be the vector of agents' budgets. For our algorithmic results, we assume that all sizes, disutilities and budgets are rational numbers given in binary.

Unlike in the unconstrained fair allocation model, in our setting it may be impossible to divide all the chores among the agents in $N$: e.g., the sum of sizes may exceed the sum of the agents' budgets. As we cannot simply discard the chores, this necessitates the introduction of a {\em housekeeper}, whose role is similar to that of charity in the budget-constrained model for allocating goods.
It is assumed that the housekeeper is paid to complete the chores; this payment is exogenous to the model, and an external consideration.
Our fairness notions are formulated in such a way that as few chores as possible are allocated to the housekeeper; this is similar to how the allocation to charity is treated in a goods context. 

We set up the model for indivisible and divisible chores separately.

\subsection{Indivisible Chores}

A \emph{bundle} of chores is a subset of $C$.
We assume that sizes and disutilities of chores are additive, so that for each bundle $S \subseteq C$ its size $s(S)$ and disutility $d(S)$ are given by, respectively, $s(S) = \sum_{c \in S} s(c)$ and $d(S) = \sum_{c \in S} d(c)$. 
This assumption is standard across all works dealing with budget constraints, and is also common for fair division problems in general. 

An \emph{allocation} $\mathcal{A} = (A_1,\dots,A_{n+1})$ is a partition of $C$ into $n+1$ disjoint bundles of chores, where $A_i$ is assigned to agent $i$, and $A_{n+1}$ is the set of unallocated chores; we will refer to $A_{n+1}$ as the bundle allocated to the housekeeper. 
We say that an allocation $\mathcal{A}$ is \emph{feasible} if $s(A_i) \leq B_i$ for all $i \in [n]$.

Next, we define several notions of fairness for our setting.
Our definitions mirror the respective definitions for allocating goods under budget constraints.
\begin{definition}[Envy-freeness] \label{def:EF:chores}
    An allocation $\mathcal{A} = (A_1,\dots,A_n, \allowbreak A_{n+1})$ is said to be \emph{envy-free (EF)} if for all $i\in [n+1]$, $j \in [n]$ and for every subset $S \subseteq A_i$ with $s(S) \leq B_j$ it holds that $d(S) \leq d(A_j)$.
\end{definition}
Intuitively, an allocation is envy-free if for every agent $i\in [n]$ as well as for the housekeeper it holds that if they consider a subset $S$ of their bundle that could be allocated to an agent $j\in [n]$ (in the sense of having a size that does not exceed $B_j$), they find $S$ to be at most as unpleasant/objectionable as the actual bundle of~$j$. 

Even in the absence of budget constraints, it may be impossible to allocate indivisible items (chores or goods) in an envy-free manner.
Clearly, this negative result also applies to the setting with budget constraints.
This observation motivates us to adapt several relaxations of EF  to our setting. 
We first consider a popular, relatively strong relaxation of EF, which has been widely studied in the unconstrained setting.
\begin{definition}[Envy-freeness up to any chore] \label{def:EFX:chores}
    An allocation $\mathcal{A} = (A_1,\dots,\allowbreak A_n,A_{n+1})$ is said to be \emph{envy-free up to any chore (EFX)} if for all $i\in [n+1]$, $j \in [n]$, for every subset $S \subseteq A_i$ with $s(S) \leq B_j$, and for each $c \in S$ it holds that $d(S \setminus \{c\}) \leq d(A_j)$.
\end{definition}
Next, we consider another class of relaxations of EF.
\begin{definition}[Envy-freeness up to $k$ chores] \label{def:EF1:chores}
    Given a positive integer $k$, an allocation $\mathcal{A} = (A_1,\dots,\allowbreak A_n,A_{n+1})$ is said to be \emph{envy-free up to $k$ chores (EF$k$)} if for every $i\in [n+1]$, $j \in [n]$, and for every subset $S \subseteq A_i$ with $s(S) \leq B_j$ there exists a subset $S' \subseteq S$ with $|S'| = k$ such that $d(S \setminus S') \leq d(A_j)$.
\end{definition}
The most commonly studied property is EF1 (i.e., EF$k$ with $k=1$).
However, following the analysis of Barman et al.~\cite{BarmanKhShSr2023} in the budget-constrained goods setting, 
we will also consider EF2 (i.e., EF$k$ with $k=2$).


Note that in our definitions of (approximate) envy-freeness $i$ takes values in $[n+1]$ rather than $[n]$, i.e., we want the housekeeper to be (approximately) non-envious towards the agents. This ensures that \emph{sufficiently} many chores are allocated to agents: e.g., an allocation where all chores are allocated to the housekeeper is not envy-free unless no agent can execute any of the chores.

\subsection{Divisible Chores}
Next, we consider the model for divisible chores.
To the best of our knowledge, budget-constrained allocation of divisible goods was only considered by Barman et al.~\cite{BarmanKhShSr2023}, albeit under the more general \emph{generalized assignment constraints} (where size and disutility functions can be subjective).

A \emph{fractional bundle} of chores is an $m$-dimensional vector $X = (x_{1}, x_{2},\dots, x_{m})$ with $x_j\in [0, 1]$ for each $j\in [m]$; here, $x_j$ is the fraction of chore $c_j\in C$ that is placed in $X$.
The sizes and disutilities of fractional chores are assumed to be additive, i.e., we write $s(X) = \sum_{j \in [m]} x_{j} \cdot s(c_j)$ and $d(X) = \sum_{j \in [m]} x_{j} \cdot d(c_j)$. Again, this assumption is standard in the literature.

An {\em allocation} $\mathcal{X} = (X_1,\dots,X_{n+1})$ is a list of $n+1$ fractional bundles of chores 
with $X_i=(x_{i1}, \dots, x_{im})$ for each $i\in [n+1]$
that satisfies $\sum_{i \in [n+1]} x_{ij} = 1$. We refer to
$X_i$ as the fractional bundle of agent $i$; $X_{n+1}$ is the fractional bundle allocated to the housekeeper.
We say that an allocation is \emph{feasible} if for all $i \in [n]$ it holds that $s(X_i) \leq B_i$.

To formulate a notion of envy-freeness in this setting, we first define a vector comparison operator that enables us to compare fractional bundles of chores. 
Specifically, for any pair of vectors $W = (w_1,\dots,w_m), Y = (y_1,\dots,y_m) \in [0,1]^m$, we write $W \leq Y$ if and only if $w_j \leq y_j$ for all $j \in [m]$.

Now, we define envy-freeness for divisible chores under budget constraints.
Again, this definition mirrors the one for the budget-constrained goods setting.
\begin{definition}[Envy-freeness for divisible chores] \label{def:efdiv}
    An allocation $\mathcal{X} = (X_1, \allowbreak\dots, X_{n+1}) \in [0,1]^{(n+1)\times m}$ of divisible chores is said to be \emph{envy-free (EF)} if for every $i \in [n+1], j\in [n]$ and for all fractional assignments $Y \leq X_i$ with $s(Y) \leq B_j$ it holds that $d(Y) \leq d(X_j)$.
\end{definition}
We will show that, unlike in the indivisible setting, 
divisible chores always admit 
an envy-free allocation (this is also the case for divisible goods).
Therefore, defining further relaxations is not necessary.

\subsubsection*{Modeling Assumptions: A Discussion}
Our formal model considers only objective (i.e., identical) size and disutility functions.
Of course, in practice different agents may assign different disutilities to the same chore: while Alice dislikes dusting more than doing dishes, 
Bob has the opposite preferences. It may also be the case that the size of the chore varies from one agent to another: while Alice can peel potatoes for dinner in 5 minutes, Bob will spend 8 minutes on the same task. However, we chose to leave modeling non--identical disutilities and sizes in the budgeted setting to future work. The reasons for this decision are as follows.

First, as noted by Barman et al.~\cite{BarmanKhShSrAAAI2023}, considering budget constraints even under identical valuation functions already constitutes a technically-rich model, due to the additional size (and budget) dimension.

Second, the more general formulation, where agents have subjective size functions (even under identical valuation functions), does not admit a polynomial-time approximation scheme for the value-maximization objective \cite{chekuri2005ptas}.

Third, we have argued that it is necessary to introduce the housekeeper agent, and there is no principled way to define the disutility function for the housekeeper if the agents' disutility functions are non-identical. However, to define (relaxations of) envy-freeness, we would have to reason about the housekeeper's disutility.

We note that most of the existing works studying the allocation of indivisible goods under budget constraints \cite{BarmanKhShSrAAAI2023,GanLiWu2023,GanLiWuIJCAI2021} assume that each good has an objective size, and that agents have an objective valuation function.
Models with identical valuations have also been widely studied in the setting of goods without budget constraints \cite{barman2020identical,mutzari2023resilient,plaut2018efx}.
An important exception is a recent paper by Barman et al.~\cite{BarmanKhShSr2023}, who extend the concepts for the allocation of goods to generalized assignment constraints (as opposed to budget constraints), where sizes are allowed to be agent-specific. They showed the existence (but not polynomial-time computability) of EFX allocations for indivisible goods. However, extending their definitions and results to the setting of chores with non-identical sizes is not straightforward.
 
\section{Existence of EFX Allocations for Indivisible Chores}
To begin, we consider the existence of EFX allocations for indivisible chores under budget constraints.

The existence of EFX allocations for more than three agents in the indivisible goods allocation setting is a longstanding open problem in fair division.
The setting of chores has been shown to be even more difficult (refer to the surveys of Aziz et al.~\cite{aziz2022survey} and Amanatidis et al.~\cite{amanatidis2023survey}).
Recently, Barman et al.~\cite{BarmanKhShSr2023} proved the existence of EFX allocations for indivisible goods under budget constraints, for the case of identical disutility functions.
Their algorithm is a close adaptation of the algorithm for finding EFX allocations in restricted settings \cite{chaudhury2021little}.
In this section, we extend this positive result to the case of chores.


We first introduce the concept of a \emph{manageable set}, which is similar in spirit to the concept of a \emph{minimal envied subset}; the latter is used to prove the existence of EFX allocations for indivisible goods in various restricted settings  \cite{BarmanKhShSr2023,chaudhury2021little,ghosal2023efxfour}.

\begin{definition}
    [Manageable set] \label{def:manageable}
    A set of chores $T \subseteq C$ is said to be a \emph{manageable set} for an allocation $\mathcal{A} = (A_1,\dots, \allowbreak A_n, A_{n+1})$ if 
    \begin{enumerate}[(i)]
        \item there exists an $i\in [n]$ such that $s(T)\le B_i$  and $d(T) > d(A_i)$, and
        \item 
        no strict subset of $T$ satisfies~(i), i.e., 
        for each strict subset $T' \subsetneq T$ and each $k \in [n]$, either $s(T') > B_{k}$ or $s(T') \leq B_{k}$ and $d(T') \leq d(A_{k})$.
    \end{enumerate}
\end{definition}

Then, consider the following algorithm, which repeatedly finds a manageable set within the housekeeper's bundle and allocates it to one of the agents in $N$ in a feasible way.
 The algorithm terminates when the housekeeper's bundle no longer contains a manageable set.

\begin{algorithm}
\caption{Computes an EFX allocation}
\label{alg:efx}
\SetInd{0.8em}{0.3em}
{\bf Input} disutility function $d$, size function $s$, budgets $\mathbf{B}$\;
Initialize the allocation $\mathcal{A} =(A_1,\ldots,A_n,A_{n+1})=(\emptyset,\ldots,\emptyset,C)$\;
\While{there exists a subset $S \subseteq A_{n+1}$ such that $s(S) \leq B_i$ and $d(S) > d(A_i)$ for some $i \in [n]$\label{line:alg:efx}} 
{
Select a manageable set $T \subseteq A_{n+1}$ and a $k \in [n]$ with $s(T)\le B_k$, $d(T) > d(A_k)$\;
Update bundles $A_k \leftarrow T$ and $A_{n+1} \leftarrow C \setminus (\cup_{i\in [n]} A_i)$\;
} 
\Return{allocation $\mathcal{A}$}\;
\end{algorithm}

We will now show that Algorithm~\ref{alg:efx} returns an EFX allocation, thereby establishing that an EFX allocation is guaranteed to exist.
\begin{theorem}\label{thm:efx}
    Algorithm \ref{alg:efx} returns an \emph{EFX} allocation.
\end{theorem}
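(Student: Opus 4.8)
The plan is to show two things about Algorithm~\ref{alg:efx}: that it terminates, and that the allocation it returns is EFX. I would build the argument around two invariants of the execution. For termination, I would use the potential $\Phi = \sum_{i \in [n]} d(A_i)$, the total disutility currently borne by the agents (ignoring the housekeeper). In each iteration the algorithm moves a manageable set $T$ out of the housekeeper's bundle to some agent $k$ with $d(T) > d(A_k)$, and returns $k$'s former bundle to the housekeeper; since $T \subseteq A_{n+1}$ is disjoint from the old $A_k$, the net change in $\Phi$ is $d(T) - d(A_k) > 0$. Hence $\Phi$ strictly increases every iteration. As there are finitely many allocations of a finite chore set, $\Phi$ takes values in a finite set (bounded above by $d(C)$), so a strictly increasing quantity can only change finitely often and the loop halts. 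I would also note in passing that the selection step is well-defined: whenever the loop guard holds with witness $S$, an inclusion-minimal subset of $S$ still satisfying condition~(i) is a manageable set contained in $A_{n+1}$.

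The same bookkeeping yields the key \emph{monotonicity invariant}: for each $j \in [n]$, the bundle $A_j$ changes only when $j$ is selected as the recipient $k$, and at that moment $d(A_j)$ strictly increases; therefore $d(A_j)$ is non-decreasing throughout the entire run.

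For EFX of the returned allocation I would separate the housekeeper's envy from the agents' mutual envy. For the housekeeper ($i = n+1$), the loop exits precisely when no $S \subseteq A_{n+1}$ has $s(S) \le B_j$ and $d(S) > d(A_j)$ for any $j \in [n]$; equivalently, every subset $S \subseteq A_{n+1}$ with $s(S) \le B_j$ satisfies $d(S) \le d(A_j)$. Since disutilities are non-negative, $d(S \setminus \{c\}) \le d(S) \le d(A_j)$, which is even stronger than the EFX requirement (the housekeeper is in fact fully EF). For a pair of agents $i, j \in [n]$ with $A_i \neq \emptyset$, write $A_j$ for the returned bundle and $A_j^{(t_i)}$ for $j$'s bundle at the iteration $t_i$ in which $i$ last received a bundle; by construction the returned $A_i$ equals the manageable set $T$ assigned at that step. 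Given any $S \subseteq A_i$ with $s(S) \le B_j$ and any $c \in S$, the set $T' := S \setminus \{c\}$ is a \emph{strict} subset of $A_i = T$ (it omits $c \in A_i$) and satisfies $s(T') \le s(S) \le B_j$. Condition~(ii) of manageability, applied to $T'$ and agent $j$ at time $t_i$, forces $d(T') \le d(A_j^{(t_i)})$. Combining with the monotonicity invariant gives $d(S \setminus \{c\}) = d(T') \le d(A_j^{(t_i)}) \le d(A_j)$, which is exactly the EFX condition.

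The step I would be most careful about is the agent--agent case, and it is where I expect the main obstacle to lie: condition~(ii) of a manageable set constrains the allocation only \emph{at the moment of assignment}, not the returned one, so the proof genuinely needs the monotonicity invariant to transport the inequality $d(T') \le d(A_j)$ forward in time. One must also check that $T' = S \setminus \{c\}$ is always a strict subset of $A_i$, so that condition~(ii)---which quantifies only over strict subsets---actually applies, and that $s(T') \le B_j$; both follow from $c \in A_i$ together with non-negativity and additivity of sizes.
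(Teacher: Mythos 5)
Your proposal is correct and follows essentially the same route as the paper's proof: termination via the strictly increasing total disutility of the agents, the housekeeper's (full) envy-freeness from the loop exit condition, condition~(ii) of manageability to bound envy using strict subsets of the assigned bundle, and monotonicity of each agent's disutility to handle envy toward an agent whose bundle was later replaced. The only difference is organizational --- you argue directly about the final allocation via the last-assignment time and the monotonicity invariant, whereas the paper packages the same ingredients as an induction maintaining EFX (and feasibility) after every iteration.
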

\begin{proof}
    First, we note that if the condition of the {\bf while} loop is satisfied, then $A_{n+1}$ contains a manageable set. Indeed, consider a minimum-size set $S$ that satisfies the condition in the {\bf while} loop for some $i\in [n]$.
    We have $s(S)\le B_i$, $d(S)> d(A_i)$, so $S$ satisfies condition (i) in the definition of a manageable set. Moreover, by our choice of $S$ no proper subset of $S$ satisfies (i), which means that $S$ satisfies condition (ii) as well.
    Thus, Algorithm~\ref{alg:efx} can indeed select a manageable set in line~4.
 
    Next, we observe that Algorithm \ref{alg:efx} necessarily terminates. Indeed, if at iteration $t$ we change the bundle of an agent $k\in [n]$ to $T$ in line~5, then $k$'s disutility increases (since before that step we had $d(T) > d(A_k)$) while the disutility of other agents in $[n]$ remains the same. Thus, the sum of disutilities of agents in $[n]$ goes up with each iteration.

    Further, once Algorithm \ref{alg:efx} terminates, the condition of the {\bf while} loop is no longer satisfied, which means that the housekeeper is not envious towards agents in $[n]$. It remains to argue that the EFX condition is satisfied for all other agents.

    Let $\mathcal{A}^{(t)} = (A_1^{(t)},\dots, A_{n+1}^{(t)})$ denote the allocation maintained by the algorithm just \emph{before} the $t$-th iteration  of the \textbf{while} loop (Line \ref{line:alg:efx}).
    We have $\mathcal{A}^{(1)} = (\emptyset,\dots,\emptyset, C)$ and $A_{n+1}^{(t)} = C \setminus (\cup_{i=1}^n A_i^{(t)})$ for all $t>0$.
    We will write $\mathcal{A}^{(t)}_n$ to denote the vector formed by the first $n$ entries of the vector $\mathcal{A}^{(t)}$ (i.e., excluding the bundle of the housekeeper).
    
    To show that the allocation $\mathcal{A}_n$ is EFX, we use induction.
    For the base case, note that in the first iteration, EFX trivially holds, as $\mathcal{A}^{(1)}_n = (\emptyset, \dots, \emptyset)$.
    This allocation is also feasible.
    Now, consider an iteration $t > 1$.
    For the inductive step, assume that the allocation $\mathcal{A}^{(t)}_n$ is feasible and satisfies EFX.
    In the $t$-th iteration, the algorithm changes the bundle of exactly one agent $k \in [n]$, by replacing it with $T$. The bundle $T$ satisfies $s(T)\le B_k$, so this update results in a feasible allocation.
    
    Since the bundles of all agents in $[n] \setminus \{k\}$ remain unchanged, the EFX condition is satisfied for each pair of agents not involving $k$.
    Thus, it remains to consider envy by/towards agent $k$ after $k$ has been allocated the bundle $T$.

    We first consider the envy experienced by agent $k$. Fix a subset $S\subseteq T$ and an agent $k'\in [n]\setminus\{k\}$ such that $s(S)\le B_{k'}$. For each $c\in S$ the set $S\setminus\{c\}$ is a proper subset of $T$.
    Since $T$ is a manageable set and $s(S\setminus\{c\})\le s(S)\le B_{k'}$, condition (ii) of Definition~\ref{def:manageable} implies that $d(S\setminus\{c\})\le d(A_{k'})$, i.e., 
    the envy by agent $k$ towards $k'$ can be eliminated by the removal of any single chore.

    Now, consider the envy of agent $k'\in [n]\setminus\{k\}$ towards agent $k$.
    By the induction hypothesis, before agent $k$ was allocated the bundle $T$, 
    the envy by $k'$ towards $k$ could be eliminated by removing a single chore.
    Moreover, $T$ has a higher disutility than the previous bundle of $k$. 
    Thus, for every subset $S \subseteq A_{k'}^{(t)} = A_{k'}^{(t+1)}$ with $s(S)\le B_k$ and every  $c\in S$ we have
    \begin{equation*}
        d(S\setminus\{c\}) \leq d(A_k^{(t)}) < d(A_k^{(t+1)}).
    \end{equation*}
    This implies that the envy by an agent $k' \in [n] \setminus \{k\}$ towards agent $k$ can be eliminated by the removal of any single chore.
    $\hfill \square$
\end{proof}
Theorem~\ref{thm:efx} establishes the existence of an EFX allocation for indivisible chores under budget constraints. While this result is constructive, the running time of Algorithm~\ref{alg:efx} is pseudopolynomial (to see this, note that checking the condition in Line~3 of the algorithm and finding a manageable set reduces to solving polynomially many instances of Knapsack) rather than polynomial.
The existence of a polynomial-time algorithm for computing an EFX allocation for chores remains an open problem (as with the setting of goods).


\section{Computing EF2 Allocations for Indivisible Chores}
Given that we do not know how to find EFX allocations in polynomial time, 
a natural follow-up direction is then to look for allocations that satisfy weaker relaxations of EF, but 
can be computed by algorithms that run in polynomial time. 
The most popular relaxation of EF after EFX would be EF1. 
However, even in the setting of allocating goods under budget constraints, the existence of a polynomial-time algorithm for computing EF1 allocations is still a challenging open question \cite{BarmanKhShSr2023,GanLiWu2023,GanLiWuIJCAI2021}.
Consequently, we shift our focus to a property that can be accomplished in polynomial time in the goods case, namely, EF2. Our next result shows that we can replicate this result for chores.

Specifically, it turns out that EF2 allocations can be found by the \textsc{DensestFirst} algorithm (Algorithm~\ref{alg:ef1_densestfirst}), which at each iteration picks an agent with minimum disutility and allocates to her the maximally-dense chore.
Interestingly, in the goods case a `densest-item first' greedy algorithm
also produces an EF2 allocation~\cite{barman2020identical}. This is 
surprising, because in the goods setting high-density items are particularly attractive, while in the chores setting high-density items are unattractive. Thus, while one may expect goods and chores to be symmetric, 
identifying the `correct' mapping from goods to chores is a non-trivial task.

\begin{algorithm}
\caption{\textsc{DensestFirst}}
\label{alg:ef1_densestfirst}
\SetInd{0.8em}{0.3em}
{\bf Input} disutility function $d$, size function $s$, budgets $\mathbf{B}$\;
Initialize the allocation $\mathcal{A} =(A_1,\ldots,A_n,A_{n+1})=(\emptyset,\ldots,\emptyset,C)$ and the set of live agents $L = [n]$\;
\While{$L \neq \emptyset$ and $A_{n+1} \neq \emptyset$} 
{
Let $i :=\min \argmin_{i \in L} d(A_i)$\; \label{alg:line:densestfirst:agent}
\If{for all $c \in A_{n+1}$ it holds that $s(A_{i^*} \cup \{c\}) > B_{i^*}$}{Remove $i^*$ from the set of live agents, i.e., $L \leftarrow L \setminus \{i^*\}$ \;}
\Else{
Choose a maximally-dense chore $c^* \in \argmax_{c \in C: s(A_{i^*}\cup \{c\}) \leq B_{i^*}} \rho(c)$, breaking ties in favor of smaller chores\; \label{alg:line:tie}
Update bundles $A_{i^*} \leftarrow A_{i^*} \cup \{c^*\}$ and $A_{n+1} \leftarrow A_{n+1} \setminus \{c^*\}$\;
}
} 
\Return{allocation $\mathcal{A}$}\;
\end{algorithm}

We first prove that the algorithm runs in polynomial-time.
\begin{theorem} \label{thm:densestfirst_runtime}
    Algorithm \ref{alg:ef1_densestfirst} runs in time $\mathcal{O}((n+m)^2)$.
\end{theorem}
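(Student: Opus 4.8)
The plan is to bound, separately, the number of iterations of the \textbf{while} loop and the amount of work performed in a single iteration.

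First I would argue that the loop executes at most $n+m$ times. Each pass through the loop body falls into exactly one of two mutually exclusive cases: either the condition in Line~5 holds, in which case one agent is deleted from $L$ (the ``removal'' case), or it fails, in which case exactly one chore is moved from $A_{n+1}$ into some agent's bundle (the ``transfer'' case). The set $L$ never grows and has size at most $n$, so there are at most $n$ removals; the set $A_{n+1}$ never grows and loses exactly one element per transfer, so there are at most $m$ transfers. Equivalently, the quantity $|L|+|A_{n+1}|$ starts at $n+m$ and strictly decreases with every iteration, hence the loop runs at most $n+m$ times.

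Next I would tally the cost of one iteration. Densities $\rho(c)$ are computed once at the start in time $\mathcal{O}(m)$, which is dominated by the final bound; each $d(A_i)$ is maintained incrementally, so updating it when a chore is added costs $\mathcal{O}(1)$. Within an iteration: Line~4 scans the live agents to find the minimum-disutility one, costing $\mathcal{O}(n)$; the test in Line~5 scans the chores still in $A_{n+1}$, costing $\mathcal{O}(m)$; Line~8 again scans $A_{n+1}$ to pick a maximally-dense feasible chore, breaking ties toward smaller size, costing $\mathcal{O}(m)$; and the bundle updates in Line~9 cost $\mathcal{O}(1)$ with pointer-based set representations (or $\mathcal{O}(m)$ with arrays, still within budget). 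Thus each iteration runs in $\mathcal{O}(n+m)$ time. Multiplying by the $\mathcal{O}(n+m)$ iteration bound gives the claimed $\mathcal{O}((n+m)^2)$. (We count arithmetic operations; since all sizes, disutilities, and budgets are rationals given in binary, each such operation is polynomial in the input length, so this also yields a polynomial bit-complexity bound.)

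The only step requiring a little care is the iteration count: one must verify that the two cases above really are exhaustive and mutually exclusive per pass, so that $|L|+|A_{n+1}|$ is a valid strictly decreasing potential; once that is pinned down, the rest is a routine per-line cost accounting and the theorem follows.
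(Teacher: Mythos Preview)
Your proposal is correct and follows essentially the same approach as the paper's proof: bound the number of iterations by $n+m$ via the observation that each pass either removes an agent from $L$ or a chore from $A_{n+1}$, then bound the per-iteration work by $\mathcal{O}(n+m)$ through a straightforward line-by-line cost accounting. Your potential-function phrasing and the extra remarks on incremental maintenance and bit-complexity are welcome elaborations, but the underlying argument is the same.
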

\begin{proof}
    At each iteration of the \textbf{while} loop (Line 3), either an agent is removed from the set of live agents $L \subseteq N$, or one chore is removed from $A_{n+1}$.
    Thus, there can be at most $m+n$ iterations.
    The operation of finding the set of  agents in $L$ with minimum disutility (Line~\ref{alg:line:densestfirst:agent}) takes $\mathcal{O}(n)$ time. 
    Deciding if $i^*$ can be allocated an additional chore (Line 5) 
    and finding a maximally dense chore that is feasible for $i^*$ (Line 8) takes $\mathcal{O}(m)$ time.
    Together, the algorithm runs in $\mathcal{O}((n+m)^2)$ time.
    $\hfill \square$
\end{proof}

Next, we proceed to the main result of this section.
Due to space constraints, the proof is deferred to the full version of the paper.
\begin{theorem} \label{thm:ef2}
    Algorithm \ref{alg:ef1_densestfirst} returns an \emph{EF2} allocation.
\end{theorem}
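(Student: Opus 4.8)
The plan is to analyze the allocation $\mathcal{A}=(A_1,\dots,A_n,A_{n+1})$ returned by \textsc{DensestFirst} and verify the EF$2$ condition separately for two cases: (a) envy between two agents $i,j\in[n]$, and (b) envy of the housekeeper ($i=n+1$) towards some agent $j\in[n]$. For case (a), I would track the state of the algorithm through time. Let me write $d(A_i)$ for the final disutilities. The key structural fact is that whenever a chore $c^*$ is assigned to agent $i^*$ in some iteration, $i^*$ was a minimizer of current disutility among live agents; hence, at that moment, $d(A_j^{\mathrm{cur}})\ge d(A_{i^*}^{\mathrm{cur}})$ for every live $j$. So just \emph{before} $i$ receives her last chore, her disutility was $\le d(A_j^{\mathrm{cur}})\le d(A_j)$ for every agent $j$ that is still live at that time (disutilities only increase). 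I would argue that removing $i$'s last-received chore (one chore) brings her down to that earlier disutility; if additionally $j$ received at least one more chore after that moment, a second removal is not needed — but to be safe against $j$ having ``dropped out'' of $L$ earlier, I claim we need to also remove (in the worst case) the largest-density chore from the subset $S\subseteq A_i$ under consideration, which is where the ``up to $2$ chores'' slack is spent. More precisely, for a subset $S\subseteq A_i$ with $s(S)\le B_j$: if $|S|\le 2$ we are trivially done; otherwise, let $c_1\in S$ be the last chore $i$ received and $c_2\in S$ a maximally dense chore of $S\setminus\{c_1\}$. I would show $d(S\setminus\{c_1,c_2\})\le d(A_j)$ by combining (i) $d(A_i)-d(c_1)\le d(A_j^{\mathrm{cur at that time}})$ and (ii) a density/exchange argument bounding $d(S\setminus\{c_1\})-d(c_2)$ by the disutility that $j$ could have accumulated.

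The heart of the argument is the density comparison, and this is where the ``densest-first'' tie-breaking rule matters. The intuition (mirroring \cite{barman2020identical} for goods) is: because $i$ always grabbed the densest feasible chore, any chore in $A_i$ that $i$ picked up while $j$ was still live has density at least that of every chore $j$ could subsequently have been offered; and since $j$ is constrained only by her budget $B_j$, the total disutility $j$ failed to accrue is controlled by comparing $s(S)\le B_j$ against the sizes $j$ actually packed. I would make this rigorous by considering the last iteration in which $j$ was live (either $j$ left $L$ because no remaining chore fit in $B_j-s(A_j^{\mathrm{cur}})$, or $A_{n+1}$ emptied). In the former case, every chore still unallocated at that moment — in particular every chore of $S$ except those already in $A_i$ at that time — is \emph{larger} than $j$'s remaining budget slack, and I can use this, together with density monotonicity of $i$'s picks, to bound $d(S\setminus\{c_1\})$ against $d(A_j) + d(c_2)$. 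The tie-breaking "in favor of smaller chores" is needed to ensure that when densities are equal, $i$ did not greedily take an oversized chore that $j$ could have taken, which would break the size accounting.

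For case (b), the housekeeper's envy, the argument is cleaner: when the algorithm terminates with $A_{n+1}\ne\emptyset$, it must be that $L=\emptyset$, i.e., \emph{every} agent $j$ left the live set, which happened precisely because no leftover chore fit into $B_j - s(A_j)$. Hence every chore $c\in A_{n+1}$ has $s(A_j\cup\{c\})>B_j$, so for any $S\subseteq A_{n+1}$ with $s(S)\le B_j$ and any $c\in S$ we have $s(S\setminus\{c\})<s(A_j)\le B_j$ — wait, more carefully: I would show $d(S\setminus\{c_1,c_2\})\le d(A_j)$ using that $S$ fits in $B_j$ while no single leftover chore fits on top of $A_j$, forcing $s(S)$ and $s(A_j)$ to be ``close'', and then invoking a density bound exactly as in case (a) since $A_j$ was also filled greedily by densest chores. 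If instead the loop ended with $A_{n+1}=\emptyset$, there is nothing to check for $i=n+1$.

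The main obstacle I anticipate is case (a) with an agent $j$ who dropped out of $L$ early: I can no longer say $d(A_j)\ge d(A_i^{\mathrm{cur}})$ at the time $i$ received her last chore, so I must instead leverage the budget/size accounting — $j$'s bundle is ``saturated'' relative to $B_j$ in the sense that no feasible chore remained — combined with the density ordering of $i$'s greedy picks, to charge $j$'s shortfall in disutility against the size budget $s(S)\le B_j$. Getting the size bookkeeping exactly right, and seeing precisely why one removed chore handles the ``timing'' gap and a second removed chore handles the ``density/size'' gap (so that two removals always suffice but one may not), is the crux of the proof.
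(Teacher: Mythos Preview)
Your outline has a genuine gap that would prevent the argument from going through as written. The EF2 condition requires that for \emph{every} subset $S\subseteq A_i$ with $s(S)\le B_j$ there exist two chores \emph{in $S$} whose removal kills the envy. You propose to take $c_1\in S$ to be ``the last chore $i$ received'', but that chore need not lie in $S$ at all; and if you instead take $c_1$ to be the last-received chore \emph{among those in $S$}, the inequality you want---$d(A_i)-d(c_1)\le d(A_j^{\text{cur}})$ at the moment $c_1$ was assigned---no longer bounds $d(S\setminus\{c_1\})$, because $i$ may have received many further chores after $c_1$, none of which are in $S$ but all of which sit between $d(A_i\setminus\{c_1\})$ and the snapshot $d(A_i^{\text{cur}})$. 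In short, the timeline argument controls $d(A_i)$, not $d(S)$ for an arbitrary $S$, and the two are decoupled precisely in the hard case. You also correctly flag the subcase where $j$ has left $L$ early as the main obstacle, but you do not actually resolve it: ``a density/exchange argument bounding $d(S\setminus\{c_1\})-d(c_2)$'' is an aspiration, not an argument, and the size bookkeeping you sketch (``$s(S)$ and $s(A_j)$ are close'') does not yield a disutility bound without a careful comparison of densities across $S$ and $A_j$ at matching size thresholds.

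The paper's proof takes a rather different route that sidesteps both issues. It never tracks which chore was ``last received''. Instead, it works entirely with the subset $S$ (renamed $Z$) and the target bundle $A_j$ (renamed $Y$), introduces \emph{size-prefix} subsets $Z^{[T]}$, $Y^{[T]}$ (the densest fractional prefix of total size $T$), and defines $\textsc{EFCount}(X,Y)$ as the minimum number of removals from $X$ needed to drop below $d(Y)$. A discrete intermediate-value argument shows that as $T$ sweeps through the sizes $s(Z^{(t)})$, the count $\textsc{EFCount}(Z^{[T]},Y^{[T]})$ increases by at most one per step, so there is a threshold $\widehat\tau$ where it first hits exactly~$2$. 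The heart of the proof is then to show two things at that threshold: (i) $\textsc{EFCount}(Z^{[\widehat\tau]},Y^{[\tau]})=2$, and (ii) the ``tails'' satisfy $d(Z\setminus Z^{[\widehat\tau]})\le d(Y\setminus Y^{[\tau]})$. Property (ii) is where the algorithm is actually used, via two observations about when \textsc{DensestFirst} would have assigned a chore to $b$ rather than $a$; in particular one deduces that adding the critical chore $c_Z$ to the dense prefix of $Y$ would overflow $B_b$, which forces the required size (hence disutility) inequality on the tails. A final lemma says that $\textsc{EFCount}$ cannot increase when you append tails satisfying (ii), so $\textsc{EFCount}(Z,Y)\le 2$. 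This prefix/threshold machinery is exactly what lets the proof handle an \emph{arbitrary} $S$, and it treats the housekeeper uniformly with the other agents rather than as a separate case.
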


The above result, coupled with the fact that the complexity of finding EF1 allocations in the budget-constrained goods model is still an open question, leads us to the next natural question: under what circumstances (i.e., special cases) can we compute EF1 allocations in polynomial time?
We investigate this in the next section.


\section{Computing EF1 Allocations in Special Cases for Indivisible Chores}
We consider five special cases where an EF1 allocation can be computed in polynomial time.
More specifically, we show that when chores are identically-valued (equivalently, when agents have binary disutility functions), identically-sized, identically-dense, or when agents have identical budgets, the \textsc{DensestFirst} algorithm (Algorithm \ref{alg:ef1_densestfirst}), which guarantees an EF2 allocation in our general model, will, in fact, return an EF1 allocation.
For each of these variants, we will only prove correctness, as the polynomial running time has already been established in Theorem \ref{thm:densestfirst_runtime}.
We then propose a separate polynomial-time algorithm that returns an EF1 allocation for two agents.

\subsection{Binary Disutility Functions or Identically-Valued Chores}
The first special case that we will look at is when agents have binary disutility functions (i.e., each chore is valued at either $0$ or $1$ by all agents).
Together with the identical valuation assumption, the case of binary disutilities reduces to that of \emph{identically-valued} chores.
This is because we can assume chores valued at $0$ is left unallocated (i.e., left in the housekeeper's bundle).
In this setting, it suffices to assume that, without loss of generality, each chore has a disutility of $1$.

While identical chores are trivial in the traditional fair division model (by simply allocating any $\lfloor m/n \rfloor$ chores to each agent, and
then picking an arbitrary set of $m-n\cdot \lfloor m/n \rfloor$ of agents and allocating each of these agents one of the remaining chores), the size dimension of the budget-constrained model leads to EF1 becoming a non-trivial property to prove.

We will now show that  executing Algorithm \ref{alg:ef1_densestfirst} on an instance with identically-valued chores results in an EF1 (in this case, equivalently, EFX) allocation.

\begin{theorem}\label{thm:binary}
    When agents have binary disutilities or when chores are identically-valued, Algorithm \ref{alg:ef1_densestfirst} returns an \emph{EF1} allocation.
\end{theorem}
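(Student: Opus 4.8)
The plan is to show that in both cases the EF1 requirement collapses to a statement about the \emph{number} of positively-valued chores in each bundle, and then to prove that statement from the structure of \textsc{DensestFirst} (Algorithm~\ref{alg:ef1_densestfirst}). First I would write $C^{+} = \{c \in C : d(c) > 0\}$. In the identically-valued case all chores share one disutility $v$; since $v=0$ is trivially EF, I assume $v>0$, so $C^{+}=C$ and $d(S)=v\,|S\cap C^{+}|$ for every bundle $S$. Dividing Definition~\ref{def:EF1:chores} by $v$, the clause $d(S\setminus\{c\})\le d(A_j)$ for a positive $c$ becomes $|S\cap C^{+}|\le|A_j\cap C^{+}|+1$ (and if $S$ has no positive chore then $d(S)=0$ and EF1 is immediate). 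The shared value $v$ cancels, so the argument is insensitive to the actual common disutility. In the binary case $C^{+}$ is the set of disutility-$1$ chores and the same equivalence holds. Thus it suffices to prove, for all $i\in[n+1]$, $j\in[n]$ and all $S\subseteq A_i$ with $s(S)\le B_j$,
\begin{equation*}
    |S\cap C^{+}| \le |A_j\cap C^{+}| + 1 .
\end{equation*}

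\noindent\textbf{Discarding the zero chores.}
For the binary case I would next argue the zero chores are harmless. Since $\rho(c)=d(c)/s(c)$, every chore of $C^{+}$ is strictly denser than every zero chore, so \textsc{DensestFirst} gives an agent a zero chore only at a step where no chore of $C^{+}$ is still feasible for her; as her budget only shrinks afterward, she receives no further $C^{+}$-chore. Hence each agent collects all her $C^{+}$-chores \emph{before} any zero chore, zero chores neither remove a $C^{+}$-chore from the pool nor change any other agent's remaining budget, and the subsequence of $C^{+}$-allocations is exactly the run of \textsc{DensestFirst} on the instance restricted to $C^{+}$. As disutilities and the displayed bound depend only on $C^{+}$, I may assume henceforth that every chore has a common positive disutility and $C^{+}=C$; the bound becomes $|S|\le|A_j|+1$, and minimum disutility equals minimum chore count.

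\noindent\textbf{Structure and the counting bound.}
I would then record two structural facts. Because density is inversely proportional to size, ``maximally dense among feasible chores'' means ``smallest chore that fits'', so the selected chore is always the globally smallest remaining one, and if it does not fit the active agent then nothing does and she leaves $L$. Consequently (i) chores are handed out in non-decreasing size order, and (ii) each chore goes to a current minimum-count agent, so the counts of live agents differ by at most one and an agent leaves $L$ only as a minimum-count agent (all still-live agents then have at least as many chores). To prove the bound, fix $i,j$, set $q=|A_j|$, and split into two cases. If $j$ is live at termination, the loop halts with $A_{n+1}=\emptyset$; then the housekeeper bundle is empty, and for $i\in[n]$ an agent reaching count $q+2$ would at that step be a minimum-count agent while $j$ is live with count $\le q$, which is impossible, so $|A_i|\le q+1$. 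If instead $j$ leaves $L$ with slack $\beta_j=B_j-s(A_j)$, then at that step every remaining chore exceeded $\beta_j$, so every chore allocated later, and every chore left with the housekeeper, exceeds $\beta_j$ and dominates $j$'s largest chore. Assuming $|A_i|\ge q+2$, list the $q+2$ smallest chores of $A_i$ as $a_1\le\dots\le a_{q+2}$ and $j$'s chores as $b_1\le\dots\le b_q$. Fact (ii) yields $a_r\ge b_{r-1}$ (with $b_0:=0$) for each of the $p$ chores $i$ received before $j$ left and forces $p\le q+1$, while each of the remaining $q+2-p$ chores exceeds $\beta_j$ and dominates $b_q$. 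Matching those $q+2-p$ chores against $b_p,\dots,b_q$ and $\beta_j$, and the first $p$ against $b_1,\dots,b_{p-1}$, gives $\sum_{r=1}^{q+2}a_r > \sum_{r=1}^{q}b_r+\beta_j = B_j$, so no feasible $S\subseteq A_i$ has $q+2$ chores.

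\noindent\textbf{Where the difficulty lies.}
The hard part will be the dropout case: making the term-by-term matching account for \emph{all} of $B_j$, both the chores $j$ actually took and its leftover slack $\beta_j$, using only that the ``late'' chores of $A_i$ exceed $\beta_j$ and dominate $b_q$; the exact count bookkeeping ($q+2-p$ late chores versus the $q-p+1$ remaining $b$'s plus the one slack term) is precisely what makes the inequality strict, and I would verify that the matching degrades gracefully in the boundary cases $p=0$ (the housekeeper) and $\beta_j=0$. The second point needing care is the zero-chore reduction, namely confirming that interleaved zero chores can never perturb the $C^{+}$-allocation nor the slack thresholds $\beta_j$ used above, which is what lets the general common-value argument be reused verbatim for the binary case.
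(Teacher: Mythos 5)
Your proposal is correct, and it reaches the result by a genuinely different route than the paper's own proof. Both arguments rest on the same two structural facts about \textsc{DensestFirst} in this setting (with identical disutilities, density is inversely proportional to size, so chores leave the pool in globally non-decreasing size order, and each chore goes to a minimum-count live agent), but the organization differs substantially. The paper fixes a pair $i,j$ with $B_i \leq B_j$, dispatches $i$'s envy of $j$ by the count bound $|A_i| \leq |A_j|+1$, and handles $j$'s envy of $i$ by anchoring at the step $\alpha$ at which $i$ received her last chore, splitting on the tie-breaking index ($i<j$ versus $j<i$) to show that the $(\alpha+1)$- or $(\alpha+2)$-smallest prefix of $A_j$ already exceeds $B_i$; the housekeeper is then treated by a separate exchange argument. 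You instead prove one uniform inequality---any $S \subseteq A_i$ (with $i \in [n+1]$, housekeeper included) satisfying $s(S) \leq B_j$ has $|S| \leq |A_j|+1$---via a dichotomy on whether the envied agent $j$ is live at termination (in which case min-count selection caps every agent's count at $|A_j|+1$ and $A_{n+1} = \emptyset$) or left $L$ with slack $\beta_j$ (in which case the sorted matching $a_r \geq b_{r-1}$ for pre-exit chores, plus the fact that all post-exit and unallocated chores exceed $\beta_j$ and dominate $b_q$, shows the $q+2$ smallest chores of $A_i$ sum to strictly more than $B_j = \sum_r b_r + \beta_j$). This buys you two things: the housekeeper needs no separate argument (it is exactly the $p=0$ case of the matching), and no case analysis on tie-breaking indices is required; conversely, the paper's argument is locally shorter, and its Case 1 even yields full envy-freeness for that subcase. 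A further point in your favor: your zero-chore reduction is more careful than the paper's, which simply asserts that disutility-$0$ chores ``can be assumed to be left unallocated'' even though Algorithm~\ref{alg:ef1_densestfirst} as written does allocate them; your observations that an agent receives a zero chore only once no positive chore can ever again fit her shrinking slack, and that she then remains the selected agent (her disutility, hence her position in the selection rule, is unchanged) so her zero-chore phase is a consecutive no-op block, are exactly what justify restricting to $C^{+}$. I checked the bookkeeping in the dropout case: $p \leq q+1$ follows since $j$, being live, has count at least $p-1$ when $i$ takes her $p$-th chore; the $q+2-p$ late chores match the $q-p+1$ remaining $b$-terms plus the slack term; and strictness comes from the single chore matched against $\beta_j$, which holds even when $\beta_j = 0$ because sizes are strictly positive.
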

\begin{proof}
    Consider any two agents $i,j \in [n]$ with $B_i \leq B_j$ and let $c_i^t$ and $c_j^t$ denote the $t$-th chore added to agent $i$ and $j$'s bundles, respectively.
    Also, let $A_i^t$ and $A_j^t$ denote the bundles belonging to agents $i$ and $j$, respectively after the $t$-th chore was added to their bundles.
    
    We first consider the envy of agent $i$ towards agent $j$. 
    By the fact that $B_i \leq B_j$ and by construction of the algorithm, we have $|A_i| \leq |A_j|+1$. Thus, $d(A_i) \leq  d(A_j)$ or $d(A_i \setminus \{c\}) \leq d(A_j)$ for some $c \in A_i$, which establishes the EF1 property by $i$ towards $j$.

    Next, we consider the envy by agent $j$ towards agent $i$.
    If $|A_j| \leq |A_i|+1$, then $d(A_j \setminus \{c\}) = |A_j|-1 \leq |A_i| = d(A_i)$ for any $c \in A_j$ and EF1 is trivially obtained.
    Thus, we assume $|A_j| > |A_i|+1$.

    Let $c^\alpha_i$ be the last chore that agent $i$ received. Consider the following two cases. 
    Since $|A_j| > |A_i|+1$, we have $|A_j| > |A_i| =  \alpha + 1$.
    \begin{description}
        \item[Case 1: $i < j$.]  
        Since $|A_j| > \alpha$, we have that $s(A_i^\alpha \cup \{c_j^{\alpha + 1}\}) > B_i$ (otherwise $c_j^{\alpha + 1}$ would have been allocated to agent $i$ instead).
        Then, we get that
        \begin{equation*}
            s(A_j^{\alpha+1}) = s(A_j^\alpha \cup \{c_j^{\alpha+1}\}) \geq s(A_i^\alpha \cup \{ c_j^{\alpha+1} \})  > B_i.
        \end{equation*}
        Consider any subset $S \subseteq A_j$ with $s(S) \leq B_i$. If $|S| > |A_j^\alpha|$, then $S$ contains at least $\alpha+1$ chores; however, since $A_j^{\alpha+1}$ contains the $\alpha+1$ smallest chores in $A_j$, this means that  $B_i<s(A_j^{\alpha+1}) \leq s(S)$, a contradiction. Thus, we have $|S| \leq |A_j^\alpha|$, giving us 
        \begin{equation*}
            d(S) = |S| \leq |A_j^\alpha| = |A_i^\alpha| = |A_i| = d(A_i).
        \end{equation*}
        
        \item[Case 2: $j < i$.]
        Since $|A_j| > \alpha + 1$, we have that $s(A_i^\alpha \cup \{c_j^{\alpha+2}\}) > B_i$ (otherwise $c_j^{\alpha+2}$ would have been allocated to agent $i$ instead).
        Then, we get that
        \begin{equation*}
            s(A_j^{\alpha + 2}) = s(A_j^{\alpha + 1} \cup \{c_j^{\alpha+2} \})\geq s(A_i^{\alpha} \cup \{c_j^{\alpha+2}\}) > B_i.
        \end{equation*}
        Together with the fact that $A_j^{\alpha+2}$ contains the $\alpha+2$ smallest chores in $A_j$, this means that for any subset $S \subseteq A_j$ with $s(S) \leq B_i$, $|S| \leq |A_j^{\alpha+1}|$, giving us 
        \begin{equation*}
            d(S) = |S| \leq |A_j^{\alpha + 1}| = |A_i^\alpha| + 1 = |A_i| + 1 = d(A_i) + 1.
        \end{equation*}
        This is equivalent to $d(S \setminus \{c\}) \leq d(A_i)$
        for any chore $c \in S$.
    \end{description}
    Finally, we consider the envy by the  housekeeper towards agents in $[n]$.

    Note that every chore $c \in A_{n+1}$ is such that $s(c) \geq s(c')$ for all $c' \in \bigcup_{i=1}^n A_i$.
    Also, for any $i \in [n]$ and $c \in A_{n+1}$ we have $s(A_i \cup \{ c\}) > B_i$.
    
    Suppose for a contradiction there exists a subset $S \subseteq A_{n+1}$ such that $s(S) \leq B_k$ and $|S| - 1 > |A_k|$ for some $k\in [n]$.
    Since $|S| > |A_k| + 1$, consider the subset $S' \subset S$ such that $|S'| = |A_k|$.
    Then, $s(S') \geq s(A_k)$.
    We have that
    \begin{equation*}
        B_k \geq s(S) \geq s(S' \cup \{c\}) \geq s(A_k \cup \{c\}) > B_k
    \end{equation*}
    for some $c \in S \setminus S'$, 
    a contradiction.
    Thus, for any subset $S \subseteq A_{n+1}$ such that $s(S) \leq B_k$, we have that
    \begin{equation*}
        d(S \setminus \{c\}) = |S|-1 \leq |A_k| = d(A_k)
    \end{equation*}
    for any $c \in S$, as desired.
    $\hfill \square$
\end{proof}


\subsection{Identically-Sized Chores}
In the previous subsection, we showed that when chores are identically-valued but have possibly differing sizes, Algorithm \ref{alg:ef1_densestfirst} returns an EF1 allocation.
Now, we show that when chores are identically-sized but with possibly differing disutilities, the algorithm is also able to compute an EF1 allocation. Note that in this case, the algorithm allocates chores with highest disutility first.

\begin{theorem}\label{thm:identicalsize}
    When chores are identically-sized, Algorithm \ref{alg:ef1_densestfirst} returns an \emph{EF1} allocation.
\end{theorem}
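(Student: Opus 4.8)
The plan is to first reduce the problem to a cardinality-constrained variant of longest-processing-time scheduling. Since all chores share a common size $s$, write $\kappa_\ell = \lfloor B_\ell/s \rfloor$. Then feasibility of a bundle $A_\ell$ is equivalent to $|A_\ell| \le \kappa_\ell$, a subset $S$ satisfies $s(S) \le B_j$ iff $|S| \le \kappa_j$, and the density $\rho(c)=d(c)/s$ is maximized exactly when $d(c)$ is. Hence, whenever \textsc{DensestFirst} allocates a chore it picks the globally highest-disutility chore that remains (every remaining chore is equally feasible), so chores are allocated in non-increasing order of disutility and the allocated chores form a ``top-$k$'' prefix of this order. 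Consequently the chores each agent receives are non-increasing in the order received, and every chore left with the housekeeper has disutility at most that of every allocated chore.

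Second, I would reduce EF1 to a single scalar inequality. Fix $i\in[n+1]$, $j\in[n]$, and sort the chores of $A_i$ as $e_1 \ge e_2 \ge \dots$. For $S\subseteq A_i$ with $|S|\le\kappa_j$, the residual disutility after the best single removal, $d(S)-\max_{c\in S}d(c)$, is maximized by taking the $p:=\min(|A_i|,\kappa_j)$ largest chores of $A_i$ (a short exchange argument), giving value $e_2+\dots+e_p$. Thus EF1 of $i$ towards $j$ reduces to showing $e_2+\dots+e_p \le d(A_j)$; this is vacuous when $p\le 1$.

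Third, I would treat two real agents $i,j\in[n]$, writing $A_j$ as $f_1\ge f_2\ge\dots$. The heart of the argument is a dichotomy: for every $r\in\{1,\dots,p-1\}$, either (A) $f_r \ge e_{r+1}$, or (B) $e_1+\dots+e_r \le f_1+\dots+f_{r-1}$. To see this, look at the moment $i$ receives its $(r+1)$-th chore. If $j$ already holds at least $r$ chores then $f_r$ was allocated earlier, so $f_r\ge e_{r+1}$ by the non-increasing order, giving (A). Otherwise $j$ holds $q<r\le\kappa_j$ chores, hence $j$ is still live, and since $i$ was selected as a minimum-disutility live agent we get $e_1+\dots+e_r \le f_1+\dots+f_q \le f_1+\dots+f_{r-1}$, giving (B). If (A) holds for all $r$, telescoping gives $e_2+\dots+e_p \le f_1+\dots+f_{p-1}\le d(A_j)$. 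Otherwise, let $r_0$ be the largest index where (A) fails; I would split the sum, bounding $e_2+\dots+e_{r_0+1}\le e_1+\dots+e_{r_0}\le f_1+\dots+f_{r_0-1}$ via (B) and $e_{r_0+2}+\dots+e_p\le f_{r_0+1}+\dots+f_{p-1}$ termwise via (A), and adding to obtain $e_2+\dots+e_p \le \sum_{r=1}^{p-1}f_r - f_{r_0} \le d(A_j)$. This argument is stated for an arbitrary ordered pair $(i,j)$, so it covers envy in both directions.

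I expect this last step---reconciling the LPT-style pairing with the budget (cardinality) constraint---to be the main obstacle. The textbook ``compare against the least-loaded agent'' bound fails precisely because $j$ may fill up before $i$ receives its later chores, so $j$ need not be live when $i$'s final chores are assigned; the dichotomy together with the largest-failure-index decomposition is exactly what repairs this. The housekeeper case $i=n+1$ is easier: if the algorithm halts with a nonempty housekeeper bundle then every agent is full, so $|A_j|=\kappa_j$, and since each leftover chore has disutility at most $f_{\kappa_j}=\min_\ell f_\ell$, the top $p'=\min(|A_{n+1}|,\kappa_j)$ leftover chores $h_1\ge h_2\ge\dots$ satisfy $h_2+\dots+h_{p'} \le (p'-1)\,f_{\kappa_j} \le f_1+\dots+f_{\kappa_j-1} \le d(A_j)$, which establishes EF1 and completes the proof.
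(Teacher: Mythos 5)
Your proof is correct, and although it shares the paper's backbone---identical sizes turn budget feasibility into a cardinality cap $|S|\le\kappa_j$, force \textsc{DensestFirst} to release chores in globally non-increasing order of disutility, and reduce EF1 to a prefix-sum inequality $e_2+\dots+e_p\le d(A_j)$---the way you establish that inequality is genuinely different. The paper fixes the direction via $B_i\le B_j$ and argues by a direct shifted termwise pairing, asserting $d(c_i^{t+1})\le d(c_j^{t})$ for every $t$ and summing. Your dichotomy replaces this: at each index $r$ either $f_r$ was allocated before $e_{r+1}$ (so $f_r\ge e_{r+1}$), or $j$ lags behind in count, is therefore still live, and the minimum-disutility selection rule gives the aggregate bound $e_1+\dots+e_r\le f_1+\dots+f_q$ with $q\le r-1$; splitting at the largest index where the pairing fails then stitches the two regimes together. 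This extra care is not superfluous: the termwise pairing can in fact fail while EF1 still holds (take all sizes $1$, large equal budgets, and disutilities $20,9,9,8,7$; the algorithm yields $A_1=\{20,7\}$ and $A_2=\{9,9,8\}$, so $d(c_2^{3})=8>7=d(c_1^{2})$ even though the summed inequality survives), and your case (B) is precisely what covers such instances. Your argument is also symmetric in the ordered pair $(i,j)$ rather than split by budget order, and your housekeeper bound $(p'-1)f_{\kappa_j}\le f_1+\dots+f_{\kappa_j-1}$ matches the paper's in spirit. The only point to tighten in a full write-up is the statement of (B): the selection rule yields $e_1+\dots+e_r\le f_1+\dots+f_q$ for the actual count $q\le r-1$, and since $A_j$ need not yet contain $r-1$ chores you should combine the disjoint index sets $\{1,\dots,q\}$ and $\{r_0+1,\dots,p-1\}$ directly rather than padding to $f_1+\dots+f_{r-1}$; the argument goes through verbatim with this adjustment.
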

\begin{proof}
    Consider any two agents $i, j \in [n]$ with $B_i \leq B_j$ and let $c_i^t$ and $c_j^t$ denote the $t$-th chore added to agent $i$ and $j$'s bundle, respectively.
    Also let $A_i^t$ and $A_j^t$ denote the bundles belonging to agent $i$ and $j$, respectively, after the $t$-th chore was added to their bundles.
Let $c_i^\alpha$ be the last chore that agent $i$ received.

    We first prove the EF1 property by agent $i$ towards agent $j$.
    Since $d(c_i^{t+1}) \leq d(c_j^t)$ for all $t = 1,\dots,\alpha-1$, by summing over $t$ on both sides, we get
        \begin{equation*}
            d(A_i \setminus \{c_i^1 \}) = \sum_{t=1}^{\alpha-1} d(c_i^{t+1}) \leq \sum_{t=1}^{\alpha-1} d(c_j^t) \leq d(A_j).
        \end{equation*}
    Thus, agent $i$ does not envy $j$ by more than one chore. 

    Next, we prove the EF1 property by agent $j$ towards agent $i$. If $|A_j| \leq \alpha+1= |A_i|+1$, by a similar argument as above, we have
    \begin{equation*}
            d(A_j \setminus \{c_i^1 \}) = \sum_{t=1}^{|A_j|-1} d(c_j^{t+1}) \leq \sum_{t=1}^{\alpha} d(c_i^t) \leq d(A_i). 
        \end{equation*}
    Thus, consider the case when $|A_j| > \alpha+1$. Fix any subset $S \subseteq A_j$ such that $s(S) \leq B_i$. Trivially $d(S) \leq d(A_i)$ when $S=\emptyset$, so assume $S \neq \emptyset$. Moreover, we have $|S| \leq \alpha+1$ as otherwise $i$ would have been allocated $\alpha+1$ chores instead.
    Then, since $d(c_j^{t+1}) \leq d(c_j^t)$ for all $t = 1,\dots,\alpha$, by summing over $t$ on both sides, we get
    \begin{equation} \label{eqn:identicalsize_1}
        d(A_j^{\alpha+1} \setminus \{c_j^1 \}) = \sum_{t=1}^\alpha d(c_j^{t+1}) \leq \sum_{t=1}^\alpha d(c_i^t) = d(A_i).
    \end{equation}
    Since $A_j^{\alpha+1} \setminus \{c_j^1 \}$ contains the $\alpha$ chores with highest disutility in $A_j \setminus \{c_j^1 \}$, 
    \begin{equation*}
        d(S \setminus \{c\}) \leq d(A_j^{\alpha+1} \setminus \{ c_j^1 \})
    \end{equation*}
    where $c \in S$ is the chore with highest disutility in $S$.    
    Together with (\ref{eqn:identicalsize_1}) above, we obtain the EF1 property.

    Finally, we prove the EF1 property by the housekeeper towards any agent $k \in [n]$.
    Fix any $S \subseteq A_{n+1}$ with $s(S) \leq B_k$. Note that every chore $c \in A_{n+1}$ satisfies
    \begin{equation} \label{eqn:identicalsize_2}
        d(c) \leq d(c') \quad \text{for all } c' \in \bigcup_{i=1}^n A_i.
    \end{equation}
    Also, for every $c \in A_{n+1}$ we have $s(A_k \cup \{c\}) > B_k $ (otherwise agent $k$ would have been allocated chore $c$). This means that $s(S) - s (c) < s(A_k)$ for every $c \in A_{n+1}$, implying that $|S| \leq |A_k|$ since all the chores have the same size.
Thus, it holds that
    \begin{equation*}
        d(S) \leq \sum_{c' \in A_k} d(c') = d(A_k).
    \end{equation*}
    where the middle inequality follows from (\ref{eqn:identicalsize_2}). 
    $\hfill \square$
\end{proof}

\begin{remark}
We note that for binary disutilities or for identically-sized chores, Algorithm~1 (which computes EFX allocations) runs in polynomial time: indeed, we have argued that the check in Line~3 of that algorithm and computing a feasible set can be reduced to solving Knapsack, and Knapsack is polynomial-time solvable if item sizes or values are polynomially bounded
(and if chores have identical sizes, we can assume without loss of generality that each chore has size $1$). Consequently, in case of binary disutilities or identical sizes, we can compute an EFX allocation in polynomial time. Since every EFX allocation is also an EF1 allocation, this observation can be seen as a strengthening of Theorems~\ref{thm:binary} and~\ref{thm:identicalsize}. 
Nevertheless, Theorems~\ref{thm:binary} and~\ref{thm:identicalsize} remain useful, as they provide guarantees on the performance of a specific natural algorithm, namely, {\sc DensestFirst}.
\end{remark}


\subsection{Identically-Dense Chores}
Next, we consider the case when the chores have identical densities, but with potentially different sizes or disutilities.
We will now show that Algorithm~\ref{alg:ef1_densestfirst} is again able to return an EF1 allocation (the tie-breaking in favor of smaller chores in Line \ref{alg:line:tie} is crucial here). 

\begin{theorem} \label{thm:ef1_identicallydense}
    When chores are identically-dense, Algorithm \ref{alg:ef1_densestfirst} returns an \emph{EF1} allocation.
\end{theorem}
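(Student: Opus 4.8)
The plan is to use that, on an instance where every chore has the same density $\rho$, each bundle $T$ satisfies $d(T)=\rho\cdot s(T)$, so disutility comparisons coincide with size comparisons; hence it suffices to establish the size-analogue of EF1, i.e.\ that for every $i\in[n+1]$, $j\in[n]$ and every nonempty $S\subseteq A_i$ with $s(S)\le B_j$ there is a chore $c\in S$ with $s(S\setminus\{c\})\le s(A_j)$. Under identical density the algorithm reduces to the transparent rule ``repeatedly give the smallest remaining chore that fits to the currently poorest live agent'' (here the Line~\ref{alg:line:tie} tie-break is what selects the smallest fitting chore). I would first isolate three facts: (i) the agent $i^*$ chosen at any iteration has minimum bundle size among the live agents (this is just the $\argmin$ rule together with $d=\rho\cdot s$); (ii) over the run of the algorithm every agent's bundle only grows while the housekeeper's bundle only shrinks; (iii) if agent $j$ is removed from the live set at some iteration $t'$, then by the test in Line~5 every chore still unallocated at that point --- in particular every chore allocated later and every chore that ends up with the housekeeper --- has size strictly greater than $B_j-s(A_j)$, where $A_j$ is already $j$'s final bundle.

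For a pair of agents $i,j\in[n]$ it suffices, by symmetry, to bound the envy of $j$ towards $i$. Fix a nonempty $S\subseteq A_i$ with $s(S)\le B_j$ and let $\hat c\in S$ be the chore of $S$ that $i$ received last, at some iteration $\hat t$; since every other chore of $S$ was received by $i$ strictly before iteration $\hat t$, we get $S\setminus\{\hat c\}\subseteq A_i^{(\hat t)}$, hence $s(S\setminus\{\hat c\})\le s(A_i^{(\hat t)})$. If $j$ is live at iteration $\hat t$, fact~(i) gives $s(A_i^{(\hat t)})\le s(A_j^{(\hat t)})$ and fact~(ii) gives $s(A_j^{(\hat t)})\le s(A_j)$, so $s(S\setminus\{\hat c\})\le s(A_j)$. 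If $j$ is not live at iteration $\hat t$, then (no agent is removed on an iteration that hands out a chore) $j$ must have been removed at some $t'<\hat t$; since $\hat c$ is still unallocated at $t'$, fact~(iii) gives $s(A_j)+s(\hat c)>B_j\ge s(S)$, so $s(S\setminus\{\hat c\})=s(S)-s(\hat c)<s(A_j)$. Either way the required inequality holds.

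For envy of the housekeeper towards an agent $k\in[n]$: if $A_{n+1}=\emptyset$ on termination there is nothing to prove, so assume $A_{n+1}\neq\emptyset$; then the \textbf{while} loop can only have exited because $L=\emptyset$, i.e.\ every agent was eventually removed. Fix a nonempty $S\subseteq A_{n+1}$ with $s(S)\le B_k$; the chores of $A_{n+1}$ are never allocated, so they were all still present when $k$ was removed, and fact~(iii) gives $s(c)>B_k-s(A_k)$ for every $c\in S$; choosing any such $c$ yields $s(S\setminus\{c\})=s(S)-s(c)<s(A_k)$. Re-scaling all these size inequalities by $\rho$ turns them into the EF1 disutility bounds, which completes the proof.

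The step I expect to be delicate is the second case of the pairwise argument: one must pin down carefully that $\hat c$ --- the last chore of $S$ that $i$ received --- was still unallocated at the exact moment $j$ left the live set (so that the removal test of Line~5 applies to it), and that $j$'s bundle at that moment is already its final bundle; getting the bookkeeping of iteration indices straight is the main technical point. A secondary point needing care is the housekeeper case, where one has to observe that a nonempty final housekeeper bundle can arise only after all agents have been removed, which is exactly what makes fact~(iii) available for the relevant agent $k$.
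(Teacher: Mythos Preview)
Your argument is correct and, in fact, cleaner than the paper's. The paper fixes agents $i,j$ with $B_i\le B_j$ and treats the two directions of envy separately: for envy of $i$ towards $j$ it simply drops the last chore $i$ received; for envy of $j$ towards $i$ it argues by contradiction, taking the largest (hence last-allocated) chore $c$ of the offending set $S\subseteq A_j$ and showing that $c$ would still have fit into $A_i$ and hence should have been assigned to $i$ instead. Your decomposition is different: you do not order agents by budget at all, and instead case-split on whether the target agent $j$ is still live at the moment the last chore $\hat c$ of $S$ was handed to $i$. The ``live'' case uses minimality of $d(A_i^{(\hat t)})$ directly, and the ``not live'' case uses the removal test (your fact~(iii)); the same two-case structure then handles the housekeeper. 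This buys you a single uniform argument for every ordered pair, whereas the paper's $j$-to-$i$ direction somewhat informally mixes final-bundle and at-the-time-of-allocation quantities. Two small remarks: you should dispose of the degenerate case $\rho=0$ up front (the paper does), since your reduction of disutility comparisons to size comparisons needs $\rho>0$; and your sentence ``bound the envy of $j$ towards $i$'' should read ``of $i$ towards $j$'', given that you then take $S\subseteq A_i$---the symmetry remark makes this harmless, but aligning the labels will avoid confusion.
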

\begin{proof}
    Consider any two agents $i, j \in [n]$ with $B_i \leq B_j$ and let $A_i^t$ and $A_j^t$ denote the bundle belonging to agent $i$ and $j$, respectively, after the $t$-th chore was added to their bundle.
    Since chores are identically-dense, let $\rho$ be the density of each chore. The case when $\rho = 0$ is trivial so we assume that $\rho >0$. 

    We first prove the EF1 property by agent $i$ towards agent $j$.
    If $d(A_i) \leq d(A_j)$, then we are trivially done.
    Hence, we assume that $d(A_i) > d(A_j)$.
    Let there be $\alpha$ chores in $A_i$, and hence $c_i^\alpha$ is the last chore added to $i$'s bundle.
    Then, it must be that $d(A_i \setminus \{c_1^\alpha \})\leq d(A_j)$, otherwise $c_1^\alpha$ would not have been added to agent $i$'s bundle.

    Next, we prove the EF1 property by agent $j$ towards agent $i$.
    If $d(A_j) \leq d(A_i)$, then we are trivially done.
    Hence, we assume that $d(A_j) > d(A_i)$.
    Suppose for a contradiction that for some subset $S \subseteq A_j$ such that $s(S) \leq B_i$, $d(S \setminus \{c\}) > d(A_i)$ for every $c \in S$. 
    Since the density of all the chores is the same, this means that  $s(S \setminus \{c\}) > s(A_i)$ for every $c \in S$. 
    Let $c$ be the last chore allocated to $A_j$ among the chores in $S$ (so $c$ is the largest-sized chore in $S$). 
    Since $s(S \setminus \{c\}) > s(A_i)$, 
    \begin{equation*}
        s(A_i \cup \{c\}) < s(S) \leq B_i,
    \end{equation*}
    and since $d(S \setminus \{c\}) > d(A_i)$, $c$ should have been allocated to agent $i$ instead of agent $j$, a contradiction.
    Therefore, $d(S \setminus \{c\}) \leq d(A_i)$, as desired.
    
    Finally, we prove the EF1 property by the housekeeper towards any agent $k \in [n]$, which is similar to the previous case.
    Suppose for a contradiction that for some subset $S \subseteq A_{n+1}$ such that $s(S) \leq B_k$ it holds that $d(S \setminus \{c\}) > d(A_i)$ for every $c \in S$. Since the density of each chore is the same, this means that $s(S \setminus \{c\}) > s(A_k)$ for every $c \in S$, implying that we have
    \begin{equation*}
        s(A_k \cup \{c\}) < s(S) \leq B_k,
    \end{equation*}
    for every $c \in S$. Thus, at least one chore $c \in S$ should have been allocated to some agent $k' \in [n]$ (by how the algorithm operates), which is a contradiction. Hence, $d(S \setminus \{c\}) \leq d(A_k)$ as desired. 
    $\hfill \square$
\end{proof}

\subsection{Identical Budgets}
In the previous three subsections, we considered the case where chores are identical in some way---be it in value, size, or density.
Now, we relax constraints on these three properties, and consider the case where agents' budgets are identical.
Again, the same Algorithm \ref{alg:ef1_densestfirst} is able to return an EF1 allocation. 

\begin{theorem}\label{thm:ef1_identicalbudgets}
    When agents have identical budgets, Algorithm \ref{alg:ef1_densestfirst} returns an \emph{EF1} allocation.
\end{theorem}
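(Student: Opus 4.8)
The plan is to fix the common budget $B := B_1 = \dots = B_n$ and verify the three families of (approximate) envy comparisons required by Definition~\ref{def:EF1:chores}: envy between two agents in $[n]$, and envy of the housekeeper towards an agent. Throughout I will exploit the single structural feature of \textsc{DensestFirst} (Algorithm~\ref{alg:ef1_densestfirst}) that drives all of Section~5: whenever an agent receives a chore, she was a live agent of \emph{minimum} current disutility, and she receives the densest feasible chore. Running time is already covered by Theorem~\ref{thm:densestfirst_runtime}, so only correctness is at stake.

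First I would dispatch the easy comparisons. If $c$ is the last chore that agent $i$ receives, then just before that step $i$ had minimum disutility among the then-live agents, so $d(A_i \setminus \{c\}) \le d(A_j)$ for every agent $j$ that is still live at that step (disutilities only grow, so $d(A_j)$ at that step is at most its final value). For any feasible $S \subseteq A_i$ this yields EF1 of $i$ towards such a $j$: if $c \in S$ delete $c$, otherwise $S \subseteq A_i \setminus \{c\}$ already has disutility at most $d(A_j)$. Symmetrically, an agent $\ell$ that is \emph{removed} from the live set had minimum disutility at the moment of removal and is frozen thereafter while everyone else's disutility only increases; hence $d(A_\ell) \le d(A_m)$ for every $m$, so $\ell$ is in fact fully envy-free towards all agents. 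The same min-disutility bound also shows that adding a chore to the selected agent preserves EF1 towards every still-live agent. Consequently the only comparisons that can fail are those directed \emph{towards an agent that has already left the live set}, and, analogously, the envy of the housekeeper towards an agent.

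The heart of the argument is therefore the comparison of a bundle $A_m$ (or the housekeeper's leftover bundle $A_{n+1}$) against the bundle $A_\ell$ of an agent that was removed (resp. against any agent $A_k$ at termination). Here I would use identical budgets decisively. When $\ell$ is removed, no still-available chore fits into her bundle, so every chore allocated after that moment, and every chore the housekeeper keeps forever, has size strictly greater than $B - s(A_\ell)$. Consequently, if a feasible subset $S$ (with $s(S) \le B$) contains two or more such ``large'' chores, then $2\bigl(B - s(A_\ell)\bigr) < s(S) \le B$, forcing $s(A_\ell) > B/2$; and if it contains at most one, EF1 towards $\ell$ is immediate by deleting that chore. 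Combining this size bound with the density ordering enforced by Line~\ref{alg:line:tie}, namely that each such large chore was passed over by $\ell$ while it was still feasible for her and hence is no denser than the corresponding size-prefix of $A_\ell$, I would charge the chores of $S$, minus its single costliest chore $c^*$, injectively to chores of $A_\ell$ of at least equal disutility, giving $d(S \setminus \{c^*\}) \le d(A_\ell)$. The same two ingredients (the size lower bound from removal under a common budget, and density domination from densest-first) settle the housekeeper comparison verbatim, since at termination every leftover chore is infeasible for every agent.

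The step I expect to be the main obstacle is precisely this charging in the regime $s(A_\ell) > B/2$, where a feasible subset may legitimately contain several large chores together with smaller ones, so a naive ``remove the last chore'' bound is insufficient. Making the injection from $S \setminus \{c^*\}$ into $A_\ell$ rigorous requires carefully interleaving the size thresholds $B - s(c)$ with the densest-first order of $A_\ell$ and checking that the prefix of $A_\ell$ that dominates a given $c \in S$ in density also carries enough total size to absorb it. The clean special case where every chore of $A_\ell$ is denser than every chore of $S$ already illustrates the computation: there $d(S \setminus \{c^*\}) \le \rho_{\max}(S)\,\bigl(B - s(c^*)\bigr) < \rho_{\max}(S)\,s(A_\ell) \le d(A_\ell)$, using $s(c^*) > B - s(A_\ell)$, and the general case amounts to localizing this inequality prefix by prefix.
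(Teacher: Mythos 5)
Your opening reduction is sound, and in one respect it is sharper than the paper's own write-up: you correctly isolate the envy of an agent (or of the housekeeper) towards an agent $\ell$ that has already been removed from the live set as the only comparisons that the min-disutility bookkeeping does not settle. But the proposal does not prove the theorem, because the step you yourself flag as ``the main obstacle'' --- charging $S\setminus\{c^*\}$ into $A_\ell$ --- \emph{is} the theorem in the hard case, and it is left open. Moreover, one of the claims you use to delimit that hard case is wrong: in the agent-to-agent direction, if $S\subseteq A_i$ contains at most one chore $q$ allocated after $\ell$'s removal, deleting $q$ is \emph{not} immediate. The pre-removal part $P=S\setminus\{q\}$ of $i$'s bundle is only guaranteed to satisfy EF1 (not EF) towards $\ell$ at the moment $\ell$ was frozen --- $\ell$ had \emph{minimum} disutility among live agents then, so $d(P)\ge d(A_\ell)$ is entirely possible --- and Definition~\ref{def:EF1:chores} permits only one deletion in total, so you cannot delete both $q$ and the offending chore of $P$.

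The paper closes the housekeeper case without any prefix-by-prefix interleaving, and the device is worth internalizing: take $c\in S$ of \emph{maximum density} (not maximum disutility) and set $W_c=\{h\in A_k : \rho(h)\ge \rho(c)\}$. Since the algorithm is densest-first and bundles only grow, $W_c$ is the initial segment of $A_k$ in allocation order, and at the moment agent $k$ held exactly $W_c$ and was next served (or was frozen), $c$ was still available but not taken, so $s(W_c\cup\{c\})>B$. Hence $s(W_c)>B-s(c)\ge s(S\setminus\{c\})$, and
\begin{equation*}
d(A_k)\;\ge\; d(W_c)\;\ge\;\rho(c)\,s(W_c)\;>\;\rho(c)\,s(S\setminus\{c\})\;\ge\; d(S\setminus\{c\}),
\end{equation*}
the last step holding because every chore of $S$ has density at most $\rho(c)$. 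This single comparison set makes your ``general case'' collapse to your ``clean special case'' (with $W_c$ in place of $A_\ell$), so the interleaving you anticipate is unnecessary --- but choosing the densest rather than the costliest chore to delete is essential. For the agent-to-agent direction the paper instead runs a short induction on the min-disutility selection rule, using identical budgets only to argue that the whole bundle is the relevant subset; your decomposition exposes that this induction is silent on envy towards already-frozen agents, which is a legitimate observation about the paper but does not rescue your own unfinished charging argument.
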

\begin{proof}
Let $B$ be the identical budget. 
    We first show the EF1 property between agents.
    Consider any agent $i \in [n]$.
    We will show that at each iteration of the \textbf{while} loop, the envy agent $i$ has towards any other agent disappears after dropping a new chore from her bundle.
    Note that since agents have identical budgets, any feasible bundle for agent $i$ will be feasible for any other agent $i' \in [n]$ as well.
    Now, the claim is clearly true for the initial allocation $\mathcal{A}$ with $A_i = \emptyset$ for all $i \in [n]$. 
    Assume that the claim holds at some iteration, just before agent $i^*$ is allocated a new chore $c^*$.
    The new allocation that assigns $c^*$ to $i^*$ is EF1 since $i^*$ does not envy any other agent if we remove the chore $c^*$ from her bundle.

    Next, we show the housekeeper is EF1 towards agent $i$.
    Consider any subset of chores $S \subseteq A_{n+1}$ with $s(S) \leq B$.
    Let $c \in S$ be a chore with maximum density among the chores in $S$, i.e., 
    $c \in \argmax_{c' \in S} \rho(c')$.
    Also, let $W_c := \{ h \in A_i \mid \rho(h) \geq \rho(c)\}$.
    Note that $W_c \neq \emptyset$; otherwise, $A_i = \emptyset$ and chore $c$ with $s(c) \leq s(S) \leq B$ would have been added to $i$'s bundle.

    By a similar reason, we have that $s(W_c \cup \{c\}) > B$.
    Since $s(S \setminus \{c\}) \leq B - s(c)$, this means $s(W_c) > B  - s(c) \geq s(S \setminus \{c\})$.
    Thus, we get that $d(A_i) \geq d(W_c) \geq s(W_c) \times \rho(c) > s(S \setminus \{c\}) \times \rho(c) \geq d(S \setminus \{c\})$, as desired.
    $\hfill \square$
\end{proof}

\subsection{Two Agents}
The last special case that we consider is the setting with two agents, which is often studied in the fair allocation literature
\cite{BramsTa96,Budish11,CaragiannisKuMo19,GanLiWu2023,GarbeaGkTa2023,igarashi2023repeated}.
In fact, the case of two agents is particularly important in the allocation of chores, given that a key application of our results is the domain of household chore division, and many households consist of two adults~\cite{igarashi2023choreapp}.

We propose an algorithm (Algorithm \ref{alg:ef1_twoagents}, which is similar to Algorithm 4 in~\cite{GanLiWu2023}) that returns an EF1 allocation. The algorithm uses Algorithm \ref{alg:ef1_densestfirst} as a subroutine and runs in polynomial time.

\begin{algorithm}
\caption{Computes an EF1 allocation for two agents}
\label{alg:ef1_twoagents}
\SetInd{0.8em}{0.3em}
\textbf{Input }disutility function $d$, size function $s$, budgets $(B_1, B_2)$ with $B_1\le B_2$\;
Run Algorithm~\ref{alg:ef1_densestfirst} on both agents with identical budget $B_1$ and obtain allocation $\mathcal{A}' = (A'_1, A'_2,A'_3)$\;
\If{$d(A'_1) \geq d(A'_2)$}{
    $A_1 \leftarrow A'_1$\;
}
\Else{
    $A_1 \leftarrow A'_2$\;
}
Run Algorithm~\ref{alg:ef1_densestfirst} on agent $2$ with budget $B_2$ and set of chores $C \setminus A_1$. Let the output be $(A_2, A_3)$\;
\Return allocation $(A_1, A_2, A_3)$\;
\end{algorithm}

\begin{theorem} \label{thm:ef1_twoagents}
    When there are two agents, Algorithm \ref{alg:ef1_twoagents} returns an \emph{EF1} allocation in polynomial time.
\end{theorem}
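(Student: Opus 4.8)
The plan is to reduce the analysis of Algorithm~\ref{alg:ef1_twoagents} to the structural guarantees we already have for \textsc{DensestFirst}. Write $\mathcal{A}'=(A'_1,A'_2,A'_3)$ for the allocation produced by the equal-budget call in Line~2, let $A_1$ be the bundle chosen in Lines~3--6 (the one of $A'_1,A'_2$ of larger disutility), and let $D$ denote the other one, so that agent~2 refills on the pool $P:=C\setminus A_1=D\cup A'_3$ with the larger budget $B_2$. Feasibility is immediate: $A_1$ is one of the feasible equal-budget bundles, so $s(A_1)\le B_1$, and the single-agent call in Line~7 guarantees $s(A_2)\le B_2$; the running time is polynomial since the algorithm makes two calls to \textsc{DensestFirst} plus $\mathcal{O}(1)$ extra work, using Theorem~\ref{thm:densestfirst_runtime}. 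It then remains to verify the four EF1 obligations of Definition~\ref{def:EF1:chores}: the two agents towards each other, and the housekeeper towards each agent.

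First I would record the facts the two \textsc{DensestFirst} calls hand us. By Theorem~\ref{thm:ef1_identicalbudgets}, $\mathcal{A}'$ is EF1 for the two identical-$B_1$ agents; since $s(A_1)\le B_1$ this yields a chore $c_0\in A_1$ with $d(A_1\setminus\{c_0\})\le d(D)\le d(A_1)$. I would also use two structural properties of the greedy fills: (i) each agent acquires chores in non-increasing order of density, so the high-density part $\{h\in A_i:\rho(h)\ge\rho\}$ of a bundle is a prefix of its acquisitions; and (ii) both $A_1$ and $A_2$ are maximal for their budgets, i.e.\ every leftover chore fails to fit ($s(A_1\cup\{e\})>B_1$ for $e\in A'_3$, and $s(A_2\cup\{c\})>B_2$ for $c\in A_3$). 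Property~(ii), together with the averaging bound $d(X)\ge\rho(c)\,s(X)$ for any $X$ all of whose chores are at least as dense as $c$, is exactly the ingredient used in the housekeeper arguments of Theorems~\ref{thm:ef1_identicallydense} and~\ref{thm:ef1_identicalbudgets}. For the housekeeper towards agent~2 this transfers verbatim to the single-agent fill of Line~7: given $S\subseteq A_3$ with $s(S)\le B_2$ and its densest chore $c^*$, scanning $P$ in density order shows the prefix $W=\{h\in A_2:\rho(h)\ge\rho(c^*)\}$ satisfies $s(W\cup\{c^*\})>B_2$, whence $d(A_2)\ge\rho(c^*)\,s(W)>\rho(c^*)(B_2-s(c^*))\ge d(S\setminus\{c^*\})$.

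The substantive work is the interaction between the two different budgets. For agent~1 towards agent~2 it suffices to show $d(A_2)\ge d(A_1\setminus\{c_0\})$, for which I would argue that agent~2's refill recovers (up to the EF1 slack) the disutility of $D$: scanning $P$ in density order, a chore of $D$ is skipped only after agent~2 has accumulated strictly denser chores filling its budget, and since every skipped-past chore of $A'_3$ is \emph{large} by Property~(ii), the accumulated high-density mass forces $d(A_2)$ up to dominate $d(A_1\setminus\{c_0\})\le d(D)$. The remaining two obligations --- agent~2 towards agent~1 and the housekeeper towards agent~1 --- are the same density argument, now measured against $A_1$: for $S$ with densest chore $c^*$ one wants $s(W\cup\{c^*\})>B_1$ with $W=\{h\in A_1:\rho(h)\ge\rho(c^*)\}$. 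The clean case is $c^*\in A'_3$, where maximality of $A_1$ gives this directly. I expect the \textbf{main obstacle} to be precisely the complementary case $c^*\in D$: such a $c^*$ was placed with the \emph{other} agent during the equal-budget run and need not fail to fit $A_1$, so the maximality shortcut is unavailable and one must instead exploit the min-disutility selection rule of \textsc{DensestFirst} to certify that the $A_1$-agent already carried enough high-density mass. Controlling this refill-versus-pool interaction --- so that agent~2 is loaded enough that agent~1 does not envy it, yet not so overloaded that it envies agent~1 by more than one chore --- is the crux of the proof.
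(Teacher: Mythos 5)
Your overall plan (feasibility, runtime via Theorem~\ref{thm:densestfirst_runtime}, the four envy obligations, and the housekeeper-versus-agent-2 argument) matches the paper's proof, but both substantive directions are left with genuine gaps. For agent 1 towards agent 2 you assert that the refill ``recovers'' $d(D)$, but the sketch you give does not pin down why. The paper's argument isolates the \emph{first} chore $c^*\in A_2\setminus A'_2$ added during the refill (if there is none, $A_2=A'_2$ and the claim is immediate) and the bundle $X\subseteq A'_2$ held just before it; the crucial inequality is $s(X\cup\{c^*\})>B_1\ge s(A'_2)$, which holds because $c^*$ lies in $A'_3$ and was rejected by the budget-$B_1$ run. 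Since every chore of $A'_2\setminus X$ has density at most $\rho(c^*)$, this size gap forces $d(X\cup\{c^*\})>d(A'_2)$, hence $d(A_2)>d(A'_2)\ge d(A_1\setminus\{c\})$ for the last chore $c$ added to $A'_1$. Note that your Property~(ii) only records maximality of the \emph{chosen} bundle $A_1$ against $A'_3$; what is actually needed here is a rejection certificate relative to (a prefix of) the \emph{other} equal-budget bundle $D=A'_2$, which your stated toolkit does not supply.

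More importantly, the case you flag as the unresolved ``main obstacle'' ($c^*\in D$) is not an obstacle in the paper's proof, because the paper never compares $S$ against a density prefix of $A_1$. For any $S\subseteq A_2\cup A_3=A'_2\cup A'_3$ with $s(S)\le B_1$, it first disposes of $S\subseteq A'_2$ outright via $d(S)\le d(A'_2)\le d(A'_1)=d(A_1)$ (no chore removed), and otherwise removes the densest chore $c$ of $S\setminus A'_2$ rather than of $S$: such a $c$ lies in $A'_3$, so the equal-budget maximality of the first run applies to $W_c=\{h\in A'_2:\rho(h)\ge\rho(c)\}$ and yields $s(W_c\cup\{c\})>B_1$, whence $s(W_c)>B_1-s(c)\ge s(S\setminus\{c\})$ and $d(S\setminus\{c\})\le\rho(c)\,s(S\setminus\{c\})<d(W_c)\le d(A'_2)\le d(A_1)$. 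In other words, the fix is to change both the reference bundle (from $A_1$ to $A'_2$) and the removed chore (densest in $S\setminus A'_2$, not densest in $S$); no appeal to the min-disutility selection rule is needed. As written, your proof leaves precisely this case open, so it does not go through without this additional idea.
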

\begin{proof}
    The fact that Algorithm \ref{alg:ef1_twoagents} runs in polynomial time is easy to observe, given that Algorithm \ref{alg:ef1_densestfirst} runs in polynomial time (as is proven in Theorem \ref{thm:densestfirst_runtime}), and the other operations take polynomial time as well.

    Next, we prove the correctness of the algorithm.
    Without loss of generality, suppose that $A_1=A'_1$.
    Note that $B_1 \leq B_2$.
    We first show that agent 1 does not envy agent 2 after dropping a chore from his own bundle.

    Observe that $A'_2$ is produced by running Algorithm \ref{alg:ef1_densestfirst} on a single agent (agent 2) with budget $B_2$ on items $C \setminus A_1$.
    If $A_2\setminus A'_2=\emptyset$, then
    $A_2 = A'_2$ (since $B_1 \leq B_2$) and the
    result follows trivially, so assume that this is not the case.
    Let $c^*$ be the first chore in $A_2\setminus A'_2$ allocated to agent 2.
    Let $X$ be agent 2's bundle right before the algorithm allocates $c^*$.
    Then, since Algorithm \ref{alg:ef1_densestfirst} allocates chores in a densest-first fashion, it must be that $X \subseteq A'_2$.
    Since $c^* \in A_2$ and $c^* \notin A'_2$, we have 
\begin{equation*}
        s(A'_2) \leq B_1 < s(X \cup \{c^*\}) \leq B_2.
    \end{equation*}
    Since chores in $A_2'\setminus X$ have density at most that of $c^*$ and $X \subseteq A'_2$, we get 
    $d(X \cup \{c^*\}) > d(A_2')$, and
    \begin{equation*}
        d(A_2) \geq d(X \cup \{c^*\}) > d(A_2').
    \end{equation*}
    Then, since $d(A_1' \setminus \{c\}) \leq d(A_2')$, where $c$ is the last chore added to $A'_1$, we get that
    \begin{equation*}
        d(A_1 \setminus\{c\}) = d(A'_1 \setminus\{c\}) \leq d(A'_2) < d(A_2),
    \end{equation*}
    as desired.

    Next, we show that agent 2 as well as the housekeeper do not envy agent 1 by more than one chore. To prove this, consider any subset $S \subseteq A_2 \cup A_3$ with $s(S) \leq B_1$. If $S \subseteq A_2'$, then $d(S) \leq d(A'_2) \leq d(A'_1)=d(A_1)$. Thus, assume that $S \setminus A_2' \neq \emptyset$. Let $c \in S \setminus A_2'$ be a chore with maximum density $\rho_c$ among the chores in $S \setminus A_2'$.
    
    Let $W_c = \{\, j \in A'_2 \mid \rho_j \geq \rho_c \,\}$. Note that $W_c \neq \emptyset$; otherwise, $A'_2=\emptyset$ and chore $c$ with $s(c) \leq s(S) \leq B_1$ would have been added to $A'_2$. 
    Since $c$ is not included in $A'_2$, we have $s(W_c \cup \{c\}) > B_1$. Moreover, since $s(S \setminus \{c\}) \leq B_1-s(c)$, this means that
    $$
    s(W_c)> B_1-s(c) \geq s(S \setminus \{c\}). 
    $$
    Thus, we get 
    \[
    d(A_1) \geq d(A_2') \geq d(W_c) \geq s(W_c) \times \rho_c > s(S \setminus \{c\}) \times \rho_c  \geq d(S \setminus \{c\}).
    \]
Finally, the EF1 property by the housekeeper towards agent 2 can be easily verified due to Theorem~\ref{thm:ef1_identicalbudgets}.  
    $\hfill \square$
\end{proof}

\section{Computing EF Allocations for Divisible Chores}
We will now shift our focus to divisible chores.
Recently, Barman et al.~\cite{BarmanKhShSr2023} showed the existence of EF allocations for divisible goods under generalized assignment constraints (refer to Section 2).
We show that the same result holds for chores in the budget-constrained setting.\footnote{Again, we only focus on the identical-size setting due to the unintuitive nature of definitions for chores under generalized assignment constraints; refer to Section \ref{sec:preliminaries}.}
As the proof of the result is considerably involved and in light of space constraints, we defer it to the appendix. 
\begin{theorem} \label{thm:divisible_ef}
    An \emph{EF} allocation for divisible chores exists, and can be computed in polynomial time.
\end{theorem}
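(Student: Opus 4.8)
\textbf{Proof proposal for Theorem~\ref{thm:divisible_ef}.}
The plan is to reduce the problem to a convex (in fact linear) feasibility question and then exhibit an explicit ``water-filling'' construction. First I would observe that, since sizes and disutilities are additive and identical across agents, any envy-free allocation must equalise disutilities among the agents in $[n]$ in a thresholded sense: an agent $j$ with a small budget $B_j$ can only be envied with respect to subsets of size at most $B_j$, so the binding constraints are of the form ``the cheapest-density mass of size $B_j$ sitting in some other bundle has disutility at most $d(X_j)$''. Because all chores have the same density profile only through $\rho(\cdot)$, the natural order in which to think about the chores is by decreasing density: the housekeeper should hold the \emph{least} dense mass, and among the agents, the worst any agent $i$ can be envied is captured by giving agent $j$ the densest $B_j$-sized chunk of $X_i$. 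I would therefore sort the chores $c_1,\dots,c_m$ so that $\rho(c_1)\ge \rho(c_2)\ge\cdots\ge\rho(c_m)$ and treat the instance as a single ``chore interval'' of total size $s(C)$ with a non-increasing density function along it.

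The core construction I would carry out is a greedy threshold / parametric assignment: process the chores from densest to least dense, and at each point maintain a target disutility level; fractionally assign the current sliver of chore mass to the agent in $[n]$ who currently has the least accumulated disutility, subject to that agent's remaining budget $B_i - s(X_i)$, breaking ties so that agents fill up in a balanced way (this is the continuous analogue of {\sc DensestFirst}, and indeed I would try to realise it as the limit of running Algorithm~\ref{alg:ef1_densestfirst} on finer and finer subdivisions of each chore, which immediately gives polynomial-time computability via an LP or via a closed-form level computation). Once every agent is either at their budget or has disutility equal to the common level, the remaining (least dense) mass goes to the housekeeper $X_{n+1}$. The key invariant to track is: at every moment of the process, for the partial allocation built so far, every agent $i\in[n]$ either has $s(X_i)=B_i$ or $d(X_i)$ equals the current global minimum disutility, and every chunk still unassigned has density at most that of everything already assigned.

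To verify EF from this invariant I would check three envy directions exactly as in the proofs of Theorems~\ref{thm:ef1_identicalbudgets} and \ref{thm:ef1_twoagents}, but now with no ``$\setminus\{c\}$'' slack needed because divisibility removes the integrality gap. For agent $i$ envying agent $j$: any $Y\le X_i$ with $s(Y)\le B_j$ has density at most $\max\{\rho(c):c\in X_i\}$, and since agent $i$ stopped accumulating mass only when $s(X_i)=B_i$ or when $d(X_i)$ was minimal, one shows $d(Y)\le s(Y)\cdot\bar\rho \le d(X_j)$ by comparing against the mass of equal or higher density that $j$ holds — the identical-density-interval structure makes $d(Y)\le d(X_j)$ follow from $s(Y)\le B_j \le$ (size of $j$'s high-density mass), mirroring the $W_c$ argument. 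For the housekeeper: the housekeeper holds only the least-dense tail, so any $Y\le X_{n+1}$ of size at most $B_j$ has disutility bounded by $s(Y)\cdot \rho_{\min}$ over its support, which is dominated by $d(X_j)$ since $X_j$ contains at least that much mass at weakly higher density. For polynomial time, I would note the assignment is piecewise linear in the ``level'' parameter with breakpoints only at the $m$ density values and the $n$ budget exhaustion events, so it can be computed by sorting plus $O((n+m))$ arithmetic steps on the rationals, or simply written as the solution of an explicit LP.

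The main obstacle I anticipate is making the ``density interval'' reduction fully rigorous when several chores share a density value or when budgets are exhausted in the middle of a chore: one has to argue that the tie-breaking (which fraction of a borderline-density chore goes to which agent, and how much is handed to the housekeeper) can always be chosen to simultaneously hit all the envy inequalities, i.e., that the system of constraints ``$d(X_i)$ all equal on non-saturated agents, each $Y$-of-size-$B_j$ bound satisfied, housekeeper gets the residual'' is feasible. I expect this to come down to a monotonicity/continuity argument: as the common disutility level rises, total assigned mass increases continuously and monotonically, so there is a unique level at which it equals $s(C)$ minus whatever must go to the housekeeper, and at that level all envy constraints are tight or slack in the right direction. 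Handling the housekeeper's own envy-freeness (the reason $i$ ranges over $[n+1]$) is the subtlest point, and I would isolate it as a lemma stating that giving the housekeeper exactly the least-dense residual mass is without loss of generality.
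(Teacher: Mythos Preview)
Your water-filling construction is correct and yields an EF allocation, but it is a genuinely different route from the paper's. The paper does \emph{not} run a continuous analogue of \textsc{DensestFirst}. Instead it introduces a dummy zero-disutility chore of huge size, defines a \emph{density domination} (DD) property (each agent receives only a prefix of chores in her density order; for every ``internal'' chore she receives the minimum fraction among all agents; budgets are tight; internal chores are fully allocated), proves that DD implies EF, and then locates a DD allocation by iterating over a parameterised family of linear programs $LP_1(\tau), LP_2(\tau)$, incrementing one coordinate of $\tau$ at a time. That machinery is heavier, but it buys generality: as the appendix notes, the DD/LP argument goes through even with \emph{subjective} disutilities $d_i$, which your water-filling cannot handle (there is no single density order to process). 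In the identical-disutility setting of the main text, your water-filling output actually \emph{is} a DD allocation, so the two methods compute the same thing; your route is just more direct.

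There is, however, a real gap in your envy verification for agents. The chain $d(Y)\le s(Y)\cdot\bar\rho\le d(X_j)$ fails for the only natural reading of $\bar\rho$: if $\bar\rho=\max\{\rho(c):c\in\mathrm{supp}(X_i)\}$ then $\bar\rho$ is the global maximum density (since every agent receives a piece of the densest chore), and $s(Y)\cdot\bar\rho$ can vastly exceed $d(X_j)$. The $W_c$ argument you invoke only delivers EF1, not EF, so ``mirroring'' it is not enough. The fix is to use a sharper structural fact that water-filling gives you for free: if $B_j\le B_i$ then at the moment $j$'s budget fills, the partial bundle of $i$ equals $X_j$ exactly, and thereafter $i$ only accrues mass of weakly lower density. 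Hence $X_j\le X_i$ componentwise, and the densest size-$B_j$ sub-bundle of $X_i$ has \emph{precisely} the composition of $X_j$, so its disutility equals $d(X_j)$; every other $Y\le X_i$ with $s(Y)\le B_j$ is then dominated. The housekeeper direction you sketched (all housekeeper mass has density at most the density at the last agent drop-out, while every agent's bundle is entirely at that density or higher and has full size $B_j$) is correct as stated.
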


\section{Conclusion}
In this work, we propose a model of allocating indivisible or divisible chores under budget constraints.
For indivisible chores, we prove the existence of EFX allocations. Our proof is constructive and provides a pseudopolynomial time algorithm for finding EFX allocations.
Moreover, we put forward a polynomial-time algorithm that returns an EF2 allocation for general instances, and EF1 allocations in five special cases---when chores are identically-valued, identically-sized, identically-dense, when agents have identical budgets, and the case of two agents.

For divisible chores, we show the existence of polynomial-time computable EF allocations in general.

Possible future directions include exploring definitions of envy-freeness when agents have subjective size or disutility functions (i.e., generalized assignment constraints for chores), or considering approximate EF guarantees under non-additive size or disutility functions (while maintaining the identical size/disutility function assumption). 
Other avenues for future work include considering an online model \cite{AAGW15,BKP+24,elkind2024temporalfairdivision}, or looking into the more general class of \emph{submodular} valuations \cite{ghodsi2022submodular,montanari2024submodular,SuksompongTe23,uziahu2023submodular} for the budget-constrained setting.
Another exciting direction is to extend our formal model and results to mixed manna, i.e., items that are viewed as goods by some agents and as chores by others \cite{AzizCaIg22chores};
anecdotally, this model may be appropriate for some household tasks, such as cooking, gardening, or spending time with children or animals.

\bibliographystyle{splncs04}
\bibliography{abb,bib}

\newpage 

\appendix

\begin{center}
\Large
\textbf{Appendix}
\end{center}

\vspace{2mm}

\section{Proof of Theorem \ref{thm:ef2}}
We first introduce some useful notation.
For any subset of chores $S = \{c_1, \dots, c_k\}$ indexed in decreasing order of density, let $S^{(i)}$ be the subset of the $i$ densest chores in $S$.
Moreover, for any nonnegative number $B \leq s(S)$, let $P = \{c_1,\dots,c_{\ell-1} \}$ be the (cardinality-wise) largest prefix of $S$ such that $s(P) \leq B$.
Then we define $S^{[B]} := P \cup \{ \alpha \cdot c_\ell\}$ where $\alpha = \frac{B - s(P)}{s(c_\ell)}$.
If $B \geq s(S)$, then $S^{[B]} = S$.

Consider the following two observations which immediately follow from the construction of the algorithm. Here, let $A_a = \{c_1,\dots,c_k\}$ and $A_b = \{e_1,\dots, e_\ell\}$ denote the sets of chores assigned to agents $a$ and $b$ respectively.
        Then at the end of the algorithm, the chores in each of these sets are indexed in non-increasing order of density.
\begin{description}
    \item[Observation 1.] 
        For indices $i < |A_a|$ and $j < |A_b|$, if $d(A_a^{(i)}) < d(A_b^{(j)})$ and $s(A_a^{(i)} + e_{j+1}) \leq B_a$, then $\rho(c_{i+1}) > \rho(e_{j+1})$.
    \item[Observation 2.] 
        If, for any index $j \in [|A_b| - 1]$, the size $s(A_a + e_{j+1}) \leq B_a$, then $d(A_a) \geq d(A_b^{(j)})$.
\end{description}
    

    Next, we define the following function:
    \begin{equation}
        \textsc{EFCount}(X,Y) := \min_{R \subseteq X : d(X \setminus R) \leq d(Y)} |R|.
    \end{equation}
    Intuitively, the function returns the number of chores that need to be removed from $X$ in order to achieve envy-freeness by $X$ towards $Y$.

    \begin{lemma} \label{lem:efcount_increment1}
        For any subset of chores $X$ and $Y$ along with any index $i < |X|$, let $T = s(X^{(i)})$ and $\widehat{T} = s(X^{(i+1)})$.
        Then, 
        $$
        \textsc{EFCount}(X^{[\widehat{T}]},Y^{[\widehat{T}]}) \leq \textsc{EFCount}(X^{[T]},Y^{[T]})+1.
        $$
    \end{lemma}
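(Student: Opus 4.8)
The plan is to reduce both sides of the claimed inequality to the purely integral case, and then produce an explicit witness set. First I would observe that, because $T = s(X^{(i)})$ and $\widehat{T} = s(X^{(i+1)})$ are exactly the sizes of prefixes of $X$ in decreasing‑density order, the truncation operator applied to $X$ collapses to those prefixes: the largest prefix of $X$ of size at most $T$ is precisely $X^{(i)}$, and the fractional coefficient $\alpha$ vanishes (there is no room for any part of $c_{i+1}$ since $s(c_{i+1})>0$), so $X^{[T]} = X^{(i)}$; similarly $X^{[\widehat{T}]} = X^{(i+1)}$. Hence the lemma is equivalent to
\[
\textsc{EFCount}\bigl(X^{(i+1)}, Y^{[\widehat{T}]}\bigr) \;\le\; \textsc{EFCount}\bigl(X^{(i)}, Y^{[T]}\bigr) + 1,
\]
and on both sides the first argument is an integral bundle, so $\textsc{EFCount}$ simply counts how many chores are removed.

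Next I would record the only monotonicity fact needed on the $Y$‑side: since $\widehat{T} = T + s(c_{i+1}) \ge T$, the fractional bundle $Y^{[\widehat{T}]}$ dominates $Y^{[T]}$ coordinatewise (enlarging the size threshold only increases each chore's fraction in the greedy densest‑first truncation), and as all densities, hence disutilities, are nonnegative this gives $d(Y^{[\widehat{T}]}) \ge d(Y^{[T]})$. This also subsumes the boundary situations where $T \ge s(Y)$ (both bundles equal $Y$) or $T < s(Y) \le \widehat{T}$.

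The core step is then a one‑line witness construction. Let $r = \textsc{EFCount}(X^{(i)}, Y^{[T]})$ and choose $R \subseteq X^{(i)}$ with $|R| = r$ and $d(X^{(i)} \setminus R) \le d(Y^{[T]})$. Put $R' = R \cup \{c_{i+1}\}$. Since $c_{i+1} \notin X^{(i)} \supseteq R$, we get $R' \subseteq X^{(i+1)}$ with $|R'| = r+1$ and $X^{(i+1)} \setminus R' = X^{(i)} \setminus R$, so
\[
d\bigl(X^{(i+1)} \setminus R'\bigr) = d\bigl(X^{(i)} \setminus R\bigr) \le d\bigl(Y^{[T]}\bigr) \le d\bigl(Y^{[\widehat{T}]}\bigr),
\]
which certifies $\textsc{EFCount}(X^{(i+1)}, Y^{[\widehat{T}]}) \le r+1$, as required. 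I do not expect a genuine obstacle here; the only mild subtlety is the bookkeeping in the first paragraph—verifying that $X^{[T]} = X^{(i)}$ and $X^{[\widehat{T}]} = X^{(i+1)}$ so that the first argument of each $\textsc{EFCount}$ is genuinely integral—together with the coordinatewise domination $Y^{[T]} \le Y^{[\widehat{T}]}$; after that, adjoining the single chore $c_{i+1}$ delivers the ``$+1$'' immediately.
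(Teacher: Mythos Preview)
Your proposal is correct and follows essentially the same argument as the paper: take a minimum witness $R$ for $\textsc{EFCount}(X^{[T]},Y^{[T]})$, adjoin the single new chore $c_{i+1}$ to obtain $R'$, and use $d(Y^{[T]})\le d(Y^{[\widehat{T}]})$ to certify $\textsc{EFCount}(X^{[\widehat{T}]},Y^{[\widehat{T}]})\le |R|+1$. The only difference is that you spell out explicitly why $X^{[T]}=X^{(i)}$ and $X^{[\widehat{T}]}=X^{(i+1)}$, which the paper leaves implicit.
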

    \begin{proof}
        Let $\ell = \textsc{EFCount}(X^{[T]}, Y^{[T]})$.
        Therefore, by definition, there exists a size-$\ell$ subset $R \subseteq X^{[T]}$ with the property that $d(X^{[T]} \setminus R) \leq d(Y^{[T]})$.

        Define subset $R' = R \cup \{ e_{i+1}\}$ where $e_{i+1}$ is the chore in the set $X^{(i+1)} \setminus X^{(i)}$.
        For this set $R'$ whereby $|R'| = \ell+1$, we have
        \begin{equation*}
            d(X^{[\widehat{T}]} \setminus R') = d(X^{[T]} \setminus R) \leq d(Y^{[T]}) \leq d(Y^{[\widehat{T}]}),
        \end{equation*}
        which implies $\textsc{EFCount}(X^{[\widehat{T}]}, Y^{[\widehat{T}]}) \leq \ell+1$, giving us the result as desired.
        $\hfill \square$
    \end{proof}

    \begin{lemma} \label{lem:envycount_exactly2}
        Let $X$ and $Y$ be any subset of chores with the property that 
        $$
        \textsc{EFCount}(X,Y) \geq 2.
        $$
        Then, there exists an index $t \leq |X|$ such that with $T = s(X^{(t)})$ it holds that $\textsc{EFCount}(X^{[T]},Y^{[T]}) = 2$.
    \end{lemma}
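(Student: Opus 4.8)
The plan is to use a discrete intermediate-value argument along the sequence of cumulative prefixes of $X$, leveraging Lemma~\ref{lem:efcount_increment1}. Define $t_0 = 0$ and, for $t \in \{1, \dots, |X|\}$, let $T_t = s(X^{(t)})$; also set $T_0 = 0$ so that $X^{[T_0]} = \emptyset$. Consider the function $f(t) := \textsc{EFCount}(X^{[T_t]}, Y^{[T_t]})$ over $t = 0, 1, \dots, |X|$. At the endpoint $t = |X|$ we have $X^{[T_{|X|}]} = X$ (since $T_{|X|} = s(X) \ge s(X)$) and likewise, provided $s(X) \le s(Y)$, $Y^{[T_{|X|}]} = Y$; hence $f(|X|) = \textsc{EFCount}(X, Y) \ge 2$ by hypothesis. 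At the other end, $f(0) = \textsc{EFCount}(\emptyset, \emptyset) = 0$ since the empty set has disutility $0$.

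The key step is then that $f$ increases by at most $1$ at each step: for $t < |X|$, applying Lemma~\ref{lem:efcount_increment1} with index $i = t$ (so that $T = T_t$ and $\widehat T = T_{t+1}$) gives $f(t+1) \le f(t) + 1$. Since $f(0) = 0$, $f(|X|) \ge 2$, and $f$ never jumps by more than $1$, there must exist some index $t$ with $1 \le t \le |X|$ such that $f(t) = 2$; take the smallest such $t$ (the existence of a crossing point of the value $2$ is exactly the discrete intermediate value property). This $t$ is the desired index, with $T = s(X^{(t)})$ witnessing $\textsc{EFCount}(X^{[T]}, Y^{[T]}) = 2$.

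The main subtlety — and the step I would be most careful about — is the boundary behavior and the implicit size assumptions. In particular I need $Y^{[T_{|X|}]} = Y$ to conclude $f(|X|) = \textsc{EFCount}(X,Y)$; this requires $s(X) \le s(Y)$, which should hold in the intended application (or else one works with $Y^{[s(X)]}$ in place of $Y$ throughout, and the statement is understood accordingly). One should also double-check that Lemma~\ref{lem:efcount_increment1} is legitimately applicable for every $i < |X|$, i.e. that $\widehat T = s(X^{(i+1)})$ is well-defined, which it is since $i+1 \le |X|$. The only genuinely ``new'' content here beyond Lemma~\ref{lem:efcount_increment1} is recognizing the right monotone quantity and its endpoint values; the rest is a routine discrete IVT packaging, so I do not anticipate any serious obstacle, only the need to state the boundary conventions cleanly.
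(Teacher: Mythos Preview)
Your proposal is correct and essentially identical to the paper's proof: the paper defines the same function $H(t) = \textsc{EFCount}(X^{[s(X^{(t)})]}, Y^{[s(X^{(t)})]})$, observes $H(0)=0$, uses Lemma~\ref{lem:efcount_increment1} for the step bound, and applies a discrete IVT. Your flagged subtlety at $t=|X|$ is exactly what the paper handles, and your suggested fix is the one they use: since $d(Y) \ge d(Y^{[s(X)]})$ one has $\textsc{EFCount}(X,Y^{[s(X)]}) \ge \textsc{EFCount}(X,Y) \ge 2$, so $H(|X|) \ge 2$ without any assumption on $s(X)$ versus $s(Y)$.
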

    \begin{proof}
        The lemma follows from a discrete version of the Intermediate Value Theorem.
        For indices $t \in \{0,1,2,\dots, |X|\}$, define the function $h(t) := s(X^{(t)})$, i.e., $h(t)$ denotes the size of the $t$ densest chores in $X$.
        Extending this function, we consider the envy count at different size thresholds; in particular, write 
        \begin{equation*}
            H(t) := \textsc{EFCount}(X^{[h(t)]}, Y^{[h(t)]})
        \end{equation*}
        for each $t \in \{0,1,2,\dots, |X|\}$. Note that $H(0) = 0$. We will show that
        \begin{enumerate}[(i)]
            \item $H(|X|) \geq 2$, and
            \item the discrete derivative of $H$ is at most one, i.e., $H(t+1)-H(t) \leq 1$ for all $0 \leq t < |X|$.
        \end{enumerate}
    These properties of the integer-valued function $H$ implies that there necessarily exists an index $t^*$ such that $H(t^*) = 2$.
    This index satisfies the lemma.
    Therefore, we complete the proof by establishing properties (i) and (ii) for the function $H(\cdot)$.

    For the property (i), note that the definition of a prefix subset gives us
    \begin{equation*}
        d(Y) \geq d(Y^{[s(X)]}).
    \end{equation*}
    Hence, $\textsc{EFCount}(X,Y^{[s(X)]}) \geq \textsc{EFCount}(X,Y) \geq 2$ (from the assumption of this lemma).
    Since $h(|X|) = s(X)$, we have $H(|X|) \geq 2$.

    Property (ii) follows directly from Lemma \ref{lem:efcount_increment1}, completing the proof.
    $\hfill \square$
    \end{proof}

Intuitively, the following lemma states that if one adds more disutility to an agent that envies another by exactly $\ell \geq 1$ chores, then the $\textsc{EFCount}$ does not increase.
\begin{lemma} \label{lem:efcount_mainlemma}
    Let $Z$ and $Y$ be sets of chore along with two nonnegative thresholds $T, \widehat{T} \in \mathbb{R}_+$ and a positive integer $\ell$ with properties
    \begin{enumerate}[(i)]
        \item $\textsc{EFCount}(Z^{[\widehat{T}]}, Y^{[T]}) = \ell$, and
        \item $d(Z \setminus Z^{[\widehat{T}]}) \leq d(Y \setminus Y^{[T]})$. 
    \end{enumerate}
    Then $\textsc{EFCount}(Z,Y) \leq \textsc{EFCount}(Z[\widehat{T}],Y[T]) = \ell$.
\end{lemma}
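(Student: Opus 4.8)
The plan is to read off, from the witness set for hypothesis~(i), an explicit size-$\ell$ set of chores whose removal from $Z$ certifies $\textsc{EFCount}(Z,Y)\le \ell$. The only tools needed are additivity of $d$ on fractional bundles together with the two disjoint decompositions $Z = Z^{[\widehat T]}\uplus\bigl(Z\setminus Z^{[\widehat T]}\bigr)$ and $Y = Y^{[T]}\uplus\bigl(Y\setminus Y^{[T]}\bigr)$, which give $d(Z)=d(Z^{[\widehat T]})+d(Z\setminus Z^{[\widehat T]})$ and $d(Y)=d(Y^{[T]})+d(Y\setminus Y^{[T]})$.

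First I would fix $R\subseteq Z^{[\widehat T]}$ with $|R|=\ell$ and $d(Z^{[\widehat T]}\setminus R)\le d(Y^{[T]})$, which exists by hypothesis~(i); note that since $\ell\ge 1$ we have $Z^{[\widehat T]}\neq\emptyset$, so $\widehat T>0$, and $\ell$ does not exceed the number of chores appearing in $Z^{[\widehat T]}$, so such an $R$ is available. Recall that $Z^{[\widehat T]}$ is a density-ordered prefix of $Z$ consisting of finitely many fully-included chores plus at most one partially-included boundary chore $c^{\dagger}$ with fraction $\alpha\in[0,1)$. The candidate is $R':=R$, now read as a set of $\ell$ fully-included chores of $Z$; then $R'\subseteq Z$ and $|R'|=\ell$. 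To bound $d(Z\setminus R')$, distinguish two cases. If $c^{\dagger}\notin R$, every chore of $R$ is fully present in both $Z^{[\widehat T]}$ and $Z$, so $d(Z\setminus R')=d(Z)-\sum_{c\in R}d(c)=d(Z^{[\widehat T]}\setminus R)+d(Z\setminus Z^{[\widehat T]})$. If $c^{\dagger}\in R$, then promoting the $\alpha$-fraction of $c^{\dagger}$ to a full chore removes an additional $(1-\alpha)d(c^{\dagger})\ge 0$, so $d(Z\setminus R')=d(Z^{[\widehat T]}\setminus R)+d(Z\setminus Z^{[\widehat T]})-(1-\alpha)d(c^{\dagger})\le d(Z^{[\widehat T]}\setminus R)+d(Z\setminus Z^{[\widehat T]})$. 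In either case, combining with hypotheses~(i), (ii) and the additive decomposition of $d(Y)$ yields
\[
d(Z\setminus R')\;\le\;d(Z^{[\widehat T]}\setminus R)+d(Z\setminus Z^{[\widehat T]})\;\le\;d(Y^{[T]})+d(Y\setminus Y^{[T]})\;=\;d(Y),
\]
so $R'$ witnesses $\textsc{EFCount}(Z,Y)\le \ell=\textsc{EFCount}(Z^{[\widehat T]},Y^{[T]})$, as claimed.

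The only delicate point is the bookkeeping around the single boundary chore $c^{\dagger}$ of the prefix $Z^{[\widehat T]}$: a chore charged only partially inside $Z^{[\widehat T]}$ is present in full in $Z$, and one must check that turning it into a full element of $R'$ can only decrease the leftover disutility — this is precisely where hypothesis~(ii), which controls the disutility of the discarded tail $Z\setminus Z^{[\widehat T]}$, does its work. Beyond this, no monotonicity lemma and no case analysis other than ``$c^{\dagger}\in R$ or not'' is required.
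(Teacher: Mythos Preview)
Your proof is correct and follows essentially the same approach as the paper: both take the witness set $R\subseteq Z^{[\widehat T]}$ from hypothesis~(i), promote it to a set of $\ell$ full chores in $Z$, and combine the additive decomposition $d(Z)=d(Z^{[\widehat T]})+d(Z\setminus Z^{[\widehat T]})$ with hypothesis~(ii) to bound $d(Z\setminus R')\le d(Y)$. The paper condenses your boundary-chore case analysis into the single line ``there exist $\ell$ chores $c_1,\dots,c_\ell\in Z$ with $\sum d(c_h)\ge\sum d(c'_h)$,'' but the content is identical.
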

\begin{proof}
    Given that $\textsc{EFCount}(Z^{[\widehat{T}]}, Y^{[T]}) = \ell$, there exist $\ell$ chores $c'_1, c'_2,\ldots,c'_{\ell} \in Z^{[\widehat{T}]}$ such that 
    \begin{equation*}
        d(Z^{[\widehat{T}]} \setminus \{c'_1,c'_2,\ldots,c'_{\ell}\}) \leq d(Y^{[T]}).
    \end{equation*}
    Using the definition of prefix subsets, we get
    \begin{align*}
        d(Y) 
        & = d(Y^{[T]}) + d(Y \setminus Y^{[T]}) \\
        & \geq d(Z^{[\widehat{T}]} \setminus \{c'_1,c'_2,\ldots,c'_{\ell}\}) + d(Y \setminus Y^{[T]}) \\
        & \geq d(Z^{[\widehat{T}]} \setminus \{c'_1,c'_2,\ldots,c'_{\ell}\}) + d(Z \setminus Z^{[\widehat{T}]}) \\
        & = d(Z) - \sum^{\ell}_{h=1}d(c'_h),
    \end{align*}
    where the third line follows by assumption (ii) of the lemma.
    The definition of the prefix subset $Z^{[\widehat{T}]}$ ensures that, corresponding to chores $c'_1, c'_2,\ldots,c'_{\ell} \in Z^{[\widehat{T}]}$, there exists $\ell$ chores $c_1, c_2,\ldots,c_{\ell} \in Z$ such that
    \begin{equation*}
        \sum^{\ell}_{h=1}d(c_h) \geq \sum^{\ell}_{h=1}d(c'_h).
    \end{equation*}
    This bound and the above derived inequality gives us
    $d(Z \setminus \{c_1,c_2,\ldots,c_{\ell}\}) \leq d(Y).$
    This implies $\textsc{EFCount}(Z^{[\widehat{T}]}, Y^{[T]}) \leq \ell$ and completes the proof of the lemma.
    $\hfill \square$
\end{proof}
Note that the above lemma holds in particular when $\textsc{EFCount}(Z^{[\widehat{T}]},Y^{[T]}) = 2$ for nonnegative thresholds $T, \widehat{T} \in \mathbb{R}_+$.

We now proceed to prove the main result of this theorem.
    Fix any two agents $a,b \in [n+1]$ and let $A_a$ and $A_b$ be the sets of chores allocated to them, respectively, at the end of the algorithm. Let $A_a = X = \{x_1,\dots,x_k\}$ and $A_b = Y = \{y_1,\dots,y_\ell\}$ for ease of notation. Note that chores are indexed in non-increasing order of density. Proving EF2 between the two agents $a,b$ corresponds to showing that, for any subset of chores $Z \subseteq X$ with $s(Z) \leq B_b$, we have 
    \begin{equation*}
        \textsc{EFCount}(Z,Y) \leq 2.
    \end{equation*}
    Consider any such subset $Z$ and index its chores in non-increasing order of density $Z = \{z_1,\dots,z_{\ell'}\}$.
    Note that if $\textsc{EFCount}(Z,Y) \leq 1$, we already have EF2.
    Thus, the remainder of the proof, we consider the case where
    \begin{equation}
        \textsc{EFCount}(Z,Y) \geq 2.
    \end{equation}
    We show that the inequality cannot be strict, i.e., it must hold that envy count is at most $2$, and hence EF2.

    Start by considering the function $h(i) = s(Z^{(i)})$ for $i \in \{0,1,2,\dots,|Z|\}$.
    Furthermore, define index
    \begin{equation} \label{eqn:efcount_tmin}
        t = \min \{ i: \textsc{EFCount}(Z^{(i)},Y^{[h(i)]})  = 2\}
    \end{equation}
    The existence of such an index $t \geq 2$ follows from Lemma \ref{lem:envycount_exactly2}.
    Also, note that $Z^{(i)} = Z^{[h(i)]}$.
    Denote the $t$-th chore in $Z$ by $c_Z = z_t$.
    In addition, using $t$, define the following two size thresholds:
    \begin{equation*}
        \tau = s(Z^{(t-1)}) \quad \text{and} \quad \widehat{\tau} = s(Z^{(t)}).
    \end{equation*}

    Now, Lemma \ref{lem:efcount_increment1}, the definition of $t$, and Equation (\ref{eqn:efcount_tmin}) give us $$
    \textsc{EFCount}(Z^{[\tau]}, Y^{[\tau]}) \geq 1.
    $$
    Using the minimality of $t$, 
    \begin{equation} \label{eqn:efcount_equals1}
        \textsc{EFCount}(Z^{[\tau]}, Y^{[\tau]}) = 1.
    \end{equation}

    We will establish two properties for sets $Z$ and $Y$ under consideration, and show
    \begin{enumerate}[(i)]
        \item $\textsc{EFCount}(Z^{[\widehat{\tau}]}, Y^{[\tau]}) = 2$, and
        \item $d(Z \setminus Z^{[\widehat{\tau}]}) \leq d(Y \setminus Y^{[\tau]})$,
    \end{enumerate}
    and apply Lemma \ref{lem:efcount_mainlemma} to get the desired result.

    We prove the first property with the following Lemma.
    \begin{lemma}
        $\textsc{EFCount}(Z^{[\widehat{\tau}]}, Y^{[\tau]}) = 2$.
    \end{lemma}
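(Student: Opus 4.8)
The plan is to prove the two inequalities $\textsc{EFCount}(Z^{[\widehat{\tau}]}, Y^{[\tau]}) \le 2$ and $\textsc{EFCount}(Z^{[\widehat{\tau}]}, Y^{[\tau]}) \ge 2$ separately; both are short consequences of facts already assembled above. The one elementary ingredient I would isolate first is the monotonicity of $\textsc{EFCount}$ in its second argument: if $d(Y_1) \le d(Y_2)$ then $\textsc{EFCount}(X, Y_1) \ge \textsc{EFCount}(X, Y_2)$, which is immediate from the definition, since shrinking the target disutility only shrinks the collection of admissible removal sets $R$ over which the minimum is taken. I will also use the two ``prefix collapses to an integral set'' identities recorded in the excerpt: since $\tau = s(Z^{(t-1)})$, $\widehat{\tau} = s(Z^{(t)})$ and all sizes are strictly positive, we have $Z^{[\tau]} = Z^{(t-1)}$ and $Z^{[\widehat{\tau}]} = Z^{(t)} = Z^{(t-1)} \cup \{c_Z\}$, so in particular $Z^{[\widehat{\tau}]} = Z^{[\tau]} \cup \{c_Z\}$ with $c_Z = z_t \notin Z^{[\tau]}$.

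For the upper bound, I would start from Equation~(\ref{eqn:efcount_equals1}), i.e. $\textsc{EFCount}(Z^{[\tau]}, Y^{[\tau]}) = 1$, and pick a single chore $c' \in Z^{[\tau]}$ with $d(Z^{[\tau]} \setminus \{c'\}) \le d(Y^{[\tau]})$. Removing the two-element set $\{c', c_Z\}$ from $Z^{[\widehat{\tau}]}$ then gives $d(Z^{[\widehat{\tau}]} \setminus \{c', c_Z\}) = d(Z^{[\tau]} \setminus \{c'\}) \le d(Y^{[\tau]})$, which witnesses $\textsc{EFCount}(Z^{[\widehat{\tau}]}, Y^{[\tau]}) \le 2$. (This is essentially the argument of Lemma~\ref{lem:efcount_increment1}, specialised so that the second argument stays fixed at threshold $\tau$ rather than advancing to $\widehat{\tau}$.)

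For the lower bound, recall that by the defining choice of $t$ in Equation~(\ref{eqn:efcount_tmin}) we have $\textsc{EFCount}(Z^{(t)}, Y^{[h(t)]}) = 2$, and since $h(t) = s(Z^{(t)}) = \widehat{\tau}$ together with $Z^{(t)} = Z^{[\widehat{\tau}]}$, this reads $\textsc{EFCount}(Z^{[\widehat{\tau}]}, Y^{[\widehat{\tau}]}) = 2$. Because $\tau \le \widehat{\tau}$, enlarging the budget threshold only adds disutility to the prefix, so $d(Y^{[\tau]}) \le d(Y^{[\widehat{\tau}]})$; monotonicity then yields $\textsc{EFCount}(Z^{[\widehat{\tau}]}, Y^{[\tau]}) \ge \textsc{EFCount}(Z^{[\widehat{\tau}]}, Y^{[\widehat{\tau}]}) = 2$. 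Combining the two bounds gives equality. I do not expect a genuine obstacle here: the content is bookkeeping on top of Equation~(\ref{eqn:efcount_equals1}) and the minimality of $t$, and the only points requiring care are keeping track of the prefix-identities above and applying monotonicity in the correct direction (the second argument $Y^{[\tau]}$ is the \emph{smaller} target, so it can only \emph{raise} the count).
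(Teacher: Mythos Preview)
Your proof is correct. The upper-bound half (removing $\{c', c_Z\}$ from $Z^{[\widehat\tau]}$ to witness $\textsc{EFCount}(Z^{[\widehat\tau]}, Y^{[\tau]}) \le 2$) is exactly the paper's argument. The paper, however, stops there and simply writes ``$= 2$'' without justifying the lower bound; your addition of the monotonicity step---$d(Y^{[\tau]}) \le d(Y^{[\widehat\tau]})$ forces $\textsc{EFCount}(Z^{[\widehat\tau]}, Y^{[\tau]}) \ge \textsc{EFCount}(Z^{[\widehat\tau]}, Y^{[\widehat\tau]}) = 2$---is a genuine improvement, since as stated the lemma claims an equality and the paper's proof only establishes one direction. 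Your treatment is therefore more complete while remaining identical in spirit.
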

    \begin{proof}
        Since $\textsc{EFCount}(Z^{[\tau]},Y^{[\tau]}) = 1$ (by \eqref{eqn:efcount_equals1}), there exists a chore $c_1 \in Z^{[\tau]}$ such that
        \begin{equation} \label{eqn:efcount_lemma_chore}
            d(Z^{[\tau]} \setminus \{ c_1 \}) \leq d(Y^{[\tau]}).
        \end{equation}
        Also, by definition
        \begin{equation*}
            Z^{[\widehat{\tau}]} = Z^{[\tau]} \cup \{ c_Z\}. 
        \end{equation*}
        So
        \begin{equation*}
            Z^{[\tau]} = Z^{[\widehat{\tau}]} \setminus \{ c_Z\}
        \end{equation*}
        and Equation (\ref{eqn:efcount_lemma_chore}) reduces to 
        \begin{equation*}
            d(Z^{[\widehat{\tau}]} \setminus \{ c_Z , c_1 \} \leq d(Y^{[\tau]})
        \end{equation*}
        Therefore, $\textsc{EFCount}(Z^{[\widehat{\tau}]}, Y^{[\tau]}) = 2$.
        $\hfill \square$
    \end{proof}

    Next, we define $\gamma$ to be the size of the chores in $Y$ that are at least as dense as $c_Z$, i.e.,
    \begin{equation*}
        \gamma = \sum_{c \in Y : \rho(c) \geq \rho(c_Z)} s(c).
    \end{equation*}
    \begin{lemma} \label{lem:efcount_twoproperties}
        It holds that (1) $\gamma \leq \widehat{\tau}$ and (2) $d(Z^{[\tau]}) > d(Y^{[\gamma]})$.
    \end{lemma}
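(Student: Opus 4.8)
The plan is to establish both parts of Lemma~\ref{lem:efcount_twoproperties} from the \textsc{EFCount} identities already derived, with no further appeal to the execution of \textsc{DensestFirst}: first part~(1), then part~(2) as a short consequence. For~(1) we may assume $\widehat{\tau} < s(Y)$, since otherwise $\gamma = \sum_{c\in Y:\rho(c)\geq \rho(c_Z)} s(c) \leq s(Y) \leq \widehat{\tau}$. The crux is the inequality $d(Y^{[\widehat{\tau}]}) - d(Y^{[\tau]}) < d(c_Z)$, which we obtain by combining two facts. By the definition of $t$ (Equation~\eqref{eqn:efcount_tmin}), $\textsc{EFCount}(Z^{(t)}, Y^{[\widehat{\tau}]}) = 2$, so removing a single maximum-disutility chore from $Z^{(t)}$ does not suffice: $d(Z^{(t)}) - \max_{i\leq t} d(z_i) > d(Y^{[\widehat{\tau}]})$. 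By Equation~\eqref{eqn:efcount_equals1}, $\textsc{EFCount}(Z^{[\tau]}, Y^{[\tau]}) = \textsc{EFCount}(Z^{(t-1)}, Y^{[\tau]}) = 1$, so \emph{some} single removal from $Z^{(t-1)}$ does suffice, which forces $\max_{i<t} d(z_i) \geq d(Z^{(t-1)}) - d(Y^{[\tau]})$, i.e.\ $d(Z^{(t-1)}) \leq d(Y^{[\tau]}) + \max_{i<t} d(z_i)$. Since $d(Z^{(t)}) = d(Z^{(t-1)}) + d(c_Z)$ and $\max_{i\leq t} d(z_i) \geq \max_{i<t} d(z_i)$, chaining these two bounds yields the desired inequality.

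It remains to turn this into the statement $\gamma < \widehat{\tau}$. The fractional bundle $Y^{[\widehat{\tau}]}$ extends $Y^{[\tau]}$ by a fractional sub-bundle $P := Y^{[\widehat{\tau}]} - Y^{[\tau]}$ of total size $\widehat{\tau} - \tau = s(c_Z)$, consisting of (fractions of) the least-dense chores of the size-$\widehat{\tau}$ prefix of $Y$, listed in non-increasing order of density. By the bound just derived, $d(P) < d(c_Z) = \rho(c_Z)\, s(c_Z) = \rho(c_Z)\, s(P)$, so $P$ has (weighted) average density strictly below $\rho(c_Z)$; as its densities are non-increasing, its last chore with a positive fraction, say $c$, satisfies $\rho(c) < \rho(c_Z)$. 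Because $Y$ is sorted by non-increasing density, the chores of density below $\rho(c_Z)$ form exactly the suffix of $Y$ beginning at cumulative size $\gamma$, and since $c$ is partly contained in the size-$\widehat{\tau}$ prefix it begins before cumulative size $\widehat{\tau}$; hence $\gamma < \widehat{\tau}$, and together with the case $\widehat{\tau}\geq s(Y)$ disposed of above, this proves~(1).

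Part~(2) then follows quickly. By~(1), $\gamma \leq \widehat{\tau}$, so $Y^{[\gamma]} \leq Y^{[\widehat{\tau}]}$ componentwise and thus $d(Y^{[\gamma]}) \leq d(Y^{[\widehat{\tau}]})$. Because $\textsc{EFCount}(Z^{(t)}, Y^{[\widehat{\tau}]}) = 2$, deleting the single chore $c_Z$ from $Z^{(t)}$ leaves disutility strictly above $d(Y^{[\widehat{\tau}]})$; since $Z^{(t)}\setminus\{c_Z\} = Z^{(t-1)} = Z^{[\tau]}$, we obtain $d(Z^{[\tau]}) > d(Y^{[\widehat{\tau}]}) \geq d(Y^{[\gamma]})$, which is~(2). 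The main obstacle is the second paragraph: one must treat the degenerate sub-cases of the prefix operator $Y^{[\cdot]}$ carefully (e.g.\ $d(Y^{[\widehat{\tau}]}) = d(Y^{[\tau]})$, or $\tau \geq s(Y)$) and phrase the averaging argument so that it correctly pinpoints where in $Y$ the density first drops below $\rho(c_Z)$; by contrast, the disutility inequality $d(Y^{[\widehat{\tau}]}) - d(Y^{[\tau]}) < d(c_Z)$ is routine once the two \textsc{EFCount} facts are lined up.
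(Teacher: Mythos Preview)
Your argument is correct. For part~(2) it is the paper's proof written in direct rather than contrapositive form. For part~(1) your route differs in presentation: you first establish the disutility gap $d(Y^{[\widehat{\tau}]}) - d(Y^{[\tau]}) < d(c_Z)$ directly from the two \textsc{EFCount} identities (via maximum-disutility removals), and then convert it into a density statement by an averaging argument that locates a chore of density below $\rho(c_Z)$ inside the size-$\widehat{\tau}$ prefix of $Y$. The paper instead assumes $\gamma > \widehat{\tau}$ for a contradiction, which immediately gives that every chore in the slice $Y^{[\widehat{\tau}]}\setminus Y^{[\tau]}$ has density at least $\rho(c_Z)$ and hence the slice has disutility at least $d(c_Z)$; it then invokes Lemma~\ref{lem:efcount_mainlemma} (with $\ell=1$) to conclude $\textsc{EFCount}(Z^{[\widehat{\tau}]},Y^{[\widehat{\tau}]})\le 1$, contradicting the choice of~$t$. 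The two arguments are contrapositives of one another at the disutility level; your version is self-contained (no appeal to Lemma~\ref{lem:efcount_mainlemma}) and in fact yields the strict inequality $\gamma<\widehat{\tau}$ when $\widehat{\tau}<s(Y)$, while the paper's is shorter because the contradiction hypothesis makes the density-to-disutility direction immediate and the \textsc{EFCount} step is already packaged in the lemma.
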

    \begin{proof}
        We first establish property (1), the upper bound on $\gamma$.

        Assume for a contradiction that $\gamma > \widehat{\tau}$.
        By definition of $\gamma$, we have that all chores in $Y^{[\widehat{\tau}]}$ is at least $\rho(c_Z)$.
        In particular, all the chores in the set $Y^{[\widehat{\tau}]} \setminus Y^{[\tau]}$ are at least as dense as $c_Z$.
        Thus, 
        \begin{equation}
            d(Y^{[\widehat{\tau}]} \setminus Y^{[\tau]}) \geq d(Z^{[\widehat{\tau}]} \setminus Z^{[\tau]}) = d(c_Z).
        \end{equation}
        Together with the fact that $\textsc{EFCount}(Z^{[\tau]}, Y^{[\tau]}) = 1$ (from Equation (\ref{eqn:efcount_equals1})),  we have $\textsc{EFCount}(Z^{[\widehat{\tau}]}, Y^{[\widehat{\tau}]}) \leq 1$ (by Lemma \ref{lem:efcount_mainlemma}), contradicting the definition of $t$ (and correspondingly, $\widehat{\tau}$. 
        Thus, claim (1) holds.

        Next, we prove property (2).
        Suppose for a contradiction that $d(Z^{[\tau]}) \leq d(Y^{[\gamma]})$.
        Since $\gamma \leq \widehat{\tau}$, we get that 
        \begin{equation*}
            d(Z^{[\widehat{\tau}]} - c_Z) = d(Z^{[\tau]}) \leq d(Y^{[\gamma]}) \leq d(Y^{[\widehat{\tau}]})
        \end{equation*}
        This implies that $\textsc{EFCount}(Z^{[\widehat{\tau}]}, Y^{[\widehat{\tau}]}) \leq 1$.
        This envy count contradicts the definition of $t$ (and correspondingly $\widehat{\tau}$).
        Thus, the claim holds.
        $\hfill \square$
    \end{proof}

    We prove the second property with the following Lemma.
    \begin{lemma}
        $d(Z \setminus Z^{[\widehat{\tau}]}) \leq d(Y \setminus Y^{[\tau]})$.
    \end{lemma}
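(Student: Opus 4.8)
The plan is to prove $d(Z\setminus Z^{[\widehat\tau]})\le d(Y\setminus Y^{[\tau]})$ by bounding the two sides against a common quantity, after which applying Lemma~\ref{lem:efcount_mainlemma} with $\ell=2$ (to the sets $Z,Y$ and thresholds $\tau,\widehat\tau$) closes the whole theorem. First I would dispose of the degenerate case $t=|Z|$: then $Z^{[\widehat\tau]}=Z^{(t)}=Z$, the left-hand side is $0$, and there is nothing to prove. So assume $t<|Z|$, whence $\widehat\tau=s(Z^{(t)})<s(Z)$ and $Z^{[\widehat\tau]}=Z^{(t)}=\{z_1,\dots,z_t\}$. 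Every chore of $Z\setminus Z^{[\widehat\tau]}$ comes after $c_Z=z_t$ in the density order of $Z$, hence has density at most $\rho(c_Z)$, so
\[
  d\bigl(Z\setminus Z^{[\widehat\tau]}\bigr)\ \le\ \rho(c_Z)\cdot s\bigl(Z\setminus Z^{[\widehat\tau]}\bigr)\ =\ \rho(c_Z)\bigl(s(Z)-\widehat\tau\bigr)\ \le\ \rho(c_Z)\bigl(B_b-\widehat\tau\bigr),
\]
using the feasibility $s(Z)\le B_b$ of the witness subset $Z\subseteq A_a$. By Lemma~\ref{lem:efcount_twoproperties}(1) we have $\gamma\le\widehat\tau$, so this upper bound is at most $\rho(c_Z)(B_b-\gamma)$.

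It thus suffices to show $d(Y\setminus Y^{[\tau]})\ge\rho(c_Z)(B_b-\gamma)$. Here I would use that, by definition of $\gamma$, the set $Y^{[\gamma]}$ is precisely the set of chores of $Y$ of density at least $\rho(c_Z)$, so the chores of $Y\setminus Y^{[\gamma]}$ all have density strictly below $\rho(c_Z)$. The next step is to establish that $\tau\le\gamma$, using $\textsc{EFCount}(Z^{[\tau]},Y^{[\tau]})=1$ (from \eqref{eqn:efcount_equals1}), Lemma~\ref{lem:efcount_twoproperties}(2), and the densest-first selection rule; granting this, $Y^{[\gamma]}\setminus Y^{[\tau]}$ is a block of $Y$ consisting entirely of chores of density at least $\rho(c_Z)$, so it contributes at least $\rho(c_Z)(\gamma-\tau)$ to $d(Y\setminus Y^{[\tau]})$. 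The remaining shortfall $\rho(c_Z)(B_b-\widehat\tau)$ must be absorbed by $d(Y\setminus Y^{[\gamma]})$, i.e., one must show that agent $b$ actually holds enough ``residual mass'' beyond its dense prefix. I expect this to be the main obstacle: it is exactly the point where one converts the greedy, disutility-balancing behaviour of \textsc{DensestFirst} --- encapsulated by Observations~1 and~2 --- together with the budget slack $B_b - s(Z^{[\widehat\tau]})$ left unused by the size-$\widehat\tau$ dense prefix, into a quantitative lower bound on $d(Y\setminus Y^{[\gamma]})$ (intuitively: whenever $b$ could have feasibly picked up another available chore without having larger disutility, it did, so the chores $a$ carries below the split point are ``matched'' by at least as much disutility in $Y$). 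An alternative route for this step is to compare, density-by-density, the tail of $Z$ with the tail of $Y$ rather than going through the crude bound $d(\cdot)\le\rho(c_Z)\,s(\cdot)$ used above.

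Once both bounds are in hand, chaining them gives $d(Z\setminus Z^{[\widehat\tau]})\le\rho(c_Z)(B_b-\gamma)\le d(Y\setminus Y^{[\tau]})$, which is precisely hypothesis~(ii) of Lemma~\ref{lem:efcount_mainlemma}; combined with the already-verified hypothesis~(i), that lemma yields $\textsc{EFCount}(Z,Y)\le 2$. Since $a,b\in[n+1]$ and the witness subset $Z\subseteq A_a$ with $s(Z)\le B_b$ were arbitrary, this establishes EF2 between every ordered pair of agents and hence proves Theorem~\ref{thm:ef2}.
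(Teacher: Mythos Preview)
Your outline has the right architecture --- bound $d(Z\setminus Z^{[\widehat\tau]})$ from above by $\rho(c_Z)(B_b-\widehat\tau)$, bound $d(Y\setminus Y^{[\tau]})$ from below via the block $Y^{[\gamma]}\setminus Y^{[\tau]}$, and feed the result into Lemma~\ref{lem:efcount_mainlemma} --- but the key quantitative step is missing, and the way you have framed the obstacle will not lead you to it.

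The crucial inequality, which the paper proves and you never state, is $\gamma + s(c_Z) > B_b$. It follows from Lemma~\ref{lem:efcount_twoproperties}(2), rewritten as $d(X^{(t'-1)})>d(Y^{[\gamma]})$ (where $t'$ is the index of $c_Z$ in $X=A_a$), together with Observation~1: since every chore beyond $Y^{[\gamma]}$ has density strictly below $\rho(c_Z)$, the only way the algorithm could have placed $c_Z$ in $A_a$ rather than $A_b$ at that moment is that $c_Z$ did not fit in $b$'s budget on top of $Y^{[\gamma]}$. Once you have $\gamma+s(c_Z)>B_b$, everything collapses: $\gamma-\tau=\gamma-\widehat\tau+s(c_Z)>B_b-\widehat\tau\ge s(Z)-\widehat\tau=s(Z\setminus Z^{[\widehat\tau]})$, so the block $Y^{[\gamma]}\setminus Y^{[\tau]}$ alone (high-density, larger size) already dominates $Z\setminus Z^{[\widehat\tau]}$ (low-density, smaller size). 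In particular, $\tau<\gamma$ drops out for free from $\gamma>B_b-s(c_Z)\ge\widehat\tau-s(c_Z)=\tau$; you do not need a separate argument for it.

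Two concrete problems with your write-up as it stands. First, the relaxation from $\rho(c_Z)(B_b-\widehat\tau)$ to $\rho(c_Z)(B_b-\gamma)$ goes the wrong way: it weakens the upper bound on $d(Z\setminus Z^{[\widehat\tau]})$ and hence strengthens what you must prove about $Y$, for no gain. Keep the tighter target $\rho(c_Z)(B_b-\widehat\tau)$. Second, your ``remaining shortfall'' bookkeeping is off, and more importantly your plan to absorb that shortfall into $d(Y\setminus Y^{[\gamma]})$ cannot work in general: every chore in $Y\setminus Y^{[\gamma]}$ has density strictly below $\rho(c_Z)$, so you cannot extract a lower bound of the form $\rho(c_Z)\cdot(\text{positive size})$ from that set. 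The paper never touches $Y\setminus Y^{[\gamma]}$ at all; the whole point of $\gamma+s(c_Z)>B_b$ is that it makes the contribution of $Y^{[\gamma]}\setminus Y^{[\tau]}$ by itself already sufficient.
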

    \begin{proof}
        Since $Z \subseteq X$, the chore $c_Z$ appears in the subset $X$.
        Recall that the chores in the subsets $Z$ and $X = \{x_1,\dots,x_k\}$ are in order of decreasing density.
        Write $t' \in [|X|]$ to denote the index of $c_Z$ in $X$, i.e., $c_Z = x_{t'}$.

        Then, the second property of Lemma \ref{lem:efcount_twoproperties} gives us $d(Z^{[\tau]}) > d(Y^{[\gamma]})$, which in turn gives us
        \begin{equation*}
            \sum_{i=1}^{t'-1} d(x_i) = \sum_{i=1}^{t-1} d(z_i) = d(Z^{[\tau]}) > d(Y^{[\gamma]}).
        \end{equation*}
        That is, $d(X^{(t'-1)}) > d(Y^{[\gamma]})$.
        By definition of $\gamma$, we have that chores in $Y \setminus Y^{[\gamma]}$ (if any) have density less than $\rho(c_Z)$.
        These combined with Observation 1 gets that including $c_Z$ in $Y^{[\gamma]}$ must violate agent $b$'s budget $B_b$.
        We get that 
        \begin{equation*}
            \gamma + s(c_Z) > B_b.
        \end{equation*}

        Note that 
        \begin{align*}
            s(Y^{[\gamma]} \setminus Y^{[\tau]}) 
            &= \gamma - \tau\\
            & = \gamma - \widehat{\tau} + s(c_Z)\\
            & > B-b - \widehat{\tau}\\
            & \geq s(Z \setminus Z^{[\tau]}),
        \end{align*}
        where the first line follows by definition of the respective prefix subsets, the second line follows from the fact that $Z^{[\widehat{\tau}]} = Z^{[\tau]} \cup \{c_Z\} \text{ and } \widehat{\tau} = \tau + s(C_z)$, and third line follows from the fact that $s(Z) \leq B_b$.
        
        By definition of $\gamma$ ($\gamma$ is the size of chores in $Y$ that is at least as dense as $c_Z$), we have that every chore $c \in Y^{[\gamma]} \setminus Y^{[\tau]}$ has density $\rho(c) \geq \rho(c_Z)$.

        In addition, for every chore $c' \in Z \setminus Z^{[\widehat{\tau}]}$, $\rho(c') \leq \rho(c_Z)$.
        This gives us
        \begin{equation*}
            \rho(c) \geq \rho(c_Z) \geq \rho(c'),
        \end{equation*}
        for every chore $c \in Y^{[\gamma]} \setminus Y^{[\tau]}$ and every chore $c' \in Z \setminus Z^{[\widehat{\tau}]}$.

        Together with the fact that $s(Z \setminus Z^{[\widehat{\tau}]}) \leq s(Y^{[\gamma]} \setminus Y^{[\tau]})$, we get
        \begin{equation*}
            d(Z \setminus Z^{[\widehat{\tau}]}) \leq d(Y^{[\gamma]} \setminus Y^{[\tau]}),
        \end{equation*}
        because the RHS is more dense and bigger size.
        Then, since $Y^{[\gamma]} \subseteq Y$, we get \begin{equation*}
            d(Z \setminus Z^{[\widehat{\tau}]}) \leq d(Y \setminus Y^{[\tau]}),
        \end{equation*}
        as desired.
        $\hfill \square$
    \end{proof}
    
    Applying Lemma \ref{lem:efcount_mainlemma}, we get the desired property (including housekeeper condition).

\section{Proof of Theorem \ref{thm:divisible_ef}}
We will show the existence of an algorithm that returns an EF  allocation for divisible chores.
The argument is similar to that of the goods case \cite{BarmanKhShSr2023}.
Note that the result holds even when disutility functions can be subjective---we use $d_i$ to denote the disutility function of agent $i$.\footnote{Note that the housekeeper envy in this case will be defined using each agents' disutility function, which may be highly unintuitive (as noted in Section 2).}

We first introduce some key ingredients for the algorithm and its proof.
For any two bundles (which are also vectors in $[0,1]^m$) $x, y$, let $x \cap y \in [0,1]^m$ denote the vector whose $i$-th component is $\min\{x_i, y_i\}$ for all $i \in [m]$.
Moreover, we let $x - y \in [0,1]^m$ denote the vector whose $i$-th component is $\max\{0,x_i - y_i\}$ for all $i \in [m]$.

Given any instance, we include a \emph{fractional} chore $c_{m+1}$ that has disutility $d_i(c_{m+1}) = 0$  for all agents $i \in [n]$ and size $s(c_{m+1}) = 2n \cdot \max_{{i'} \in [n]} B_{i'}$.
The inclusion of this chore, $c_{m+1}$, ensures that we can always work with allocations $\mathcal{X} = (X_1,\dots, X_n) \in [0,1]^{n\times (m+1)}$ where  the size of each agents' bundle is exactly his budget, i.e., $s(X_i) = B_i$.
This equality can be obtained by choosing an appropriate $x_{i,m+1} \in [0,\frac{1}{2n}]$.

The inclusion of the $(m+1)$-th chore also implies that the fractional assignment to each agent $i$ is denoted by an $(m+1)$-dimensional vector $X_i = (x_{i,1},x_{i,2},\dots,x_{i,m},x_{i,m+1})$, where $x_{i,j}$ denotes the fraction of chore $j$ assigned to $j$.

We show, with the following proposition, that including this new chore $c_{m+1}$ does not affect the EF property.
\begin{proposition} \label{prop:divisible_efhcremove}
    Suppose that $\mathcal{X} = (X_1,\dots,X_n) \in [0,1]^{n \times (m+1)}$ is an \emph{EF} allocation for the instance with $m+1$ chores. Then there exists an \emph{EF} allocation $\overline{\mathcal{X}}= (\overline{X}_1,\dots,\overline{X}_n) \in [0,1]^{n \times m}$ for the instance with $m$ chores. 
\end{proposition}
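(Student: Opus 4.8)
The plan is to take the EF allocation $\mathcal{X} = (X_1,\dots,X_n)$ of the $(m+1)$-chore instance, which we may assume (by the construction described above) satisfies $s(X_i) = B_i$ for every $i\in[n]$, and simply delete the dummy chore $c_{m+1}$: set $\overline{X}_i := (x_{i,1},\dots,x_{i,m})$ for each $i\in[n]$. Since $\sum_{i\in[n]} x_{i,j} = 1$ for each real chore $j\in[m]$ is inherited from $\mathcal{X}$ once we note that $c_{m+1}$ being a separate (housekeeper-absorbable) dummy does not affect the first $m$ coordinates—actually the cleanest framing is: in the $(m+1)$-chore instance the housekeeper holds whatever fraction of $c_{m+1}$ is not distributed, and deleting $c_{m+1}$ only removes mass from agents' and housekeeper's bundles in coordinate $m+1$, leaving coordinates $1,\dots,m$ untouched; feasibility is preserved because $s(\overline{X}_i) = s(X_i) - x_{i,m+1}s(c_{m+1}) \le s(X_i) \le B_i$.

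The substance of the argument is checking that $\overline{\mathcal{X}}$ remains EF in the sense of Definition~\ref{def:efdiv}, i.e., for all $i\in[n+1]$, $j\in[n]$, and all fractional assignments $Y \le \overline{X}_i$ (now a vector in $[0,1]^m$) with $s(Y) \le B_j$, we have $d(Y) \le d(\overline{X}_j)$. Fix such $i,j,Y$. First I would lift $Y$ to an $(m+1)$-dimensional vector $\widehat{Y}$ by appending a $0$ in coordinate $m+1$; then $\widehat{Y} \le X_i$ (the first $m$ coordinates agree with the bound $Y \le \overline{X}_i$, and the last coordinate is $0 \le x_{i,m+1}$), and $s(\widehat{Y}) = s(Y) \le B_j$. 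Applying the EF guarantee of $\mathcal{X}$ gives $d(\widehat{Y}) \le d(X_j)$. Finally, $d(\widehat{Y}) = d(Y)$ since $d(c_{m+1}) = 0$, and $d(X_j) = d(\overline{X}_j)$ for the same reason (the dummy chore contributes zero disutility). Hence $d(Y) \le d(\overline{X}_j)$, as required. The housekeeper case ($i = n+1$) is handled identically, using that deleting $c_{m+1}$ only removes zero-disutility mass from the housekeeper's bundle, so any $Y \le \overline{X}_{n+1}$ lifts to $\widehat{Y} \le X_{n+1}$.

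There is essentially no obstacle here—the only point requiring a line of care is making sure the lifted vector $\widehat{Y}$ genuinely satisfies $\widehat{Y} \le X_i$ and the size constraint, which both follow because appending a zero coordinate can only help. The proposition is really a bookkeeping lemma whose role is to justify, in the main proof of Theorem~\ref{thm:divisible_ef}, that one may freely work in the padded $(m+1)$-chore instance where every agent's bundle size equals her budget exactly; I would state it and dispatch it quickly so that the genuinely involved fixed-point / flow argument for the existence of the padded EF allocation can proceed unencumbered.
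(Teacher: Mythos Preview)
Your proposal is correct and follows essentially the same approach as the paper: both define $\overline{\mathcal{X}}$ by dropping coordinate $m+1$, lift any test vector $Y\le\overline{X}_i$ to the $(m+1)$-chore instance by appending a zero, invoke the EF guarantee of $\mathcal{X}$, and then use $d(c_{m+1})=0$ to conclude $d(Y)\le d(\overline{X}_j)$. Your write-up is in fact slightly more careful than the paper's (you explicitly verify feasibility and the housekeeper case), but the argument is the same.
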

\begin{proof}
    The allocation $\overline{\mathcal{X}} = (\overline{X}_1,\dots,\overline{X}_n)$ is obtained from $\mathbf{\mathcal{X}} = (X_1,\dots,X_n)$ by removing the fictional chore $c_{m+1}$ from consideration.
    Recall that all agents $i \in [n]$ have zero disutility for the chore $c_{m+1}$, hence  the disutility under the two allocations remains unchanged: $d_i(\overline{X}_i) = d_i(X_i)$.

    We will show that $\overline{X}$ is also EF allocation.
    For any agent $i \in [n]$,
    consider any relevant (feasible) fractional assignment $\overline{Y} \in [0,1]^m$ for $i$ such that $\overline{Y} \leq \overline{X}_i$ or $\overline{Y} \leq \overline{X}_{n+1}$.
    Extend $\overline{Y}$ to obtain the fractional assignment $Y \in [0,1]^{m+1}$ as follows:
    \begin{itemize}
        \item Set $y_j = \overline{y}_j$ for all $j \in [m]$, along with $y_{m+1} = 0$.
        \item Note that the fractional assignment $Y \in [0,1]^{m+1}$ is considered when applying definition EF for allocation $\mathcal{X} \in [0,1]^{n \times (m+1)}$ and the fact that $\mathcal{X}$ is EF implies $d_i(Y) \leq d_i(X_j)$ and $d_i(Y) \leq d_i(X_i)$ respectively.
        This means $d_i(\overline{Y}) \leq d_i(\overline{X}_j)$ and $d_i(\overline{Y}) \leq d_i(\overline{X}_i)$ respectively.
    \end{itemize}
    Since this inequality holds for all agents and relevant fractional assignments $\overline{Y} \in [0,1]^m$, $\overline{\mathcal{X}}$ is an EF allocation.
    $\hfill \square$
\end{proof}

Now, recall that for any chore $c$ and agent $i$, 
\begin{equation*}
    \rho_i(c) = \frac{d_i(c)}{s(c)}.
\end{equation*}

We will define a property called \emph{density domination}, and prove that any fractional allocation that satisfies this property is EF.

For each agent $i \in [n]$, let  $\pi_i : [m+1] \rightarrow [m+1]$ be the density ordering across the chores.
\begin{itemize}
    \item $\pi_i(t)$ denotes the $t$-th most dense chore according to $\rho_i(\cdot)$, for each index $1 \leq t \leq m+1$.
    \item If two chores have the same density for $i$, break ties according to the original indexing of the chores, so either:
    \begin{itemize}
        \item $\rho_i(\pi_i(t)) > \rho_i(\pi_i(t+1))$, or
        \item $\rho_i(\pi_i(t)) = \rho_i(\pi_i(t+1))$ and $\pi_i(t) < \pi_i(t+1)$.
    \end{itemize}
\end{itemize}
Also, for each agent $i \in [n]$, we have $\pi_i(m+1) = m+1$.

Then, we define two sets of chores, as follows.
\begin{definition}[Internal and Edge chores]
    For any integer vector $\tau = (\tau_1,\tau_2,\dots \allowbreak \tau_n) \in \mathbb{Z}^n_+$ with $||\tau||_\infty \leq m+2$, and for any agent $i \in [n]$, the set of \emph{internal chores}, $I_i(\tau)$ is defined as the set of the $(\tau_i - 1)$ most dense chores for agent $i$, i.e.,
\begin{equation*}
    I_i(\tau) = \{ \pi_i(1), \pi_i(2),\dots, \pi_i(\tau_i - 1)\}
\end{equation*}
In addition, for agent $i$, the set of \emph{edge chores} $E_i(\tau)$ is defined as
\begin{equation*}
    E_i(\tau) = 
    \begin{cases} 
      \{\pi_i(\tau_i)\} & \text{ if } \tau_i \leq m+1 \\
      \emptyset & \text{ otherwise (if } \tau_i = m+2) 
   \end{cases}
\end{equation*}
where
\begin{itemize}
    \item $I(\tau) = \bigcup_{i=1}^n I_i(\tau)$ is the set of internal chores
    \item $E(\tau) = \bigcup_{i=1}^n E_i(\tau)$ is the set of edge chores
\end{itemize}
$\tau_i = m+2$ denotes that for agent $i \in [n]$, all the chores are internal: $I_i(\tau) = [m+1]$ and the edge set for agent $i$ is empty: $E_i(\tau) = \emptyset$.
If $\tau_i = 1$, then the set of internal chores $I_i(\tau)$ is empty.
\end{definition}

Then, we define the density domintation property, which will be key in showing the EF property we desire.

\begin{definition}[Density Domination] \label{def:dd}
    An allocation $\mathcal{X} = (X_1,\dots,X_n)$ is said to satisfy the \emph{density domination (DD)} property if and only if there exists an integer vector $\widehat{\tau} \in \mathbb{Z}^n_+$ (with $||\widehat{\tau}||_\infty \leq m+2$) such that for all agents $i,i' \in [n]$, we have
    \begin{enumerate}[(i)]
        \item $x_{i,j} \leq x_{i',j}$ for all chores $c_j \in I_i(\widehat{\tau})$, i.e., if a chore $c_j$ is internal to an agent $i$, then the fraction of $c_j$ is assigned to $i$ is at most the fraction of the chore assigned to any other agent;
        \item $\sum_{j:c_j \in I_i(\widehat{\tau}) \cup E_i(\widehat{\tau})} x_{i,j} \cdot s(c_j) = B_i$, i.e., under allocation $\mathcal{X}$, for every agent $i$, the budget constraint is satisfied with an equality; and
        \item $\sum_{i = 1}^n x_{i,j} = 1$ for all chores $c_j \in I(\widehat{\tau})$, i.e., if a chore $c_j$ is internal to any agent, then no fraction of $c_j$ is left unassigned (i.e., none of it goes to the housekeeper), i.e., it is entirely divided among all the agents in allocation $\mathcal{X}$.
    \end{enumerate}
\end{definition}
Note that (i) implies that if a chore $c_j$ is internal to two agents ${i}$ and ${i'}$, then the fractions of $c_j$ that ${i}$ and ${i'}$ receives must be exactly equal, $x_{{i},j} = x_{{i'},j}$.

Furthermore, following from (ii), in an allocation that satisfies the DD property, each agent $i$ is allocated to fractions of only the top $\widehat{\tau}_i$ densest chores according to him, i.e.,
\begin{equation*}
    I_i(\widehat{\tau}) \subseteq supp(X_i) \subseteq I_i(\widehat{\tau}) \cup E_i(\widehat{\tau}),
\end{equation*}
where $supp(X_i)$ denotes the subset of chores that are fractionally assigned under vector $X_i$. 

We prove that DD implies EF in the following lemma.
\begin{lemma} \label{prop:divisible_efhc}
    Any allocation $\mathcal{X} = (X_1, \dots, X_n)$ satisfying the \emph{DD} property is \emph{EF}.
\end{lemma}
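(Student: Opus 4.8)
The plan is to verify the two families of envy constraints in Definition~\ref{def:efdiv} directly from properties (i)--(iii), handling the agent--agent constraints and the housekeeper constraints separately. Throughout I would fix the witnessing vector $\widehat\tau$ and exploit the support structure noted after Definition~\ref{def:dd}: for every agent $i\in[n]$ one has $I_i(\widehat\tau)\subseteq \mathrm{supp}(X_i)\subseteq I_i(\widehat\tau)\cup E_i(\widehat\tau)$, while by~(iii) the housekeeper holds no positive fraction of any chore in $I(\widehat\tau)$. A useful preliminary observation, which I would record first, is that no agent $j$ ever holds a positive fraction of a chore that is internal to some agent $i$ but lies outside $\mathrm{supp}(X_i)$: such a chore would be internal to $j$ (otherwise $j$ could not hold it at all), and then~(i) would force $x_{j,\cdot}\le x_{i,\cdot}=0$. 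Hence the support of $X_j$ is confined to $I_i(\widehat\tau)\cup\{e_i,e_j\}$ relative to any other agent $i$, where $e_i:=\pi_i(\widehat\tau_i)$ is $i$'s edge chore.

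Next I would treat the envy of an agent $i\in[n]$ towards an agent $j\in[n]$. Fix any $Y\le X_i$ with $s(Y)\le B_j$. Since $d_i(Y)=\sum_k y_k\,\rho_i(c_k)\,s(c_k)$ and $Y$ is coordinate-wise bounded by $X_i$, maximizing $d_i(Y)$ subject to the size constraint is a fractional-knapsack problem solved greedily by $\rho_i$-density; so without loss of generality $Y$ is the $\rho_i$-densest initial segment of $X_i$ up to size $\min\{B_j,s(X_i)\}$, meaning $Y$ agrees with $X_i$ on all of $I_i(\widehat\tau)$ (taking all of it first) and uses at most $e_i$ on top of that. By property~(i), $x_{i,k}\le x_{j,k}$ for every $c_k\in I_i(\widehat\tau)$, so the contribution of $i$'s internal chores to $d_i(Y)$ is at most their contribution to $d_i(X_j)$; in particular, if $Y$ does not touch $e_i$ we are done. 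If $Y$ does use $e_i$, I would charge the extra term $y_{e_i}\,d_i(e_i)$ against the slack that $X_j$ has over $Y$: property~(ii) gives $s(X_j)=B_j\ge s(Y)$, so the size $X_j$ spends outside what it shares with $Y$ on $I_i(\widehat\tau)$ is at least $y_{e_i}s(e_i)$; by the preliminary observation this extra size sits on $I_i(\widehat\tau)$ (where every chore is at least as $\rho_i$-dense as $e_i$) or on $e_j$, and feeding these size inequalities through density monotonicity yields $d_i(X_j)\ge d_i(Y)$.

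For the housekeeper's envy towards $j\in[n]$, the argument is analogous but now uses~(iii) to pin down $\mathrm{supp}(X_{n+1})$, which can only contain edge chores of the agents and chores in nobody's internal set. Fixing the $\rho_j$-densest feasible $Y\le X_{n+1}$ of size $\le B_j$ and comparing with $X_j$, I would again reason chore-by-chore in $j$'s density order: by an exchange argument built on~(i)--(iii), any portion of $Y$ that $j$ does not already hold in full can be matched against a portion of $X_j$ of no smaller size and no smaller $\rho_j$-density, so that $d_j(Y)\le d_j(X_j)$ as required by the (subjective) housekeeper condition.

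The main obstacle I expect is the edge-chore bookkeeping in the agent--agent case (and its housekeeper counterpart): making precise that the single extra slice of disutility $Y$ can extract from the edge chore $e_i$ is always dominated by what $X_j$ is forced to carry elsewhere. This is exactly the delicate step in the goods analysis of Barman et al.~\cite{BarmanKhShSr2023}, and here it must be re-derived with the density direction effectively reversed---agents concentrate on their \emph{highest}-density chores yet, by~(i), hold as little of each internal one as possible---so that getting the inequalities to close requires simultaneously invoking budget tightness~(ii), full allocation of internal chores~(iii), the minimality property~(i), and the observation that no agent holds a positive fraction of a chore internal to another agent but outside that agent's top-$\widehat\tau$ prefix.
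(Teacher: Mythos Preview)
Your overall strategy is the same as the paper's: pin down the threshold density $\rho_i(e_i)$ with $e_i=\pi_i(\widehat\tau_i)$, use budget tightness~(ii) to get a size inequality, and then convert size into disutility via the threshold. The paper's execution is a bit different and cleaner, though: it does \emph{not} pass to a WLOG densest-prefix $Y$, and it never tries to locate $\mathrm{supp}(X_j)$ relative to agent~$i$. Instead it works with an arbitrary $Y\le X_i$, writes $Y=(Y\cap X_{i'})+(Y-X_{i'})$ and $X_{i'}=(Y\cap X_{i'})+(X_{i'}-Y)$, observes from~(i) that $\mathrm{supp}(Y-X_{i'})\subseteq\{e_i\}$, derives $s(X_{i'}-Y)\ge s(Y-X_{i'})$ from $s(X_{i'})=B_{i'}\ge s(Y)$, and sandwiches both differences by $\rho_i(e_i)$. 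The housekeeper case is handled the same way, with~(iii) replacing~(i) to get $x_{n+1,c}=0\le x_{i,c}$ on $I_i(\widehat\tau)$.

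Your ``preliminary observation'' is where the argument breaks. As you yourself note, $I_i(\widehat\tau)\subseteq\mathrm{supp}(X_i)$, so there is no chore that is internal to $i$ yet outside $\mathrm{supp}(X_i)$; the observation is vacuous. More seriously, the conclusion you draw from it---that $\mathrm{supp}(X_j)\subseteq I_i(\widehat\tau)\cup\{e_i,e_j\}$---is false. With objective densities and $\widehat\tau_j\ge\widehat\tau_i+2$, the set $I_j(\widehat\tau)\setminus(I_i(\widehat\tau)\cup\{e_i\})$ contains a chore $c\neq e_j$ with $\rho(c)<\rho(e_i)$; by~(iii) this chore is fully allocated, and by~(i) applied to agent~$j$ the share $x_{j,c}$ is the minimum among all agents but still positive (since the shares sum to~$1$). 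So $X_j$ carries positive weight on a chore strictly less $\rho_i$-dense than $e_i$ that is neither in $I_i(\widehat\tau)$ nor equal to $e_j$. This kills your charging step: the ``extra size'' of $X_j$ beyond what it shares with $Y$ on $I_i(\widehat\tau)$ need not sit on chores at least as dense as $e_i$ (and you give no density control on $e_j$ either), so ``feeding the size inequalities through density monotonicity'' does not close the inequality as written. The paper sidesteps this by reasoning directly about $\mathrm{supp}(Y-X_{i'})$ and $\mathrm{supp}(X_{i'}-Y)$ rather than about $\mathrm{supp}(X_j)$ globally.
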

\begin{proof}
    We first show the EF property between agents.
    Consider any two agents $i,i' \in [n]$ and any fractional assignment
    \begin{equation} \label{eqn:divisible_yleqa}
        Y \leq X_i
    \end{equation}
    such that 
    \begin{equation} \label{eqn:divisible_sayxb}
        s(Y) \leq B_{i'}.
    \end{equation}
    To prove that $d_i(Y) \leq d_i(X_{i'})$, we consider the fractional assignments $(X_{i'} - Y) \in [0,1]^{m+1}$ and $(Y-X_{i'}) \in [0,1]^{m+1}$ (recall the vector operations detailed in Section \ref{sec:preliminaries}).
    Additionally, denote $\widehat{\tau} = (\widehat{\tau_1}, \dots, \widehat{\tau_n})$ as the integer vector that certifies the DD property of $\mathcal{X}$.

    Note that the DD implies that for all internal chores $c_j \in I_i(\widehat{\tau})$ we have
    \begin{equation*}
        y_j \leq x_{i,j} \leq x_{i',j},
    \end{equation*}
    where the left inequality is by (\ref{eqn:divisible_yleqa}) and the right inequality is by condition (ii) of the DD definition.

    Using this inequality and the definition of $I_i(\widehat{\tau})$, we know that all the chores in the set $supp(Y-X_{i'}) = \{j' \in [m+1]: y_{j'} > x_{i',j'}\}$ have density at most $\rho_i(\pi_i(\widehat{\tau}_i))$.
    Similarly, all chores in the set $supp(X_{i'} - Y) = \{j' \in [m+1]: x_{i',j'} > y_{j'}\}$ have density at least $\rho_i(\pi_i(\widehat{\tau_i}))$.

    Note that $X_{i'} = (X_{i'} - Y) + (Y \cap X_{i'})$ and $Y = (Y-X_{i'}) + (Y \cap X_{i'})$.
    Then, we have that
    \begin{align*}
        s(X_{i'} - Y) + s(Y \cap X_{i'}) 
        & = s(X_{i'}) \\
        & = B_{i'} \quad (\text{by definition of DD})\\ 
        & \geq s(Y) \quad \text{(by } (\ref{eqn:divisible_sayxb})) \\
        & = s(Y - X_{i'}) + s(Y \cap X_{i'}),
    \end{align*}
    giving us
    \begin{equation} \label{eqn:divisible_sizexby}
        s(X_{i'} - Y) \geq s(Y - X_{i'}).
    \end{equation}
    This means that agent $i$'s disutility for the fractional assignment $Y$ is at most his disutility for bundle of $i'$:
    \begin{align*}
        d_i(Y) & = d_i(Y \cap X_{i'}) + d_i(Y-X_{i'}) \\
        & \leq d_i(Y \cap X_{i'}) + \rho_i(\pi_i(\widehat{\tau}_i)) s(Y-X_{i'}) \\
        & \leq d_i(Y \cap X_{i'}) + \rho_i(\pi_i(\widehat{\tau}_i)) s(X_{i'} - Y) \quad (\text{by } (\ref{eqn:divisible_sizexby}))\\
        & \leq d_i(Y \cap X_{i'}) + d_i(X_{i'} - Y)
        = d_i(X_{i'}).
    \end{align*}
    Therefore, we get that in the DD allocation $\mathcal{X}$, every agent $i \in [n]$ is EF towards all other agents.

    Next, we prove the EF property by the houseekeeper towards each agent.

    Let $x_{n+1,c}$ denote the fraction of chore $c_j$ allocated to the housekeeper, i.e., $x_{n+1,j} = 1 - \sum_{i'=1}^n x_{i',j}$.
    Fix any agent $i \in [n]$ and any chore $c_{\widehat{j}} \in I_i(\widehat{\tau})$.
    Note that condition (iii) of the Definition \ref{def:dd} gives us $1 - \sum_{i'=1}^n x_{i',j} = 0$ for all chores $c_j \in I(\widehat{\tau})$.
    Thus, for all internal chores $c_{\widehat{j}} \in I_i(\widehat{\tau})$, 
    \begin{equation*}
        x_{i,\widehat{j}} \geq x_{n+1,\widehat{j}} = 0.
    \end{equation*}

    Fix any agent $i \in [n]$ and fractional assignment $Y \leq X_{n+1}$ with the property that
    \begin{equation} \label{eqn:divisible_xachousekeeper}
        s(Y) \leq B_i.
    \end{equation}
    To prove that $d_i(Y) \leq d_i(X_i)$, we consider the fractional assignments $(X_i-Y) \in [0,1]^{m+1}$ and $(Y-X_i) \in [0,1]^{m+1}$.
    Additionally, write $\widehat{\tau} = (\widehat{\tau}_1,\dots,\widehat{\tau}_n)$ to  denote the integer vector that certifies the DD property of $\mathcal{X}$.

    Note that DD implies that, for all internal chores $c_j \in I_i(\widehat{\tau})$, we have $x_{i,j} \geq x_{n+1,j} \geq y_j$.

    Using this inequality and the definition of $I_i(\widehat{\tau})$, we obtain that all the chores in the set $supp(Y-X_i) = \{j' \in [m+1]: y_{j'} > x_{i,j'}\}$ have density at most $\rho_i(\pi_i(\widehat{\tau_i}))$.
    The DD property also mandates that all chores in the set $supp(x_i)$ have density at least $\rho_i(\pi_i(\widehat{\tau}_i))$.
    Since $supp(X_i-Y) \subseteq supp(X_a)$, the density of every chore in $supp(X_a - Y)$ is at least $\rho_i(\pi_i(\widehat{\tau}_i))$.

    Note that $X_i = (X_i - Y) + (Y \cap X_i)$ and $Y = (Y - X_i) + (Y \cap X_i)$.
    Then, we get that
    \begin{align*}
        s(X_i-Y) + s(Y \cap X_i) 
        & = s(X_i) \\
        & = B_i \quad (\text{by definition of DD}) \\
        & \geq s(Y) \quad (\text{by } (\ref{eqn:divisible_xachousekeeper})) \\
        & = s(Y-X_i) + s(Y \cap X_i),
    \end{align*}
    giving us
    \begin{equation} \label{eqn:divisible_sizexay}
        s(X_i-Y) \geq s(Y-X_i).
    \end{equation}
    This means that agent $i$'s disutility for the fractional assignment $Y$ is at most his disutility for his own bundle:
    \begin{align*}
        d_i(Y) & = d_i(Y \cap X_i) + d_i(Y-X_i) \\
        & \leq d_i(Y \cap X_i) + \rho_i(\pi_i(\widehat{\tau}_i)) s(Y-X_i) \\
        & \leq d_i(Y \cap X_i) + \rho_i(\pi_i(\widehat{\tau}_i)) s(X_i-Y) \quad (\text{by } (\ref{eqn:divisible_sizexay}))\\
        & \leq d_i(Y \cap X_i) + d_i(X_i - Y) \\
        & = d_i(X_i)
    \end{align*}
    Thus, the lemma is proven.
    $\hfill \square$
\end{proof}
Note that the above lemma and Proposition \ref{prop:divisible_efhcremove} imply that from a DD allocation $\mathcal{X}$ (and by removing the $(m+1)$-th chore from consideration), one obtains an EF allocation for the underlying instance.

\subsection{EF Algorithm}
To capture the DD property and develop our algorithm, we first define a family linear programs, $LP_1(\cdot)$ that are parameterized by integer vector $\tau \in \mathbb{Z}^n_+$.

\begin{definition}
    Given any integer vector $\tau = (\tau_1,\dots,\tau_n)$ with $||\tau||_\infty \leq m+2$, 
    we define the following linear program, $LP_1(\tau)$, over decision variabes $\{ z_{i,j} \in [0,1] \}_{i,j}$:
    \begin{enumerate}[(i)]
        \item $z_{i,j} \leq z_{i',j}$ for all $i,i' \in [n]$ and $c_j \in I_i(\tau)$
        \item $\sum_{j:c_j \in I_i(\tau) \cup E_i(\tau)} z_{i,j} s(c_j) = B_i$ for all $i \in [n]$
        \item $\sum_{i=1}^n z_{i,j} = 1$ for all $c_j \in I(\tau)$
        \item $z_{i,h} = 0$ for all $i \in [n]$ and $c_h \neq I_i(\tau) \cup E_i(\tau)$
        \item $\sum_{i=1}^n z_{i,h} \leq 1$ for all $c_h \in C \cup \{c_{m+1}\} \setminus I(\tau)$
    \end{enumerate}
\end{definition}
Intuitively, $\tau = (\tau_1,\dots,\tau_n)$ captures how many items agent $i$ can get from the first $\tau_i$ densest item in the bundle.

The first three conditions are the requirements from the definition of density domination. The next two constraints ensure that the $z_{i,j}$'s induce an allocation.
The fourth set of constraints (iv) are redundant since they follow from the second set of constraints (ii), but we need them for the relaxation later.
Moreover, as stated in the following proposition, the feasibility of $LP_1(\cdot)$ implies the existence of DD allocations.
The proof is straightforward and hence omitted.
\begin{proposition} \label{prop:divisible_feasibleDD}
    If for an integer vector $\tau \in \mathbb{Z}^n_+$, the linear program $LP_1(\tau)$ is feasible, then a feasible solution $\{ z_{i,j}\}_{i,j}$ of $LP_1(\tau)$ corresponds to an allocation $z \in [0,1]^{n \times (m+1)}$ that satisfies the density domination property.
\end{proposition}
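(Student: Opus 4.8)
The plan is to read off an allocation directly from a feasible solution $\{z_{i,j}\}$ of $LP_1(\tau)$ and then verify the three defining conditions of Definition~\ref{def:dd} with the witness $\widehat{\tau} = \tau$. Concretely, I would set $x_{i,j} = z_{i,j}$ for every agent $i \in [n]$ and every chore $c_j$ with $j \in [m+1]$, and define the housekeeper's share by $x_{n+1,j} = 1 - \sum_{i=1}^n z_{i,j}$. The first task is to confirm that $\mathcal{X} = (X_1,\dots,X_{n+1})$ is a legitimate fractional allocation. Membership $x_{i,j} \in [0,1]$ for $i \in [n]$ is immediate from the variable domain of $LP_1(\tau)$, and the equalities $\sum_{i=1}^{n+1} x_{i,j} = 1$ hold by construction. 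The only point needing an argument is the nonnegativity of the housekeeper's share, i.e.\ $\sum_{i=1}^n z_{i,j} \le 1$ for every chore: for internal chores $c_j \in I(\tau)$ this is constraint~(iii) (which even gives equality), and for the remaining chores $c_j \in (C \cup \{c_{m+1}\}) \setminus I(\tau)$ it is exactly constraint~(v).

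Once validity is established, the density-domination conditions follow by matching them to the LP constraints almost verbatim. Condition~(i) of Definition~\ref{def:dd}, namely $x_{i,j} \le x_{i',j}$ for all $c_j \in I_i(\widehat{\tau})$, is precisely constraint~(i) of $LP_1(\tau)$; condition~(ii), $\sum_{j:\,c_j \in I_i(\widehat{\tau}) \cup E_i(\widehat{\tau})} x_{i,j}\, s(c_j) = B_i$, is constraint~(ii); and condition~(iii), $\sum_{i=1}^n x_{i,j} = 1$ for all $c_j \in I(\widehat{\tau})$, is constraint~(iii). Thus $\mathcal{X}$ satisfies density domination with witness $\widehat{\tau} = \tau$.

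Finally, I would record the support structure that Lemma~\ref{prop:divisible_efhc} later relies on. Constraint~(iv) forces $z_{i,h} = 0$ whenever $c_h \notin I_i(\tau) \cup E_i(\tau)$, so $\mathrm{supp}(X_i) \subseteq I_i(\tau) \cup E_i(\tau)$; combined with constraint~(ii) this yields $s(X_i) = B_i$, and in particular the allocation is feasible in the sense of Section~\ref{sec:preliminaries}. I expect no substantive obstacle: the entire argument is bookkeeping, since the constraints of $LP_1(\tau)$ were designed to encode exactly the requirements of density domination. The only genuine (if minor) step is the nonnegativity check for the housekeeper's share, which is precisely why constraints~(iii) and~(v) appear in the linear program.
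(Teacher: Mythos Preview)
Your proposal is correct; the verification that the LP constraints (i)--(iii) of $LP_1(\tau)$ coincide with conditions (i)--(iii) of Definition~\ref{def:dd}, together with the check that constraints (iii) and (v) ensure the housekeeper's share is nonnegative, is exactly the intended argument. The paper itself omits this proof entirely, stating only that it is ``straightforward and hence omitted,'' so your write-up simply fills in the bookkeeping the authors left implicit.
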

Note that it is unclear that $LP_1(\tau)$ is feasible for any integer vector $\tau \in \mathbb{Z}^n_+$.
However, if there exists a $\tau$ that induces a feasible $LP_1(\tau)$, then by Proposition \ref{prop:divisible_feasibleDD} and Lemma \ref{prop:divisible_efhc}, we will obtain the desired guarantee for the existence of EF.
We now show that such a $\tau$ exists and can be computed in polynomial time.

We formulate a new family of linear programs, $LP_2(\tau)$ (again, parameterized by integer vectors), by relaxing the second set of constraints of $LP_1(\tau)$, i.e., we relax the requirement that for every agent the budget constraint holds with an equality.

\begin{definition}
    Given any integer vector $\tau = (\tau_1,\dots,\tau_n) \in \mathbb{Z}^n_+$ (with $||\tau||_\infty \leq m+2$), we define the following linear program, $LP_2(\tau)$ over the decision variables $\{z_{i,j} \in [0,1]\}_{i,j}$:
    \begin{enumerate}[(i)]
        \item $z_{i,j} \leq z_{i',j}$ for all $i,i' \in [n]$ and $c_j \in I_i(\tau)$
        \item $\sum_{j:c_j \in I_i(\tau) \cup E_i(\tau)} x_{i,j} s(c_j) \leq B_i$ for all $i \in [n]$
        \item $\sum_{i=1}^n z_{i,j} = 1$ for all $c_j \in I(\tau)$
        \item $z_{i,h} = 0$ for all $i \in [n]$ and $c_h \in C \cup \{c_{m+1}\} \setminus (I_i(\tau) \cup E_i(\tau))$
        \item $\sum_{i=1}^n z_{i,h} \leq 1$ for all $c_h \in C \cup \{c_{m+1}\} \setminus I(\tau)$
    \end{enumerate}
\end{definition}
Now, we prove an important lemma.
For each $k \in [n]$, we denote $e_k$ as the $k$-th standard basis vector in $\mathbb{R}^m$.

\begin{lemma} \label{lem:divisible_LP2feasibleek}
    Let $\tau \in \mathbb{Z}^n_+$ be an integer vector with $||z||_\infty \leq m+1$.
    If the linear program $LP_2(\tau)$ is feasible and $LP_1(\tau)$ is infeasible, then there exists an agent $k \in [n]$ such that $LP_2(\tau + e_k)$ is feasible.
 \end{lemma}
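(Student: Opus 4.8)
The plan is to fix a feasible solution of $LP_2(\tau)$ that is extremal in a suitable sense, locate an agent $k$ with budget slack, and show that this solution (possibly after a local repair) already satisfies every constraint of $LP_2(\tau+e_k)$. The starting observation is that \emph{no} feasible solution of $LP_2(\tau)$ can have all its budget constraints tight: such a solution would satisfy constraints (i),(iii),(iv),(v) of $LP_1(\tau)$ verbatim, and meeting the relaxed constraint (ii) of $LP_2$ with equality is exactly constraint (ii) of $LP_1(\tau)$, so it would be feasible for $LP_1(\tau)$, contradicting the hypothesis. Hence any feasible solution of $LP_2(\tau)$ has some agent with strict slack; in particular, if I pick $z^*$ maximizing the total size $\sum_{i\in[n]}\sum_j z^*_{i,j}\,s(c_j)$ allocated to the agents (breaking ties, say, by minimizing the total size of edge chores that agents retain of their own), then there is an agent $k\in[n]$ with $\sum_j z^*_{k,j}\,s(c_j)<B_k$. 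Since $\tau_k\le m+1$, the vector $\tau+e_k$ still has $\|\tau+e_k\|_\infty\le m+2$, so $LP_2(\tau+e_k)$ is well-defined.

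Next I would compare the two programs: passing from $\tau$ to $\tau+e_k$ only moves the chore $c_e:=\pi_k(\tau_k)$ from $E_k$ into $I_k$ and (if it exists) adds $\pi_k(\tau_k+1)$ to $E_k$, leaving the constraints of all other agents and the rest of $c_e$'s constraints unchanged; the second change is a relaxation. Thus $z^*$ certifies $LP_2(\tau+e_k)$ as soon as two conditions hold for $c_e$: \textbf{(a)} $\sum_{i}z^*_{i,e}=1$ (constraint (iii), since $c_e$ is now internal), and \textbf{(b)} $z^*_{k,e}\le z^*_{i',e}$ for every $i'\in[n]$ (constraint (i)). Condition (a) follows from extremality: if $\sum_i z^*_{i,e}<1$ then $c_e\notin I(\tau)$ (otherwise constraint (iii) of $LP_2(\tau)$ would force the sum to $1$), so $c_e$ is unconstrained from below by constraint (i); since $c_e\in E_k(\tau)$ and $k$ has slack, one could increase $z^*_{k,e}$ by a small amount while keeping all five families satisfied, contradicting maximality of the total allocated size. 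So the whole question reduces to condition (b).

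For condition (b), if the monotonicity already holds we are done; otherwise I would try to re-route $z^*$ — decreasing $z^*_{k,e}$ and shifting the displaced mass of $c_e$ to agents who are allowed to hold it and have room, using the slack at $k$ and the newly available edge chore $\pi_k(\tau_k+1)$ as a sink for chores pushed off of $k$ — so as to drive $k$'s share of $c_e$ below that of every other agent, or else argue that if no slack agent admits such a repair then $LP_2(\tau)$ itself would be infeasible. This last point is the main obstacle: when some agent $i'$ cannot hold $c_e$ at all, constraint (i) of $LP_2(\tau+e_k)$ forces $z_{k,e}=0$, so the remaining agents must absorb all of $c_e$, and one needs a global exchange/augmenting-path argument — carefully choosing which slack agent to call $k$ and tracking how slack propagates along chains of agents that are forced to swap chores among themselves — to show that either such an absorbing reallocation exists or the hypotheses fail. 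The remainder of the proof is routine bookkeeping against the constraint families of $LP_2$.
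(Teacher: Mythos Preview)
Your reduction to conditions (a) and (b) is correct, and your argument for (a) is fine. The genuine gap is condition (b): you commit to taking $k$ to be \emph{the} slack agent, and then everything hinges on showing $z^*_{k,e}\le z^*_{i',e}$ for all $i'$. You recognize this is the hard part and gesture at an augmenting-path repair, but you do not carry it out, and the objective you chose (total allocated size) gives you no leverage here---shifting mass of $c_e$ between agents leaves total size unchanged.

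The paper sidesteps this obstacle entirely by decoupling ``the slack agent'' from ``the agent $k$ we increment''. It maximizes a different objective, the sum of edge-chore fractions $\sum_i z_{i,\pi_i(\tau_i)}$, and then argues as follows. Let $i'$ be any slack agent and $c_{\hat j}=\pi_{i'}(\tau_{i'})$; as you observed, extremality forces $\sum_i z^*_{i,\hat j}=1$. Now let $F_E$ be the agents for whom $c_{\hat j}$ is the edge chore and $M$ the agents holding the minimum positive fraction of $c_{\hat j}$. The key claim is $F_E\cap M\neq\emptyset$: if not, then $M$ consists only of agents for whom $c_{\hat j}$ is internal, and one can shave $\varepsilon$ off each of their shares of $c_{\hat j}$ and give it to $i'$; this preserves all constraints of $LP_2(\tau)$ but strictly increases the edge-chore objective (since $i'\in F_E$), contradicting optimality of $z^*$. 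Taking any $k\in F_E\cap M$ then makes condition~(b) automatic---$k$ already holds a minimal share of $c_{\hat j}$---and $z^*$ itself is feasible for $LP_2(\tau+e_k)$ with no re-routing. Note that this $k$ need not have any budget slack; the slack agent $i'$ is used only to locate $c_{\hat j}$ and drive the contradiction. Your proposal, by insisting $k$ be the slack agent and choosing an objective that is blind to which agent holds the edge chore, forces you into the repair argument you could not complete.
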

 \begin{proof}
     For vector $\tau \in \mathbb{Z}^n_+$, given that the program $LP_2(\tau)$ is feasible, we consider among its feasible solutions one that maximizes $\sum_{i=1}^n z_{i, \pi_i(\tau_i)}$. Write $z^* = \{z^*_{i,j} \}_{i,j}$ as a feasible solution that maximizes the fractional assignment across the edge chores $\{\pi_i(\tau_i)\}_i$ among all feasible solutions of $LP_2(\tau)$.

     Consider an agent $i'$ for whom the budget constraint holds with a strict inequality, i.e., 
     \begin{equation}\label{inequality:primei}
         \sum_{j: c_j \in I_{i'}(\tau) \cup E_{i'}(\tau)} z^*_{i',j} s(c_j) < B_{i'}.
     \end{equation}
     Note that such an agent $i'$ necessarily exists, otherwise $LP_1(\tau)$ would be feasible.

     Let $c_{\widehat{j}} = \pi_{i'}(\tau_{i'})$ be the edge chore of agent $i'$, i.e., $E_{i'}(\tau) = \{c_{\widehat{j}}\}$.
     First we claim that
     \begin{equation} \label{eqn:divisible_=1}
         \sum_{i=1}^n z^*_{i,\widehat{j}} = 1.
     \end{equation}
     Suppose for a contradiction that $\sum_{i=1}^n z^*_{i,\widehat{j}} < 1$.
     This strict inequality and the feasibility of $z^*$ implies that $c_{\widehat{j}}$ is not an internal chore for any agent, i.e., $c_{\widehat{j}} \notin I(\tau)$ (by condition (iii) of the LP).
     Hence, $c_{\widehat{j}}$ does not participate in (i) (the first set of constraints) in $LP_2(\tau)$.

     In such a case, we can increment $z^*_{i',\widehat{j}}$ by a sufficiently small $\epsilon > 0$, while maintaining feasibility and, in particular, satisfying the budget constraint of agent $i'$.
     Such an update, however, increases the objective function $\sum_{i=1}^n z_{i,\pi_i(\tau_i)}$ and hence contradicts the fact that $z^*$ maximizes this objective function.
     Therefore, (\ref{eqn:divisible_=1}) holds for chore $c_{\widehat{j}}$.

     Now, let $N$ denote the set of agents who have received a nonzero fraction of chore $c_{\widehat{j}}$ under $z^*$, i.e., $N = \{i \in [n] : z^*_{i,\widehat{j}}  > 0 \}$.
     Equation (\ref{eqn:divisible_=1}) ensures that $N \neq \emptyset$.

     Let $M \subseteq N$ denote the set of agents in $N$ who have received the minimum fraction of $c_{\widehat{j}}$, i.e., $M = \{i \in N: z^*_{i,\widehat{j}}=w^*\}$ where $w^*= \min_{i \in N: z^*_{i,\widehat{j}}>0} z^*_{i,\widehat{j}}$. 
     Also let $F_E$ and $F_I$ be the set of agents for whom $c_{\widehat{j}}$ is an edge chore and internal chore respectively:
     \begin{equation*}
         F_E := \{ i\in [n] : c_{\widehat{j}} \in E_i(\tau)\}, F_I := \{i\in [n]:c_{\widehat{j}} \in I_i(\tau)\}.
     \end{equation*}
     Indeed, the above-identified agent $i'$ is necessarily contained in $F_E$.
     Additionally, via (iv) in $LP_2(\tau)$, for every agent $i \in [n]$, either $c_{\widehat{j}} \in E_i(\tau)$ or $c_{\widehat{j}} \in I_i(\tau)$.
     Hence, $N \subseteq F_I \cup F_E$.

     We now want to show $F_E$ and $M$ intersects.

     \begin{claim}
         $F_E \cap M \neq \emptyset$.
     \end{claim}
     \begin{proof}
         Suppose for a contradiction $F_E \cap M = \emptyset$.
         Then, since $F_I \subseteq M$ by (i) in $LP_2(\tau)$ and $M \subseteq N \subseteq F_I \cup F_E$, we obtain $M = F_I$. This equality then shows that $F_I \neq \emptyset$.
        Recall that for agent $i'$, under $z^*$, the budget constraint was not tight. 
        By \eqref{inequality:primei} and by the fact that $w^*>0$, we can select a sufficiently small $\epsilon$ with 
        $$
        0< \epsilon \leq \min \{B_{i'}-\sum_{j: c_j \in I_{i'}(\tau) \cup E_{i'}(\tau)} z^*_{i',j} s(c_j), w^* \}.
        $$
        Since each agent in $F_I$ receives a positive fraction of $c_{\widehat{j}}$, we can update $z^*$ to obtain another feasible $z'$ as follows:
         \begin{itemize}
             \item set $z'_{p,\widehat{j}} = z^*_{p,\widehat{j}} - \frac{\epsilon}{|F_I|}$, for all agents $p \in F_I$; and $z'_{i',\widehat{j}} = z^*_{i',\widehat{j}} + \epsilon$
             \item For all other agent-chore pairs, the fractional assignment in $z'$ is the same as in $z^*$.
         \end{itemize}
         We now show that the solution $z'$ is feasible for $LP_2(\tau)$. 

         \begin{itemize}
         \item The (i) constraints in $LP_2(\tau)$ are maintained since $M=F_I$ and we have decreased the fractions of $c_{\widehat{j}}$ allocated the agents in $F_I$. 
         \item Clearly, increasing $z^*_{i',\widehat{j}}$ by $\epsilon$ (as in $z'$) maintains feasibility with respect to the (ii) constraint with respect to $i'$ and every other agent.

         \item Chore $c_{\widehat{j}}$ continues to be fully assigned among the agents, since we have cumulatively reduced the fractional assignments of $c_{\widehat{j}}$ among the agents in $F_I$ by $\epsilon$, and increased the assignment to $i'$ by $\epsilon$.
         Therefore, the (iii) constraints continue to hold for $z'$.

         \item Furthermore, the (iv) and (v) constraints in $LP_2(\tau)$ also hold for $z'$.
         \end{itemize}
         Therefore, $z'$ satisfies all the constraints in $LP_2(\tau)$.
         However, recall that $\pi_{i'}(\tau_{i'}) = c_{\widehat{j}}$, and hence $z'$ strictly increases the fractional assignment across the edge chores $\{\pi_i(\tau_i)\}_i$ than 
         $z^*$. This contradicts the optimality of $z^*$. Hence, by way of contradiction, we obtain the stated claim, $F_E \cap M \neq \emptyset$.
         $\hfill \square$
    \end{proof}
    We will now complete the proof using this claim.

    In particular, we will show that, for any agent $k \in F_E \cap M$, the program $LP_2(\tau + e_k)$ is feasible; in fact, $z^*$ itself is a feasible solution for $LP_2(\tau + e_k)$.

    Fix any agent $k \in F_E \cap M$.
    Since $k \in F_E$, we have $I_k(\tau + e_k) = I_k(\tau) \cup \{c_{\widehat{j}} \}$.
    Additionally, $k \in M = \argmin_i z^*_{i,\widehat{j}}$ and hence, solution $z^*$ satisfies the (i) constraints in $LP_2(\tau + e_k)$, for agent $k$ and other relevant agents as well.
    The (ii) constraints in $LP_2(\tau + e_k)$ are the same as in $LP_2(\tau)$, hence $z^*$ continues to be feasible with respect to these budget constraints.

    In addition, equation (\ref{eqn:divisible_=1}) enforces the (iii) constraints for chores $c_{\widehat{j}}$.
    The (iii) constraints hold for all other chores in $I(\tau + e_k)$---this follows from the fact that $I(\tau + e_k) = I(\tau) \cup \{ c_{\widehat{j}}\}$ and $z^*$ is a feasible solution with respect to $I(\tau)$.

    Finally, using the fact that $I_i(\tau + e_k) \cup E_i(\tau + e_k) \supseteq I_i(\tau) \cup E_i(\tau)$ for all agents $i$, we obtain that the (iv) and (v) constraints are satisfied by $z^*$ in $LP_2(\tau + e_k)$ as well.
    Overall, we obtain that $LP_2(\tau + e_k)$ is feasible and the lemma stands proved.
    $\hfill \square$
 \end{proof}
The following lemma shows that $LP_2(\cdot)$ cannot be incessantly feasible.
\begin{lemma} \label{lem:divisible_LP2infeasiblem+2}
    For any integer vector $\tau \in \mathbb{Z}^n_+$ with $||\tau||_\infty = m+2$, the program $LP_2(\tau)$ is infeasible.
\end{lemma}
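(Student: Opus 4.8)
The plan is to exploit the oversized fictional chore $c_{m+1}$, which was added with size $s(c_{m+1}) = 2n\cdot\max_{i'\in[n]}B_{i'}$: its size alone exceeds the total budget $\sum_{i'}B_{i'}$, yet feasibility of $LP_2(\tau)$ with a component of $\tau$ equal to $m+2$ would force this chore to be entirely distributed among the agents. First I would use the hypothesis $\|\tau\|_\infty = m+2$ to pick an agent $k\in[n]$ with $\tau_k = m+2$. By the definition of internal chores this gives $I_k(\tau) = \{\pi_k(1),\dots,\pi_k(\tau_k-1)\} = \{\pi_k(1),\dots,\pi_k(m+1)\}$, and since $\pi_k$ is a permutation of $[m+1]$ this set is all of $[m+1]$. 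Hence every chore (including $c_{m+1}$) is internal for $k$, so $I(\tau) = \bigcup_{i=1}^n I_i(\tau) = [m+1]$.

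Next I would assume, towards a contradiction, that $LP_2(\tau)$ has a feasible solution $\{z_{i,j}\}_{i,j}$ and double-count chore sizes. Constraint (iii) applied to $I(\tau)=[m+1]$ says $\sum_{i=1}^n z_{i,j}=1$ for every $c_j$, so
\[
\sum_{j=1}^{m+1} s(c_j) \;=\; \sum_{j=1}^{m+1} s(c_j)\sum_{i=1}^n z_{i,j} \;=\; \sum_{i=1}^n \sum_{j=1}^{m+1} z_{i,j}\,s(c_j).
\]
For each agent $i$, constraint (iv) forces $z_{i,j}=0$ whenever $c_j\notin I_i(\tau)\cup E_i(\tau)$, so $\sum_{j=1}^{m+1} z_{i,j}\,s(c_j) = \sum_{j:\,c_j\in I_i(\tau)\cup E_i(\tau)} z_{i,j}\,s(c_j)\le B_i$ by constraint (ii). Combining, $\sum_{j=1}^{m+1} s(c_j)\le \sum_{i=1}^n B_i$. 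But $\sum_{j=1}^{m+1} s(c_j)\ge s(c_{m+1}) = 2n\max_{i'}B_{i'}\ge 2\sum_{i'}B_{i'} > \sum_{i'}B_{i'}$, where the middle inequality uses $\sum_{i'}B_{i'}\le n\max_{i'}B_{i'}$ and the last uses $\sum_{i'}B_{i'}>0$. This is a contradiction, so $LP_2(\tau)$ is infeasible.

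There is no substantial obstacle here; the argument is short once the setup is unwound. The only points needing care are (a) recognizing that a single coordinate of $\tau$ equal to $m+2$ already puts \emph{all} chores, and in particular the giant chore $c_{m+1}$, into $I(\tau)$, which is what makes constraint (iii) bite; and (b) invoking constraint (iv) to pin down $\mathrm{supp}(z_i)$ so that constraint (ii) genuinely bounds $\sum_{j} z_{i,j}s(c_j)$ by $B_i$. After that, the choice $s(c_{m+1}) = 2n\max_{i'}B_{i'}$ is precisely tuned so that this one chore cannot fit inside the agents' combined budgets.
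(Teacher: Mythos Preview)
Your proof is correct and follows essentially the same approach as the paper: both identify that a coordinate $\tau_k=m+2$ forces the oversized chore $c_{m+1}$ into $I(\tau)$, so constraint~(iii) demands it be fully distributed, which is incompatible with the budget constraints~(ii). Your version is more explicit---you spell out the double-counting and invoke constraint~(iv) to justify that $\sum_j z_{i,j}s(c_j)\le B_i$---whereas the paper leaves these details implicit, but the underlying argument is the same.
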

\begin{proof}
    Given that, for the given vector $\mathbb{Z}^n_+$, one of the components is equal to $m+2$, we have that the chore $c_{m+1} \in I(\tau)$; see definition of internal and edge chores.

    Now, for $LP_2(\tau)$ to be feasible, the chore $c_{m+1}$ must be fully assigned among the agents; see (iii) constraints in $LP_2(\tau)$.

    However, since $s(c_{m+1}) = 2n \cdot \max_{i'} B_{i'}$ for all agents $i \in [n]$, such an assignment is not possible while maintaining the budget constraints (ii) of the agents.
    The lemma stands proved.
    $\hfill \square$
\end{proof}

\begin{algorithm}
\caption{Computing EF allocations for divisible chores}
\label{alg:divisible_ef}
\SetInd{0.8em}{0.3em}
Initialize an $n$-dimensional integer vector 
$\tau \leftarrow (1,1,\dots,1)$\;
\While{$LP_1(\tau)$ is infeasible}{
Find $k \in [n]$ such that $LP_2(\tau + e_k)$ is feasible (such an agent always exists by Lemma \ref{lem:divisible_LP2feasibleek})\;
Update $\tau \leftarrow \tau + e_k$\;
}
\Return allocation $\mathcal{X} = (X_1,\dots,X_n)$ corresponding to a feasible solution of $LP_1(\tau)$\;
\end{algorithm}

We now proceed to prove that Algorithm \ref{alg:divisible_ef} returns an EF allocation.

Our approach is to prove that constraints (i) to (v) of the linear program $LP_2(\tau)$ is satisfied at every iteration.
For the initial vector $\tau = (1,\dots,1)$, since for all agents $i \in [n]$, $I_i(\tau) = \emptyset$.
By selecting the all-zero solution, all constraints are trivially satisfied.

Thus, at the beginning of each iteration of the \textbf{while} loop, $LP_2(\tau)$ is feasible for the vector $\tau$.
Now, if for the current $\tau$, $LP_1(\tau)$ is feasible, then we return an EF allocation (the existence is guaranteed by Lemma \ref{prop:divisible_efhc} and Proposition \ref{prop:divisible_feasibleDD}).
Otherwise, if $LP_1(\tau)$ is infeasible, then the loop executes and we update $\tau$ to $\tau + e_k$.
By Lemma \ref{lem:divisible_LP2feasibleek}, we are guaranteed to be able to find a $k \in [n]$ such that $LP_2(\tau + e_k)$ is feasible.

Finally, note that the \textbf{while} loop cannot iterate indefinitely---after at most $n\cdot(m+1)$ iterations, $\tau$ will increment to satisfy $||\tau||_\infty = m+2$.
However, by Lemma \ref{lem:divisible_LP2infeasiblem+2}, we know that for such a $\tau$, the program $LP_2(\tau)$ is infeasible.
Thus, the \textbf{while} loop  will terminate in $\mathcal{O}(nm)$ iterations and the algorithm returns an EF allocation.

Since each iteration of the \textbf{while} loop involves solving a polynomially-large number of linear programs, the runtime of the algorithm is polynomially-bounded, and we obtain our result.

\end{document}